\theoremstyle{definition}
\newtheorem{theorem}{Theorem}
\newtheorem*{theorem*}{Theorem}
\newtheorem{claim}{Claim}
\newtheorem{corollary}[theorem]{Corollary}
\newtheorem*{corollary*}{Corollary}
\newtheorem*{definition*}{Definition}
\newtheorem{lemma}{Lemma}
\newtheorem{proposition}[theorem]{Proposition}
\begin{document}

\title{Bunching and Taxing Multidimensional Skills\thanks{%
\baselineskip12.5pt We thank Hector Chade, Paolo Martellini, Chris Moser, Emmanuel Saez, Florian Scheuer, Andrew Shephard, Chris Taber and especially Jean-Charles Rochet for detailed insightful comments.}}
\author{ Job Boerma \\ 
{\small { \hspace{2 cm} University of Wisconsin-Madison  \hspace{2 cm} } }\\ \vspace{-0.5 cm}
\and  Aleh Tsyvinski \\
{\small { \hspace{4 cm}  Yale University \hspace{4 cm} } }\\ \vspace{-0.5 cm}
\and  Alexander P. Zimin \\
{\small { \hspace{4 cm}  MIT \hspace{4 cm} } }\\
}
\date{\vspace{0.2 cm} March 2025 \vspace{0.2 cm}}
\maketitle

\vspace{-0.4cm}
\begin{abstract}
\fontsize{12.0pt}{18.0pt} \selectfont

\noindent We characterize optimal policy in a multidimensional nonlinear
taxation model with bunching. We develop an empirically relevant model
with cognitive and manual skills, firm heterogeneity, and labor market
sorting. We first derive two  conditions for the optimality of taxes that take
into account bunching. The first condition $-$ a stochastic dominance optimal tax condition $-$
shows that at the optimum the schedule of benefits
dominates the schedule of distortions in terms of second-order
stochastic dominance. The second condition $-$ a global optimal tax formula $-$ provides a representation
that balances the local costs and benefits of optimal taxation while
explicitly accounting for global incentive constraints. Second, we
use Legendre transformations to represent our problem as a linear program.
This linearization allows us to solve the model quantitatively and
to precisely characterize bunching. At
an optimum, 10 percent of workers is bunched.
We introduce two notions of bunching -- blunt bunching and targeted
bunching. Blunt bunching constitutes 30 percent of all bunching, occurs
at the lowest regions of cognitive and manual skills, and lumps the
allocations of these workers resulting in a significant distortion.
Targeted bunching constitutes 70 percent of all bunching and recognizes
the workers' comparative advantage. The planner separates workers
on their dominant skill and bunches them on their weaker skill, thus
mitigating distortions along the dominant skill dimension. 


\end{abstract}

\renewcommand{\thefootnote}{\fnsymbol{footnote}} \renewcommand{%
\thefootnote}{\arabic{footnote}} 
\thispagestyle{empty} \setcounter{page}{0}

\fontsize{11.5pt}{21.0pt} \selectfont


\newpage

\section{Introduction}

We make significant progress in analyzing multidimensional optimal
nonlinear income taxation problems with bunching. This is one of the
important open questions in the theory and practice of optimal taxation.
Our paper is the first to solve for optimal multidimensional taxation with bunching
in an empirically relevant model of wage determination.


The key difficulty of analyzing multidimensional optimal tax
problems lies in characterizing regions of bunching. Bunching occurs
when workers of different types receive identical allocations. \citet{Kleven:2009}
establish the importance of bunching in a model of couples taxation
in which one partner makes only an extensive margin labor supply choice.
Little is known about optimal taxation and the nature of bunching
in more general settings. At the same time, a large literature in
labor economics emphasizes the importance of multidimensional skills
and labor market sorting in determining wage dispersion.

We develop an empirically relevant model that incorporates three important
elements of wage dispersion. Workers differ in both manual and in
cognitive skills, firms differ in productivity, and workers' output
depends on the firms they work for and coworkers they work with. We
characterize equilibrium for the positive model in closed form and use this closed-form solution to identify the underlying
multidimensional skill distribution. 

The characterization of optimal taxes in our model is based on two
main theoretical insights. First, we derive two conditions for optimal
taxes that take into account bunching. The first condition $-$ a stochastic
dominance optimal taxation condition $-$ shows that at the optimum the
benefits and the costs are not necessarily equated for each skill
level but rather the entire schedule of benefits dominates the entire
schedule of distortions in terms of second-order stochastic dominance.
The second condition $-$ a global optimal tax formula $-$ provides
a representation that balances the local costs and benefits of optimal
taxation while explicitly accounting for global incentive constraints.
These optimal tax conditions generalize the classic unidimensional
optimal taxation conditions in \citet{Mirrlees:1971}, \citet{Diamond:1998}, and \citet{Saez:2001} to a multidimensional optimal taxation problem, accounting for global
incentive constraints and bunching. Second, we use Legendre transforms
to represent our problem as a linear program. Legendre transforms
are a powerful tool from convex analysis that allow to represent a
convex function by a family of its tangent lines. This linearization
enables us to fully solve the model quantitatively and, in particular,
precisely characterize the patterns of bunching.

We find that 10 percent of all workers are bunched at the optimal
allocation for our empirically estimated economy. We show that workers
are bunched with other workers who are better in one dimension but
worse in the other dimension. Moreover, a sizable portion of bunching
is nonlocal.

We introduce two notions of bunching: blunt bunching and targeted
bunching. Blunt bunching occurs at the lowest regions of cognitive
and manual skills. The planner does not distinguish workers' cognitive skills
from their manual skills and lumps their allocations together by an
index of their skills. This is a blunt tool for providing incentives
as it creates significant distortions leading to high marginal taxes $-$ 30 percent of all bunching is blunt. Targeted bunching recognizes
the workers' comparative advantage. The planner separates workers
on their dominant skill while bunching them on their weaker skill $-$ 70 percent of all bunching is targeted.


We now discuss our model and results in more detail. We consider the optimal policy problem which we formulate as a mechanism
design problem \citep{Mirrlees:1971}. The planner chooses consumption
allocations, allocations of cognitive and manual tasks, and the assignment
of workers to coworkers and firms subject to incentive constraints
that workers truthfully report their type. The primary difficulty
in analyzing multidimensional optimal taxation problems is in characterizing
regions of bunching. \citet{Rochet:1998} shows that bunching is generic
in the multidimensional multiproduct monopolist problem that is closely
related to the multidimensional optimal taxation problem. An important
paper by \citet{Kleven:2009} solves a multidimensional optimal taxation
model under the restriction that one of the allocations is binary
and argues for the importance of bunching in their setup. In our model,
both the allocations of the manual and cognitive tasks are instead
unrestricted. For this general setup, no characterization of optimal
tax policy is known.

Our first theoretical result is the derivation of two optimality conditions
that take into account the regions of bunching. First, we show that,
at the optimum, the utility and revenue benefits from the entire schedule
of taxes second-order stochastically dominate the costs of distortions
they induce. Without bunching, this tradeoff is made locally and leads
the planner to equalize the marginal benefits and costs of taxes as
in a unidimensional problem. We show that, when there is bunching,
the planner no longer equates the benefits and the costs of the taxes
at each worker skill level but instead requires that the entire schedule
of the benefits of taxes second-order stochastically dominate the
entire schedule of distortions, making this tradeoff nonlocal. Second,
we derive an optimal tax formula for the multidimensional taxation
problem that holds with equality. We show that the local tradeoffs
have to be augmented with an additional term that accounts for global
incentive constraints. Specifically, the additional term modifies
the social welfare weights through a convexity correction.\footnote{When there is no bunching, a classic pointwise optimality condition
holds that can be rewritten in terms of a multidimensional ABC taxation
formula similar to the unidimensional tax formula in \citet{Diamond:1998}
and \citet{Saez:2001}. The absence of bunching is equivalent to the
indirect utility function being strongly convex and the first-order
approach being valid. \citet[p. 23]{Kleven:2006} derive such a multidimensional
ABC formula without bunching.} Both formulas are new to the literature on optimal taxation.


Our next main theoretical insight is to transform the planner problem
to a linear program. This is a key step that enables computation
of the bunching regions. Legendre transformations linearize a convex
function by replacing it with the upper envelope of all its tangent
lines. The Legendre transform allows us to translate the multidimensional
optimal taxation problem into a linear program that can be  analyzed
quantitatively with high precision.  Numerical precision is not merely
a technical curiosity but is essential to identify the regions and
nature of bunching. 


A parallel starting point of our analysis is a characterization of
the equilibrium in a positive economy. In our positive economy, workers
choose the amount of cognitive and manual tasks to deliver, coworkers
to work with, and firms to work for. This problem integrates endogenous labor supply decisions with the assignment
of multidimensional workers to teams and to heterogenous firms. Our first result for the positive
economy is that workers sort with identical coworkers (self-sorting)
and that better teams work on more valuable projects (positive sorting).
The resulting assignment is qualitatively identical to the assignment
under the optimal policy problem but the exact assignment differs
because of differences in labor supply due to incentive constraints.

We use the dual formulation of the equilibrium assignment problem
to characterize equilibrium wages. We show that wages are a convex
function of an index of the worker's task inputs rather than a function of
each task individually. We further establish an exact mapping between
curvature in the wage schedule and the distribution of firm productivity.
By choosing a parametric convex function, we can then infer a distribution
of firm projects such that this convex function is the equilibrium
wage schedule.

Having developed the theory to characterize both positive and optimal
allocations, we bring our theory to the data. In order to quantify
cognitive and manual skill heterogeneity in the U.S. population, we
use earnings information for all workers between 2000 and 2019 in
the American Community Survey (ACS). We combine the earnings data
from the ACS with data from O{*}NET on the manual and cognitive task
intensity for every occupation (\citet{Acemoglu:2011}).

We use our closed-form characterization for wages in the positive
economy to pointwise identify the level of manual and cognitive tasks
completed by each worker. We then use the worker's optimality condition
for each task together with these inferred task levels to identify
levels of cognitive and manual skill that deliver each worker's wage
and relative task intensity in the cross-sectional data as a model
outcome. For the unidimensional taxation problem, an important contribution
of \citet{Saez:2001} was to infer the underlying productivity distribution
using earnings data which then becomes a central input to determine optimal taxes. Our identification generalizes these findings and delivers
the distribution of manual and cognitive skills in a model accounting
for multiple drivers of earnings (multidimensional skills, coworkers,
firms). Our identification resembles \citet{BK2:2020,BK1:2021} who
use explicit solutions for home production models to identify productivity
at home and market productivity using data on consumption, home and market hours.

We next turn to the quantitative characterization of the optimum using
the inferred skill distribution. In order to understand our quantitative
characterization, we first describe a benchmark without incentive
constraints. Due to separability of preferences and technology over
tasks, the efforts in a given task depend exclusively on the worker’s
skills in this task and, hence, there is no cross-dependence between
tasks. Trivially, there is no bunching and there are no distortions.

In sharp contrast to the benchmark, optimal task intensity in our
model depends positively on both of the worker's skills. Workers with
high manual skills also deliver high levels of cognitive tasks.
This codependence is lower at the top end of the skill distribution.
In the limit, workers face zero distortion in their manual task allocation
at the top of the manual skill distribution, meaning there is no dependence
of their manual task intensity on their cognitive skills as in the
benchmark. At the lower end of the skill distribution, the distortion
from this positive codependence is high.


A central part of our contribution is to characterize patterns of
bunching. We first show that 10.4 percent of all workers is
bunched at the optimum. Workers bunch with other workers both near
and afar. Moreover, workers exclusively bunch with workers that are
better in one skill dimension but worse in another. Workers do not
bunch with workers over whom they have an absolute advantage nor with
workers who have an absolute advantage over them.

We introduce two types of bunching: blunt bunching and targeted bunching.
In the blunt bunching region, the planner requires all workers with
the same effective skill index to deliver identical cognitive and
manual tasks, and thus bunches workers that vastly differ in their
skills. This is a blunt way to provide incentives and comes at a cost
of significant output distortions. In the targeted bunching region,
the planner recognizes the increasing efficiency costs of distorting
higher skill workers. When workers have a higher relative level of,
for example, manual skills they are separated along this dimension
but are bunched on their relatively low cognitive skill. The planner
thus separates according to workers' comparative advantage and bunches
workers by comparative disadvantage. In contrast to the blunt bunching
region, targeted bunching occurs with workers who are more similar
in skills: not too far away yet still nonlocally. In the region without
bunching, the planner distorts allocations as in the unidimensional
case.

We summarize the bunching patterns by describing the tax wedges they
induce. In particular, we find that the level of tax wedges is high
in the bunching regions. The tax wedges are particularly high for
low skill workers who are bluntly bunched and are also high along
the dimension of comparative disadvantage for the more skilled workers
in the targeted bunching region. We further show that the optimum is
implementable by a tax function that is only a function of earnings
and line of work.

\vspace{0.4cm}
\noindent \textbf{Literature}. We now describe related literature. \citet{Kleven:2009}
is the first paper that analyzed optimal multidimensional taxation
with bunching. They model a binary labor supply choice for the secondary
earner along with continuous labor supply choice for the primary earner.
\citet{Judd:2017} consider numerically some cases of optimal taxation
with multiple dimensions of heterogeneity (up to five dimensions of
heterogeneity with five individual types) and find that some non-local
constraints bind. The most ambitious attempt to date to solve a multidimensional
policy problem with bunching is \citet{Moser:2019} for a model where
workers are heterogeneous in two dimensions but only one dimension
of heterogeneity enters the planner’s objective. Their key ingredient
is paternalistic preferences, which delivers bunching due to disagreement
between the planner and workers. In their environment bunching is
optimal and, in fact, an essential feature even for the unidimensional
problem. The fact that the planner cares only about one dimension
of heterogeneity significantly reduces the complexity of deviations
patterns. They characterize the model theoretically with the continuous
skill distributions and also compute the model with six impatient
types in one dimension and a large number of types in the second dimension.
In our paper and, more broadly, for multidimensional optimal nonlinear
taxation problems the planner cares about heterogeneity in several
dimensions and, hence, the deviations and bunching patterns are significantly
more complicated and nuanced, especially, when a large number of types
within each skill dimension is analyzed. \citet{Heathcote:2021b}
comprehensively analyze computational performance of different algorithms
for unidimensional optimal taxation. They show that the number of
skill types is not just a technical detail but has an important impact
on policy prescriptions. In our settings, the need for fine skill
differentiation in both dimensions of heterogeneity is additionally
important to recover the nuanced patterns of bunching and deviating,
especially in the regions of targeted bunching. More broadly, there
is a vast literature on multidimensional mechanisms (e.g., \citet{McAfee:1988},
\citet{Armstrong:1996} and \citet{Rochet:1998}) that also emphasizes
the complexity, as well as the central role, of bunching for the optimal
solutions.

An important strand of papers in \citet{Scheuer:2014},
\citet{Rothschild:2013,Rothschild:2014,Rothschild:2016} analyze nonlinear
optimal taxation with multidimensional heterogeneity. These papers
achieve tractability by transforming the multidimensional problem
into a unidimensional screening problem with an endogenous wage distribution.
Moreover, \citet{Rothschild:2014,Rothschild:2016} in the multidimensional
case and \citet{Scheuer:2017} also emphasize the importance of labor
market sorting. \citet{Lehmann:2021} and \citet{Golosov:2022} use
a first-order approach to theoretically and numerically study multidimensional
optimal taxation when there is no bunching.

A complementary approach is to analyze optimal policy in economies
with multidimensional heterogeneity by restricting taxes to parametric
families. The most comprehensive recent analysis using this approach
is \citet{Blundell:2012} on optimal taxation of low-income families
and \citet{Gayle:2019} on optimal taxation of couples. Notable papers
that use such a parametric approach in a variety of other areas of
optimal taxes are, for example, \citet{Benabou:2002}, \citet{Conesa:2009},
\citet{HSV:2017}. \citet{Heathcote:2021a} synthesize the Mirrleesian
approach and the parametric approach to optimal taxation.

Our positive wage determination model relates to a growing literature
in labor economics that adopts a task approach to understand the contribution
of multidimensional skills to labor market outcomes. Recent prominent
examples in this area include \citet{Yamaguchi:2012}, \citet{Sanders:2012},
\citet{Lindenlaub:2017}, \citet{Deming:2017}, \citet{Guvenen:2020},
\citet{Lise:2020}, \citet{Roys:2022} and \citet{Lindenlaub:2020}.
Differently from these papers, we combine multidimensional skill heterogeneity
with sorting into worker teams and sorting with heterogeneous firms.


\section{Environment}

We consider an economy with two-dimensional worker skill heterogeneity and heterogeneous firms. Worker output depends not only on own cognitive and manual efforts but also on the coworker it works with and the firm it works for $-$ as emphasized in the modern literature on wage determination.

\vspace{0.4 cm}
\noindent \textbf{Workers}. The economy is populated by a measure two of workers who differ in two unobservable characteristics. Workers are endowed with a bundle of cognitive and manual skills $\alpha = (\alpha_c , \alpha_m) \in A$. The distribution over types $\alpha$ is denoted by $\Phi$. 


Workers have preferences over consumption $c$ and experience disutility from effort in cognitive and manual activities $\ell = (\ell_c,\ell_m)$:
\begin{equation}
U(c,\ell) = u ( c ) - v(\ell_c) - v(\ell_m) , \label{e:individual_prefs}
\end{equation}
where consumption $c$ and leisure $\ell$ are positive, utility is increasing and concave in consumption, and decreasing and strictly concave in cognitive and manual efforts. We further assume disutility has the form: 
\begin{equation}
v ( \ell ) = \kappa \ell^\rho , \label{e:disutility}
\end{equation}
with $\rho > 2, \kappa >0$.


\vspace{0.4 cm}
\noindent \textbf{Technology}. Cognitive and manual production input $(x_c,x_m) \in \mathbf{X}$ for a worker are the product of their skills and their efforts:
\begin{equation}
x_s = \alpha_s \ell_s , \label{e:worker_tech}
\end{equation}
for all tasks $s \in S = \{ c, m\}$. The worker's skill is given by $\alpha$, while their effort is given by $\ell$. 

The economy is populated by a unit mass of heterogeneous firms that produce a single output by organizing two workers into a team to work on a project $z$. Firm production is represented by $y$. We use a bilinear team technology together with a multiplicative firm technology:\footnote{The bilinear technology is also used in \citet{Lindenlaub:2017}, \citet{Lise:2020}, and \citet{Lindenlaub:2020}, among others.}
\begin{equation}
y (x_1,x_2,z) = z \left( x_{1c} x_{2c} + x_{1m} x_{2m} \right) . \label{e:firm_tech}
\end{equation}


\vspace{0.2 cm}
\noindent \textbf{Assignment}. An assignment pairs workers with coworkers and projects. Formally, an assignment is a probability measure $\gamma$ over workers, coworkers, and firms. Given a distribution of worker inputs $F_x$, a distribution of coworker inputs $F_x$, and a distribution of firms $F_z$, the set of feasible assignments is $\Gamma := \Gamma(F_x,F_x,F_z)$. This is the set of probability measures on the product space $\mathbf{X} \times \mathbf{X} \times Z$ such that the marginal distributions of $\gamma$ onto the set of workers and coworkers $\mathbf{X}$ are $F_x$, and the marginal distribution of $\gamma$ onto the set of firms $Z$ is $F_z$.  The assignment function captures the measure of workers that work together on a project as $\gamma(x_1,x_2,z)$. Given a feasible assignment total output is $\int y(x_1,x_2,z) \text{d} \gamma$.

\vspace{0.4 cm}
\noindent \textbf{Resources}. Aggregate output and external resources $R$ are allocated to workers to consume:
\begin{equation}
\int y(x_1,x_2,z) \text{d} \gamma + R \geq \int c(\alpha) \text{d} \Phi , \label{e:resources_original}
\end{equation} 
where $\int c(\alpha) \text{d} \Phi$ is aggregate consumption.

\section{Planning Problem} \label{s:planner}

In this section, we formulate a planner problem and characterize optimal sorting. The planner problem is to choose an allocation $\{ (c(\alpha),x_c(\alpha),x_m(\alpha)) \}_{\alpha \in A}$ and an assignment $\gamma \in \Gamma$ to minimize the resource cost of providing welfare $\mathcal{U}$:
\begin{equation}
\int c(\alpha) \text{d} \Phi  - \int y (x_{1},x_{2}, z ) \text{d} \gamma, \label{e:resources_original}
\end{equation}
subject to incentive constraints for all workers $\alpha \in A$, so that workers do not gain by misreporting types to be $\hat{\alpha}=(\hat{\alpha}_{c},\hat{\alpha}_{m})$: 
\begin{equation}
U (c(\alpha), x_c(\alpha) / \alpha_c, x_m(\alpha) / \alpha_m ) = \max\limits_{\hat{\alpha} \in A} \; U (c(\hat{\alpha}), x_c(\hat{\alpha}) / \alpha_c, x_m(\hat{\alpha}) / \alpha_m ) , \label{e:incentives_original}
\end{equation}
and the promise keeping condition:
\begin{equation}
\int U (c(\alpha), x_c(\alpha) / \alpha_c, x_m(\alpha) / \alpha_m ) \text{d} \Phi \geq \mathcal{U} \label{e:promise_keeping},
\end{equation}
which requires that aggregate welfare exceeds promised value $\mathcal{U}$.\footnote{The planning problem is equivalent to maximizing utilitarian welfare function subject to the resource constraint (\ref{e:resources_original}) and the incentive constraints (\ref{e:incentives_original}). There are no incentive constraints for firms since we assume firm output $y$ and inputs $x_1$ and $x_2$ are observed by the planner. Hence, the firm productivity $z$ is not private information.} 


\subsection{Assignment} \label{s:planner_assignment}

The planning problem contains an assignment problem. The planner pairs worker and coworker inputs with firm projects to maximize output given a distribution of worker inputs and firm projects. We show the planner optimally chooses a self-sorted assignment, meaning that workers are paired with identical coworkers, and also show that the planner pairs better teams with more valuable projects.\footnote{This assignment problem falls into a class of multimarginal, multidimensional optimal transportation problems. Multidimensional skill and the dependence of worker output on coworkers are central to recent advances in sorting models that utilize optimal transport theory to characterize
equilibrium \citep{Chiappori:2010,Dupuy:2014,Lindenlaub:2017,Chiappori:2017,Galichon:2021b}.}

The assignment problem embedded in the planning problem is to choose an assignment to maximize production given the distribution of workers tasks $F_x$ and the project distribution $F_z$:
\begin{equation}
\max\limits_{\gamma \in \Gamma}\; \int y (x_1, x_2, z ) \text{d} \gamma. \label{e:assignment}
\end{equation}
\noindent We construct an assignment $\gamma$ that self-sorts workers and coworkers to obtain a unidimensional distribution for team quality, or effective worker skill, $X = x_c^2 + x_m^2$. The assignment $\gamma$ combines self-sorting of workers with positive sorting between the effective worker skill $X$ and projects $z$.\footnote{Self-sorting is defined with respect to distribution of effective task inputs that the workers supply, not necessarily with respect to the underlying worker skills $\alpha$. In the presence of bunching, multiple workers $\alpha$ supply the same task levels $x$ and hence self-sorting of effective tasks may imply matching different $\alpha$.} This assignment $\gamma$ solves the assignment problem (\ref{e:assignment}).

\begin{proposition}{\textit{Optimal Sorting}.} \label{prop:p_assignment}
The planner assignment $\gamma_*$ is characterized by self-sorting of workers and by positive sorting between team quality and project values. 
\end{proposition}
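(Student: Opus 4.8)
The plan is to solve the embedded assignment problem \eqref{e:assignment} directly, by establishing an upper bound on total output that holds for \emph{every} feasible assignment and then exhibiting the self-matching/positive-sorting assignment $\gamma_*$ as a maximizer that attains this bound. The key structural observation is that the production technology factors as $y(x_1,x_2,z) = z \langle x_1, x_2\rangle$ with $\langle x_1, x_2\rangle = x_{1c}x_{2c} + x_{1m}x_{2m}$, so the two features to be established $-$ whom a worker teams with, and which project a team is assigned $-$ can be disentangled by two classical inequalities, one governing the team and one governing the sorting into projects.

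First, for self-matching, I would use the pointwise bound coming from $|x_1 - x_2|^2 \geq 0$, namely $\langle x_1, x_2\rangle \leq \tfrac12(|x_1|^2 + |x_2|^2) = \tfrac12(X_1 + X_2)$, where $X_i := x_{ic}^2 + x_{im}^2$ is the effective skill of input $i$, with equality if and only if $x_1 = x_2$. Multiplying by $z \geq 0$ and integrating against an arbitrary feasible $\gamma$ gives $\int y \, \text{d}\gamma \leq \tfrac12 \int z(X_1 + X_2)\,\text{d}\gamma$. Each term $\int z X_i \, \text{d}\gamma$ depends on $\gamma$ only through the joint law of $(z, X_i)$, whose $X_i$-marginal is the fixed pushforward $F_X$ of $F_x$ under the effective-skill map $x \mapsto |x|^2$ and whose $z$-marginal is $F_z$.

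Second, for positive sorting, I would invoke the Hardy--Littlewood rearrangement inequality: since the surplus $zX$ is supermodular in $(z,X)$, every coupling $\pi$ of $F_z$ and $F_X$ satisfies $\int zX \, \text{d}\pi \leq \int zX \, \text{d}\pi_*$, where $\pi_*$ is the comonotone (positive assortative) coupling. Applying this bound to each of the two terms above yields $\int y \, \text{d}\gamma \leq \int zX \, \text{d}\pi_*$ for every feasible $\gamma$, an upper bound independent of $\gamma$.

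Finally, I would construct $\gamma_*$ explicitly and verify that it attains this bound: draw $x \sim F_x$, set $x_1 = x_2 = x$, and assign the project $z$ through the monotone rearrangement of $X = |x|^2$ against $F_z$. Feasibility is immediate, since the worker and coworker marginals are both $F_x$ and the project marginal is $F_z$; and along $\gamma_*$ one has $\langle x_1, x_2\rangle = X$ with $z$ comonotone in $X$, so $\int y \, \text{d}\gamma_* = \int zX \, \text{d}\pi_*$, matching the upper bound and establishing optimality. I expect the main obstacle to be the construction and measurability of this lift when $F_X$ has atoms or the effective-skill map $x \mapsto |x|^2$ is many-to-one: in that case positive sorting need not be realized by a monotone map and must instead be obtained by disintegrating $\pi_*$ over the effective-skill index, with ties broken consistently. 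The accompanying delicate point is confirming that the Cauchy--Schwarz equality and rearrangement optimality can be met \emph{simultaneously}, which holds precisely because self-matching collapses each team to $x_1 = x_2 = x$ and thereby reduces the three-marginal problem exactly to the two-marginal sorting problem between $X$ and $z$.
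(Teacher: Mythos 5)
Your proof is correct, and it takes a genuinely different route from the paper's. The paper proceeds by discretizing the problem, symmetrizing the assignment, applying the rearrangement inequality \emph{within} each project to get self-matching, reducing to a two-marginal team-to-project problem, and then passing to continuous distributions by verifying that the support of the candidate assignment is $c$-monotone and invoking a general optimality theorem for multimarginal transport (Griessler 2018, after Bogachev--Kolesnikov and Ambrosio--Gigli). You instead give a direct primal bound: the pointwise inequality $\langle x_1,x_2\rangle \leq \tfrac12(X_1+X_2)$ (valid with equality iff $x_1=x_2$), multiplied by $z\geq 0$ and integrated, reduces the three-marginal problem to two two-marginal terms $\int zX_i\,\text{d}\gamma$, each of which is a coupling of the fixed laws $F_z$ and $F_X$ and hence dominated by the comonotone value via Hoeffding--Fr\'echet; the self-matched, positively sorted $\gamma_*$ attains the bound, so it is optimal. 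Your argument is more elementary and fully self-contained $-$ no discretization, no limiting argument, no $c$-cyclical monotonicity machinery, and no external theorems $-$ because it exploits the specific bilinear (inner-product) form of the technology, which delivers the clean quadratic bound; it also handles the symmetrization issue implicitly, since the bound holds for \emph{every} feasible $\gamma$, symmetric or not. What the paper's route buys in exchange is generality: the $c$-monotonicity argument extends to supermodular technologies for which no pointwise separation of the team-formation and sorting margins is available, and it makes explicit that the optimizer is of Monge type. Your closing remarks on atoms and the many-to-one skill map are handled exactly right $-$ working with the comonotone \emph{coupling} and disintegrating over the effective-skill index is the correct fix, and it is also in effect what the paper's Kantorovich formulation does.
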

\noindent The proof is in Appendix \ref{pf:p_assignment}.

We now develop the intuition for Proposition \ref{prop:p_assignment}. Given a firm project, and since the technology for each task in equation (\ref{e:firm_tech}) is supermodular, the planner optimally wants to positively sort both cognitive and manual inputs to project $z$. In our economy with multidimensional worker skills, positive sorting within each task is attained by self-sorting. An optimal assignment thus features self-sorting of workers with coworkers within projects $z$. Given that workers are optimally self-sorted, the planner remains to sort self-sorted workers with effective skill $X$ to firms $z$. Since the effective production technology is supermodular in team quality $X$ and project value $z$, the optimal assignment features positive sorting between teams and project values.\footnote{Positive sorting of effective skill $X$ with project values  $z$ follows the classical Beckerian analysis \citep{Becker:1973}.}

Given that the assignment features self-sorting, the output per worker produced by a team of two workers supplying task inputs $(x_c,x_m)$ is $\frac{1}{2} z ( x^2_{c} + x^2_{m})$. Aggregate output is $\int y (x_{1},x_{2}, z ) \text{d} \gamma = \frac{1}{2} \int z ( x^2_{c} + x^2_{m}) \text{d} \Phi$, and the resource cost (\ref{e:resources_original}) can be written as:
\begin{equation}
\int \Big( c(\alpha) - \frac{1}{2} z (\alpha) \big( x^2_{c}(\alpha) + x^2_{m}(\alpha) \big) \Big) \text{d} \Phi.\label{e:resources_original2}
\end{equation}

\subsection{Utility Allocations} \label{ss:change_of_var}

We next transform the planner problem from choosing consumption and task allocations to choosing consumption utility and labor disutility allocations. 

For each task  $s \in S$, we define the skill parameter $p_s = \kappa \alpha_s^{-\rho}$ so the skill parameter is inversely related to the underlying skill $\alpha_s$. The implied distribution function for the skill parameter vector $p$ is denoted $\pi$, and the corresponding assignment is denoted $z(p)$. We use this skill transformation to define a worker's utility from consumption as a function of their skill vector as $c(p) := u(c(\alpha))$. Following this definition, the resource cost of consumption utility is $\mathcal{C}(c(p)) = u^{-1}(c(p))$. Since the utility from consumption is strictly increasing and concave in the consumption allocation, the resource cost is strictly increasing and convex in consumption utility. Similarly, we define labor disutility in each task $s \in \mathcal{S}$ as a function of the transformed skill parameter $p$ as $x_s(p) := x_s(\alpha)^\rho$. The resource cost of providing disutility $\mathcal{X}(x_s(p)) := - \frac{1}{2} x_s(p)^{\frac{2}{\rho}}$ is decreasing and strictly convex in labor disutility for $\rho > 2$. 


Given the introduction of the skill parameter $p$ and the transformation of the choice variables from allocations to utilities, the planner chooses $\{ ( c(p),x_c(p),x_m(p) ) \}$ to minimize the resource cost of providing welfare: 
\begin{equation}
\int \left( \mathcal{C} ( c (p) ) + z(p) \big( \mathcal{X} ( x_c (p) ) + \mathcal{X} ( x_m (p) ) \big) \right) \pi (p) \text{d} p ,  \label{e:resources_original3}
\end{equation}
subject to linear incentive constraints:
\begin{equation}
c (p) - p_c x_c(p) - p_m x_m(p) \geq c (q) - p_c x_c(q) - p_m x_m(q), \label{e:linear_ic}
\end{equation}
for all workers $(p,q)$, and a linear promise keeping condition:
\begin{equation}
\int ( c (p) - p_c x_c(p) - p_m x_m(p) ) \pi (p) \text{d} p \geq \mathcal{U} \label{e:promise_keeping_linear}.
\end{equation}


\subsection{Incentive Compatibility}


We show that the indirect utility for workers is convex and decreasing in type $p = (p_c,p_m)$. The indirect utility is defined as:
\begin{equation}
u(p) =  c (p) - p_c x_c(p) - p_m x_m(p) \label{e:indirect_utility}, 
\end{equation}
which implies that for any incentive compatible allocation $\nabla u(p) = - x(p) = - ( x_c(p), x_m (p))^T$ almost everywhere. Using the indirect utility function, the incentive constraints (\ref{e:linear_ic}) are $u(p) \geq u(q) - (p_c - q_c) x_c(q) - (p_m - q_m) x_m(q)$ or, equivalently in notation of scalar product $\langle \cdot, \cdot \rangle$:
\begin{equation}
u(p) - u(q) \geq \langle p - q, - x (q) \rangle =  \langle p - q, \nabla u(q) \rangle, \label{e:linear_ic2}
\end{equation}
for the incentive constraint where worker type $p$ does not want to report to be of type $q$.

A differentiable function $u$ on a convex domain is convex if and only if $u(p) \geq u(q) + \langle p - q, \nabla u(q) \rangle$. This implies that an incentive compatible indirect utility function is necessarily convex. Since the gradient of the indirect utility function is the negative of a worker's production disutility, and production disutility is positive, the indirect utility function  decreases in $p$, or $\nabla u(p) = - x(p) \leq 0$.\footnote{The indirect utility function thus increases in skill $\alpha$. In Appendix \ref{a:ic}, we discuss differentiability of the indirect utility function in more detail, and we also establish which incentive compatibility constraints are redundant.} 

\vspace{0.1 cm}
\begin{lemma}\label{lemma:p_convexity}
Any indirect utility function (\ref{e:indirect_utility}) that is incentive compatible is convex and decreasing in worker type $p$.
\end{lemma}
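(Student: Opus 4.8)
The plan is to read the incentive constraints as an envelope representation of the indirect utility. Starting from the linear incentive constraints (\ref{e:linear_ic}), the left-hand side is exactly $u(p)$ as defined in (\ref{e:indirect_utility}), while the right-hand side is an affine function of $p$ for each fixed report $q$. For every true type $p$ the constraint against report $q$ thus reads
\begin{equation*}
u(p) \geq c(q) - p_c x_c(q) - p_m x_m(q),
\end{equation*}
and choosing $q = p$ makes this hold with equality. Hence the indirect utility admits the representation
\begin{equation*}
u(p) = \max_{q \in P} \big( c(q) - p_c x_c(q) - p_m x_m(q) \big),
\end{equation*}
as a pointwise maximum over $q$ of functions that are affine in $p$.

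From this representation both claims follow quickly. For convexity, I would use that a pointwise supremum of affine functions is convex: for each $q$ the map $p \mapsto c(q) - p_c x_c(q) - p_m x_m(q)$ is affine, and taking the supremum over $q$ preserves convexity. I prefer this route to invoking the first-order characterization $u(p) \geq u(q) + \langle p - q, \nabla u(q)\rangle$ directly, because an arbitrary incentive-compatible allocation need not deliver a differentiable $u$ a priori; the supremum argument establishes convexity without any smoothness assumption, and convexity then guarantees differentiability almost everywhere, so that the envelope identity $\nabla u(p) = -x(p)$ holds wherever the gradient exists.

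For monotonicity, I would argue directly from the same representation. Take two types with $p' \geq p$ componentwise, and let $q^*$ attain the maximum defining $u(p')$. Since the task inputs are nonnegative $-$ as $x_s = \alpha_s \ell_s$ with $\alpha_s > 0$ and effort $\ell_s \geq 0$, so that $x_s(p) = x_s(\alpha)^\rho \geq 0$ $-$ raising each coordinate of $p$ weakly lowers every term in the objective, giving
\begin{equation*}
u(p') = c(q^*) - p'_c x_c(q^*) - p'_m x_m(q^*) \leq c(q^*) - p_c x_c(q^*) - p_m x_m(q^*) \leq u(p),
\end{equation*}
where the final inequality is the definition of $u(p)$ as a maximum. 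Thus $u$ is nonincreasing in each component of $p$, equivalently $\nabla u(p) = -x(p) \leq 0$ wherever it exists, and the indirect utility increases in the underlying skill $\alpha$.

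The arguments are short, so the only real subtlety $-$ and the step I would be most careful about $-$ is not convexity or monotonicity per se but the differentiability issue flagged above: the clean way to handle it is to establish convexity first through the supremum representation and only afterwards appeal to the gradient identity, rather than presupposing that $u$ is smooth.
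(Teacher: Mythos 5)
Your proof is correct, and it takes a genuinely different (and cleaner) route than the paper's. The paper proves convexity by first asserting the envelope identity $\nabla u(p) = -x(p)$, rewriting the incentive constraints as $u(p) - u(q) \geq \langle p - q, \nabla u(q) \rangle$, and then invoking the gradient characterization of convexity for differentiable functions; monotonicity then follows from the sign of the gradient, $\nabla u(p) = -x(p) \leq 0$. That argument presupposes that $u$ is differentiable, which is not among the lemma's hypotheses and is exactly what one should not take for granted in a setting where bunching (and hence kinks in allocations) is the central object. Your representation $u(p) = \max_{q} \big( c(q) - p_c x_c(q) - p_m x_m(q) \big)$, read directly off the incentive constraints with equality at $q = p$, establishes convexity as a pointwise maximum of affine functions with no smoothness assumption, and your monotonicity argument, evaluating the maximizer $q^*$ for the larger type and using $x_s \geq 0$, likewise avoids any appeal to the gradient. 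What the paper's route buys is brevity and continuity with the rest of the analysis: once differentiability is granted, the IC constraint is literally the subgradient inequality, and the identity $\nabla u = -x$ is needed anyway in the ABC formula and the bunching characterization. What your route buys is generality and correct logical order: convexity comes first, almost-everywhere differentiability is a consequence rather than a hypothesis, and the envelope identity holds wherever the gradient exists. The subtlety you flag at the end is precisely the gap in the paper's own argument.
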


\noindent We denote the set of utility allocations that satisfy the set of incentive constraints by $\mathcal{I}$, which we refer to as feasible allocations.


\subsection{Bunching} \label{s:bunching_theory}

We refer to bunching as different workers being assigned the same labor allocation $x$, and therefore the same consumption allocation $c$. We label the set of bunching points by $\mathcal{B}$.\footnote{Alternatively, one could define a worker $p$ being bunched when there exists another worker $p'$ in its neighborhood such that $x(p) = x(p')$. Our definition of bunching is the closure of this set. While these definitions are economically equivalent, our definition facilitates the presentation of Proposition \ref{p:bunch}.}

\begin{definition*}
Worker $p$ is bunched, $p \in \mathcal{B}$, if and only if in any neighborhood around this worker there exists two other workers $p'$ and $p''$ with identical allocations $x(p') = x(p'')$.
\end{definition*}

We now state the notions of convexity and strong convexity. Assume that the indirect utility is twice continuously differentiable in the neighborhood of a type $p$. An indirect utility function $u$ is convex if and only if the Hessian matrix $H(u)$ is positive semidefinite. The indirect utility function is strongly convex if $H(u) - \alpha_I I$ is positive semidefinite for some strictly positive $\alpha_I$, where $I$ is the identity matrix.

\begin{lemma}\label{l:strconvex}
If the indirect utility (\ref{e:indirect_utility}) is strongly convex, then there is no bunching. If the indirect utility is not strongly convex at all points in the neighborhood of type $p$, then worker $p$ is bunched.
\end{lemma}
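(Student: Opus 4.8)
The plan is to translate both statements into properties of the gradient map $G := \nabla u = -x$, whose Jacobian is the Hessian $H(u)$. Since the condition $x(p') = x(p'')$ is equivalent to $G(p') = G(p'')$, worker $p$ is bunched precisely when $G$ fails to be injective on every neighborhood of $p$, i.e. every neighborhood contains two distinct points at which $G$ takes the same value. Everything then reduces to a local injectivity/non-injectivity dichotomy for $G$ governed by the rank of $H(u)$.

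For the first implication I would argue directly. Strong convexity means $H(u) - \alpha I$ is positive semidefinite for some $\alpha > 0$, so $H(u)$ is positive definite and in particular $\det H(u) > 0$. By the inverse function theorem $G$ is a local diffeomorphism at every point, hence locally injective; equivalently, strong convexity yields the strict monotonicity $\langle G(p) - G(q), p - q \rangle \ge \alpha |p-q|^2 > 0$ for $p \neq q$, so $G$ is in fact globally injective. Either way, no neighborhood can contain two distinct points with a common allocation, so there is no bunching.

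For the converse I would first unpack the hypothesis. Because $u$ is convex, $H(u)$ is positive semidefinite and its eigenvalues are nonnegative; failing strong convexity at a point $q$ forces the smallest eigenvalue to vanish, i.e. $\det H(u)(q) = 0$. Thus the assumption is that $\det H(u) \equiv 0$ on some neighborhood $N$ of $p$, and at each point $H(u)$ has rank $0$ or $1$. Partition $N$ into the closed set $Z = \{ H(u) = 0 \}$ and the open set $R = \{ \operatorname{rank} H(u) = 1 \}$ (openness being lower semicontinuity of rank). If $p$ has a connected neighborhood contained in $Z$, then $u$ is affine there, $G$ is constant, and $p$ is trivially bunched. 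Otherwise $p \in \overline{R}$, and I would produce bunching from the rank-one region: for $q_0 \in R$ arbitrarily close to $p$, the Hessian has a simple zero eigenvalue throughout a small ball around $q_0$, so its kernel defines a continuous unit line field $e(q)$ there. Integrating the ODE $c'(t) = e(c(t))$ gives a nonconstant curve along which $\tfrac{d}{dt} G(c(t)) = H(u)(c(t))\, c'(t) = 0$, so $G$ is constant on it; the curve therefore supplies two distinct points near $q_0$, hence near $p$, with the same allocation. Letting $q_0 \to p$ places such a pair in every neighborhood of $p$, so $p \in \mathcal{B}$.

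The main obstacle is the rank-one step: one must know the kernel direction varies continuously (so the integral curve exists and is nontrivial) and that $G$ is genuinely constant, not merely stationary, along it. Continuity of $e(q)$ follows from continuity of $H(u)$ together with simplicity of the zero eigenvalue on $R$ (standard eigenvector perturbation), while constancy of $G$ along the integral curve is exactly the vanishing of the derivative computed above, which is an identity rather than an estimate. A clean alternative is to invoke the constant rank theorem on the ball around $q_0$, which presents $G$ in local coordinates as $(s,t) \mapsto (s,0)$, making the fibers one-dimensional and the non-injectivity manifest; either route closes the argument.
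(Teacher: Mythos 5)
Your proof is correct, and while your first implication coincides with the paper's (the inverse function theorem applied to $p \mapsto x(p) = -\nabla u(p)$, with your monotonicity inequality $\langle \nabla u(p) - \nabla u(q), p-q\rangle \geq \alpha |p-q|^2$ giving the mild strengthening of global rather than merely local injectivity), your converse takes a genuinely different route. The paper argues by contradiction with soft measure-theoretic tools: degeneracy of the Hessian on a whole neighborhood $\mathcal{P}$ of $p$ makes $\mathcal{P}$ a critical set of the allocation map, so by Sard's theorem the image $x(\mathcal{P})$ has Lebesgue measure zero; if $p$ were not bunched, $x$ would be injective on some open $\hat{\mathcal{P}} \subseteq \mathcal{P}$, and invariance of domain would force $x(\hat{\mathcal{P}})$ to be a nonempty open set of positive measure --- a contradiction. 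You instead argue constructively: you split the degenerate neighborhood into the rank-$0$ set (where $u$ is locally affine and bunching is trivial) and the open rank-$1$ set, and on the latter you integrate the continuous kernel line field of $H(u)$ (or invoke the constant rank theorem) to exhibit explicit nonconstant curves along which $\nabla u$, hence the allocation, is constant. Both arguments need the same $C^2$ regularity, and your handling of the delicate points (simplicity of the zero eigenvalue giving continuity of the kernel direction, the identity $\frac{d}{dt}\nabla u(c(t)) = H(u)(c(t))\,c'(t) = 0$, local orientability of the line field on a ball) is sound. The trade-off: the paper's route is shorter and dimension-free --- no case analysis on the rank, no eigenvector perturbation, and it works verbatim when the rank fluctuates arbitrarily --- whereas your route buys more information, exhibiting the bunched workers as one-dimensional fibers of the allocation map, which is precisely the isocurve structure the paper later uses to visualize bunching in \Cref{f:bunching_ill}; in dimensions above two, however, your rank-$1$ step would need additional care since the degeneracy pattern can be richer.
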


\noindent The proof is in Appendix \ref{pf:strconvex}. 


\subsection{Taxation} 

In order to describe optimal distortions, we define tax wedges for each task. The tax wedge captures the difference between a worker's marginal rate of substitution between task $x_s$ and consumption $c$, $v'\big( \frac{x_s(\alpha)}{\alpha_s} \big) \frac{1}{\alpha_s} \big/ u'(c(\alpha))$, and the marginal rate of transformation, $z x_s (\alpha)$. We define the tax wedge as:
\begin{equation}
1 - \tau_s := \frac{v'\big( \frac{x_s(\alpha)}{\alpha_s} \big) \frac{1}{\alpha_s}}{u'(c(\alpha))} \bigg/ \big( z x_s (\alpha) \big) = - \frac{p_s}{ z} \frac{\mathcal{C}'(c(p))}{\mathcal{X}'(x_s(p))} \label{e:optimal_wedges} ,
\end{equation}
where it follows from the inverse function theorem that $\mathcal{C}'(c(p)) = 1 / u'(c(\alpha))$. A positive wedge plays a role of an implicit tax on marginal income on task $s$.\footnote{Using the definition for the tax wedge, we write $\frac{ \tau_s}{1 - \tau_s} = - \frac{z \mathcal{X}'(x_s(p)) + p_s \mathcal{C}'(c(p))}{p_s \mathcal{C}'(c(p))}$. \label{eq:footnotewedge}} 

\section{Characterization} \label{s:characterization}

We next derive an optimality condition for the multidimensional tax problem that incorporates bunching. 

\subsection{Implementability Condition}

The planner chooses consumption utility and labor disutility $(c,x)$ to minimize the Lagrangian:
\begin{equation}
\mathcal{L}( c,x ) = \int \Big(  \mathcal{C} ( c ) + z \big( \mathcal{X} ( x_c ) + \mathcal{X} ( x_m ) \big) - \lambda \big( c - p_c x_c -p_m x_m - \mathcal{U} \big) \Big) \pi  \text{d}p . \label{e:lagrange_basic}
\end{equation}
subject to the incentive constraints (\ref{e:linear_ic}), where $\lambda$ denotes the multiplier on the promise keeping constraint (\ref{e:promise_keeping_linear}).\footnote{We suppress the dependence on $p$ to streamline notation.} 


\begin{proposition}{\textit{Implementability Condition}}. \label{p:implementable}
Let $( c,x )$ denote a solution to the planner problem, then the implementability condition:
\begin{equation}
\int \big( \mathcal{C}'( c ) \hat{c} + z \big( \mathcal{X}' ( x_c ) \hat{x}_c + \mathcal{X}' ( x_m ) \hat{x}_m \big) \big) \pi \text{d}p  \geq \lambda \int \big( \hat{c} - p_c \hat{x}_c - p_m \hat{x}_m \big) \pi \text{d}p \label{e:stoch_dominance_weak_basic1}
\end{equation}
holds for any feasible allocation $(\hat{c},\hat{x}) \in \mathcal{I}$. At a solution $(\hat{c},\hat{x}) = (c,x)$, (\ref{e:stoch_dominance_weak_basic1}) holds with equality.
\end{proposition}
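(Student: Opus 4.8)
The plan is to read \eqref{e:stoch_dominance_weak_basic1} as the first–order variational inequality for minimizing the Lagrangian \eqref{e:lagrange_basic} over the feasible set, and then to use a homogeneity (cone) argument to turn the natural \emph{increment} inequality into the stated \emph{level} inequality. Write $w := (c,x_c,x_m)$ and let $\mathcal{L}(w)$ be the Lagrangian \eqref{e:lagrange_basic}, in which the promise-keeping constraint has been dualized with multiplier $\lambda$ while the incentive constraints \eqref{e:linear_ic} and outside-option constraints \eqref{e:participation_linear} are retained and define $\mathcal{I}$. Two facts organize everything: by Lagrangian duality for the convex program, the planner solution $w$ is a global minimizer of $\mathcal{L}$ over $\mathcal{I}$; and $\mathcal{I}$ is convex, since \eqref{e:linear_ic} and \eqref{e:participation_linear} are linear in $w$.

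First I would invoke the first–order condition. Given any feasible $\hat{w} := (\hat{c},\hat{x}_c,\hat{x}_m) \in \mathcal{I}$, convexity of $\mathcal{I}$ puts the whole segment $(1-t)w + t\hat{w}$ in $\mathcal{I}$ for $t \in [0,1]$, and global minimality makes $t \mapsto \mathcal{L}((1-t)w + t\hat{w})$ attain its minimum at $t=0$, so its right derivative is nonnegative. Computing this directional (Gateaux) derivative yields $\langle \nabla\mathcal{L}(w),\,\hat{w}-w\rangle \ge 0$, that is,
\[
\int\big(\mathcal{C}'(c)(\hat{c}-c) + z\,\mathcal{X}_c'(x_c)(\hat{x}_c-x_c) + z\,\mathcal{X}_m'(x_m)(\hat{x}_m-x_m)\big)\pi\,\text{d}p \;\ge\; \lambda\int\big((\hat{c}-c) - p_c(\hat{x}_c-x_c) - p_m(\hat{x}_m-x_m)\big)\pi\,\text{d}p .
\]
This is exactly \eqref{e:stoch_dominance_weak_basic1} written in increments rather than in levels.

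To remove the increments I would exploit that $\mathcal{I}$ is a convex \emph{cone}. With the normalization $\underline{\mathcal{U}}=0$, both \eqref{e:linear_ic} and \eqref{e:participation_linear} are homogeneous of degree one in $w$, so $tw \in \mathcal{I}$ for every $t>0$ whenever $w \in \mathcal{I}$. Since $w$ minimizes $\mathcal{L}$ over $\mathcal{I}$ and the ray $\{tw : t>0\}$ remains feasible on both sides of $t=1$, the scalar map $t \mapsto \mathcal{L}(tw)$ has an interior minimum at $t=1$, giving $\tfrac{\text{d}}{\text{d}t}\mathcal{L}(tw)\big|_{t=1} = \langle \nabla\mathcal{L}(w),\,w\rangle = 0$, i.e.
\[
\int\big(\mathcal{C}'(c)\,c + z\,\mathcal{X}_c'(x_c)\,x_c + z\,\mathcal{X}_m'(x_m)\,x_m\big)\pi\,\text{d}p \;=\; \lambda\int\big(c - p_c x_c - p_m x_m\big)\pi\,\text{d}p .
\]
Combining this identity with the increment inequality cancels the terms in $w$ and leaves precisely $\langle \nabla\mathcal{L}(w),\,\hat{w}\rangle \ge 0$, which is \eqref{e:stoch_dominance_weak_basic1}; and equality at $\hat{w}=w$ holds for exactly the reason that this identity holds.

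The main obstacle is analytic rather than conceptual: justifying differentiation under the integral that produces $\nabla\mathcal{L}(w)$. Because $\rho>2$, the marginal disutility cost $\mathcal{X}_s'(x_s) = -\tfrac{1}{\rho}x_s^{2/\rho-1}$ diverges as $x_s \downarrow 0$, so I would either show the optimal allocation keeps $x_s$ bounded away from zero on the relevant region or dominate the singularity by an integrable envelope, relying throughout on one-sided directional derivatives of the convex integrands, which exist in $[-\infty,+\infty]$ even where classical differentiability fails. A second point to handle carefully is that the project value $z(p)$ is itself determined by the positive-sorting assignment of Proposition \ref{prop:p_assignment} and therefore responds to perturbations of $x$; I would neutralize this by an envelope argument, noting that because the assignment is chosen optimally its first-order response contributes nothing, so $z(p)$ may be held fixed when forming $\nabla\mathcal{L}(w)$.
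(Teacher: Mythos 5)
Your proof is correct and takes essentially the same route as the paper: your ray-scaling identity $\langle \nabla\mathcal{L}(w), w\rangle = 0$ is exactly the paper's Lemma \ref{lemma:sd_basic} (the small proportional perturbation $\zeta(c,x_s)$, valid because the constraints are homogeneous once $\underline{\mathcal{U}}=0$), and your directional-derivative inequality along the segment toward $\hat{w}$, combined with that identity to cancel the terms in $w$, reproduces the paper's Lemma \ref{lemma:sd2_basic} step for step. Your closing caveats $-$ the integrable envelope needed because $\mathcal{X}_s'(x_s)$ blows up as $x_s \downarrow 0$ when $\rho>2$, and the envelope argument justifying that $z(p)$ can be held fixed under perturbations of the allocation $-$ concern technicalities the paper passes over silently and do not alter the argument.
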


\noindent The proof is in Appendix \ref {pf:implementable}. Proposition \ref{p:implementable} states that for any feasible allocation $(\hat{c},\hat{x})$, the implementability condition is necessarily satisfied, where the marginal resource costs of providing consumption utility $\mathcal{C}'( c )$, as well as the marginal resource costs of providing disutility from work $(\mathcal{X}' ( x_c ),\mathcal{X}' ( x_m ))$, are evaluated at an optimum. Thus, the implementability condition places restrictions on the optimal $( c,x )$ that need to satisfy (\ref{e:stoch_dominance_weak_basic1}) for any feasible allocation $(\hat{c},\hat{x})$.

Proposition \ref{p:implementable} combines two variational arguments. First, consider a small proportional change in consumption utility and labor disutility. This variation is feasible. Since this scaling is unrestricted, meaning that it can either increase or decrease the utility allocations, it implies that (\ref{e:stoch_dominance_weak_basic1}) holds with equality at the optimal allocation $(c,x)$. Second, consider a convex combination of an optimal allocation and any other feasible allocation with a small weight. The convex combination is equivalent to scaling down the optimal allocation and adding a small positive perturbation. By the previous argument, rescaling does not change the Lagrangian at the optimum allocation. The positive perturbation should not decrease the Lagrangian. Since this perturbation is positive it gives an inequality condition. 

Proposition \ref{p:implementable} presents an implementability constraint for an incentive constrained economy. The implementability conditions are more common in the Ramsey taxation literature where they represent the distortions to allocations introduced by pre-specified taxes. In our model, we do not impose direct restrictions on the permissible taxes and, instead, an information friction endogenously restricts the set of allocations. Importantly, our implementability condition holds with inequality which, as we show, is essential for characterizing the bunching regions. 

\subsection{Optimal Tax Condition as Stochastic Dominance}

We use Proposition \ref{p:implementable} to derive an optimality condition for our multidimensional taxation problem in terms of a stochastic dominance condition. 

We first use the indirect utility (\ref{e:indirect_utility}) for a feasible allocation $(\hat{c},\hat{x})$ to write the implementability condition (\ref{e:stoch_dominance_weak_basic1}) as:  
\begin{equation}
\int \big( \mathcal{C}'( c ) \big( \hat{u} - \nabla \hat{u} \cdot p \big) - z \mathcal{X}' ( x ) \cdot \nabla \hat{u} \big) \pi  \text{d}p  - \lambda \int \hat{u} \pi \text{d}p \geq 0 \label{e:stoch_dominance_weak_basic},
\end{equation}
for any nonnegative, decreasing and convex indirect utility function $\hat{u}$. By Proposition \ref{p:implementable} it follows that (\ref{e:stoch_dominance_weak_basic}) holds with equality for an optimal indirect utility function. Integrating implementability condition (\ref{e:stoch_dominance_weak_basic}) by parts we obtain:
\begin{equation}
\int \big( \partial_{p_c} \big( \pi (p_c \mathcal{C}'( c )+ z  \mathcal{X}' ( x_c )) \big) + \partial_{p_m} \hspace{-0.07 cm} \left( \pi (p_m \mathcal{C}'( c )+ z  \mathcal{X}' ( x_m )) \right) \hspace{-0.10 cm} \big) \hat{u} \text{d}p \geq \int \pi (\lambda - C'(c)) \hat{u} \text{d}p + \Xi(\hat{u}) \label{e:generalized_el}, 
\end{equation}
for any nonnegative, decreasing and convex indirect utility function $\hat{u}$, where boundary conditions act on $\hat{u}$ as $\Xi(\hat{u})= \int^{\bar{p}_m}_{\underline{p}_m} \pi (p_c \mathcal{C}'( c )+ z  \mathcal{X}' ( x_c )) \hat{u}\big\vert^{\bar{p}_c}_{\underline{p}_c} \text{d} p_m + \int^{\bar{p}_c}_{\underline{p}_c} \pi (p_m \mathcal{C}'( c )+ z  \mathcal{X}' ( x_m )) \hat{u} \big\vert^{\bar{p}_m}_{\underline{p}_m} \text{d} p_c$. 

We now define second-order stochastic dominance \citep{Shaked:2007}:
\begin{definition*}
The measure $\mu$ \underline{second-order stochastically dominates} the measure $\nu$, or $\mu \succeq \nu$, if and only if for any nonnegative, decreasing and convex function $\hat{u}$: 
\begin{equation}
\int \hat{u}(p) \text{d} \mu \geq \int \hat{u}(p) \text{d} \nu . \label{e:stoch_dom}
\end{equation}
\end{definition*}
\noindent Second-order stochastic dominance states that equation (\ref{e:stoch_dom}) holds for any nonnegative, decreasing, and convex function $\hat{u}$. These conditions exactly correspond to the indirect utility being feasible (\Cref{lemma:p_convexity}). Applying the definition for second-order stochastic dominance to equation (\ref{e:generalized_el}) we obtain the following theorem.

\begin{theorem*}{\textit{Optimal Tax Condition as Stochastic Dominance}}. \label{s:p_stochastic}
Suppose that the optimal allocation $(c,x)$, density function, and assignment are all continuously differentiable. Then,
\begin{equation}
\partial_{p_c} \big( \pi (p_c \mathcal{C}'( c )+ z  \mathcal{X}' ( x_c )) \big) + \partial_{p_m} \hspace{-0.07 cm} \left( \pi (p_m \mathcal{C}'( c )+ z  \mathcal{X}' ( x_m )) \right) \succeq \pi (\lambda - \mathcal{C}'(c)) + \Xi . \label{e:general_abc}
\end{equation}
\end{theorem*}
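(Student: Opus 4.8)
The plan is to recognize the claimed formula as a restatement, in the language of second-order stochastic dominance, of the integrated-by-parts implementability condition. The substantive economic content is already carried by Proposition~\ref{p:implementable}; the remaining work is to (i) rewrite that condition in terms of the indirect utility $\hat{u}$, (ii) integrate by parts, and (iii) verify that the class of admissible $\hat{u}$ coincides \emph{exactly} with the test-function class appearing in the definition of $\succeq$. The differentiability hypotheses in the statement are precisely what license step (ii).

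First I would start from the Implementability Condition (\ref{e:stoch_dominance_weak_basic1}), which by Proposition~\ref{p:implementable} holds for every feasible allocation $(\hat{c},\hat{x}_s) \in \mathcal{I}$. Using the indirect utility $\hat{u} := \hat{c} - p_c \hat{x}_c - p_m \hat{x}_m$ together with the envelope identity $\nabla \hat{u} = -\hat{x}$ from (\ref{e:indirect_utility}), I would substitute $\hat{c} = \hat{u} - \nabla\hat{u}\cdot p$ and $\hat{x} = -\nabla\hat{u}$ to reach (\ref{e:stoch_dominance_weak_basic}). The crucial bookkeeping step is the characterization of $\mathcal{I}$ in terms of $\hat{u}$: by Lemma~\ref{lemma:p_convexity} incentive compatibility is equivalent to convexity of $\hat{u}$; the gradient sign $\nabla\hat{u} = -\hat{x} \leq 0$ makes $\hat{u}$ decreasing; and the outside-option constraint (\ref{e:participation_linear}) with $\underline{\mathcal{U}}=0$ makes $\hat{u}$ nonnegative. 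Conversely, any nonnegative, decreasing, convex $\hat{u}$ yields a feasible allocation via $\hat{x}=-\nabla\hat{u}$, the subgradient inequality delivering incentive compatibility. Hence the admissible $\hat{u}$ are \emph{exactly} the nonnegative, decreasing, convex functions.

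Next I would integrate (\ref{e:stoch_dominance_weak_basic}) by parts in $p_c$ and in $p_m$, which is justified by the $C^1$ hypotheses on allocation, density, and assignment. Moving the derivatives off the coefficient functions and onto $\hat{u}$ produces the bulk divergence $\partial_{p_c}(\pi(p_c \mathcal{C}'(c)+z\mathcal{X}'(x_c))) + \partial_{p_m}(\pi(p_m \mathcal{C}'(c)+z\mathcal{X}'(x_m)))$ paired against $\hat{u}$, together with the boundary term $\Xi$ appearing with a positive sign on the right, which is exactly (\ref{e:generalized_el}). Finally, since (\ref{e:generalized_el}) holds for every nonnegative, decreasing, convex $\hat{u}$ $-$ verbatim the quantifier in the definition (\ref{e:stoch_dom}) of $\succeq$ $-$ the definition applies with $f$ the bulk divergence measure and $g$ the measure $\pi(\lambda - \mathcal{C}'(c))$ augmented by the boundary contribution $\Xi$, which yields (\ref{e:general_abc}).

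I expect the main obstacle to be conceptual rather than computational: establishing the exact correspondence between feasible allocations and the test-function class of $\succeq$, since both directions of Lemma~\ref{lemma:p_convexity} are needed and the outside-option and gradient-sign arguments must be assembled to pin down nonnegativity and monotonicity. A secondary technical point is the careful treatment of $\Xi$: it is not an absolutely continuous bulk density but a functional of the boundary values of $\hat{u}$, so it must be read as a boundary-supported signed measure for the definition of $\succeq$ to apply literally, and the sign conventions from the two successive integrations by parts must be tracked consistently so that $\Xi$ lands on the dominated side.
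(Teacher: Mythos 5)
Your proposal follows the paper's own derivation step for step: starting from Proposition \ref{p:implementable}, substituting the indirect utility to obtain (\ref{e:stoch_dominance_weak_basic}), integrating by parts to reach (\ref{e:generalized_el}) with boundary term $\Xi$, and invoking the exact correspondence (via Lemma \ref{lemma:p_convexity}, the gradient sign, and the outside-option constraint with $\underline{\mathcal{U}}=0$) between feasible allocations and the nonnegative, decreasing, convex test functions in the definition of second-order stochastic dominance. Your explicit attention to the converse direction of that correspondence and to reading $\Xi$ as a boundary-supported contribution on the dominated side is consistent with, and if anything slightly more careful than, the paper's treatment.
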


\noindent This theorem derives the optimality condition for the multidimensional taxation economy that incorporates bunching. This condition shows that, at the optimum, the measure over marginal tax revenues, $\pi \left( 1 / u'(\mathcal{C}(c)) - \lambda \right)$, 
second-order stochastically dominates the measure over marginal tax distortions, 
\begin{equation}
\partial_{p_c} \Big( \frac{\pi}{u'(\mathcal{C}(c))} p_c \frac{\tau_c}{1 - \tau_c}  \Big) + \partial_{p_m} \Big(  \frac{\pi}{u'(\mathcal{C}(c))} p_m \frac{\tau_m}{1 - \tau_m} \Big) + \Xi,
\end{equation} 
where we use the definition of the labor skill wedge (\ref{e:optimal_wedges}) and footnote \ref{eq:footnotewedge}.


Comparing the costs and the benefits of taxes is the key insight of the classic ABC formula and the analysis of \citet{Diamond:1998} and \citet{Saez:2001}. In the classic unidimensional case, these costs and benefits are exactly equated for each of the skill levels. Our theorem shows that for the multidimensional tax case with bunching the logic of the ABC formula applies as the costs and the benefits of the taxes are compared. However, those are not necessarily equated at each skill level. Instead, our optimal tax condition (\ref{e:stoch_dom}) considers the benefits and the costs of the entire schedule of taxes and states that the entire schedule of benefits of taxes should second-order stochastically dominate the entire schedule of distortions $-$ showing the non-local nature of the problem with multidimensional skills in the regions with bunching. Our formula applies both to the regions with and without bunching and, in the latter case reduces to equating the costs and the benefits of distortions at each skill level $-$ thus making it a local problem for the regions without bunching.\footnote{In Appendix \ref{a:ssd}, we develop the connection between our general optimal tax conditions and the classic ABC formula. Our optimal taxation condition as stochastic dominance is also related to the sweeping operator in \citet{Rochet:1998}. More specifically, the existence of a version of the sweeping operator can be established by using a variation of the Strassen Theorem (see \citet{Shaked:2007}, Theorem 4.A.5). The optimal taxation formula goes beyond existence of such an operator by further relating the entire schedule of costs to the entire schedule of benefits of optimal taxes.}  


\subsection{Global Optimal Tax Formula}

We next provide an optimal tax formula for the multidimensional taxation problem as an equality. This representation also connects to the optimal tax formulas in unidimensional taxation problems which are derived as equality measuring the marginal costs and benefits of taxation \citep{Mirrlees:1971,Diamond:1998,Saez:2001}. The main difference with these results is that the optimal tax formula in our multidimensional taxation problem explicitly accounts for global incentive constraints.  

\begin{theorem*}{\textit{Global Optimal Tax Formula}}. \label{s:p_equality}
Suppose the optimal allocation $(c,x)$, density function, and assignment are all continuously differentiable. Then,
\begin{equation}
\partial_{p_c} \Big( \frac{\pi}{u'(\mathcal{C}(c))} p_c \frac{\tau_c}{1 - \tau_c}  \Big) + \partial_{p_m} \Big(  \frac{\pi}{u'(\mathcal{C}(c))} p_m \frac{\tau_m}{1 - \tau_m} \Big) = \pi \left( \frac{1}{u'(\mathcal{C}(c))} - \lambda \right) - \Delta M(p), \label{e:general_abceq}
\end{equation}
where $M(p)$ is a positive semidefinite matrix that enforces the convexity of the indirect utility function, and $\Delta M(p) = \sum\limits_{i,j} \frac{\partial^2}{\partial p_{i}\partial p_{j}}M_{ij}(p)$.
\end{theorem*}

\vspace{0.3 cm}
\noindent The full proof is in Appendix \ref{a:globaloptimaltax} and here we provide a sketch of the proof. First, we use the definition of the indirect utility function (\ref{e:indirect_utility}) to reformulate the planner problem as directly choosing an indirect utility function to minimize the resource cost of providing welfare. For the indirect utility function to be globally incentive compatible, the reformulated planning problem is constrained by the condition that the indirect utility function is convex and decreasing in worker type $p$ following the characterization in Lemma \ref{lemma:p_convexity}. 

Second, the indirect utility function $u(p)$ being convex is equivalent to its Hessian being positive semidefinite, $H(u) \succeq 0$ for all worker types $p$, which in turn is equivalent to:
\begin{equation}
v^T H(u) v \geq 0 , 
\end{equation}
for all vectors $v \in \mathbb{R}^2$. These inequalities are an infinite series of constraints parameterized by the vectors $v$ for each worker type $p$. For each of these constraints, we introduce a multiplier $\lambda(v, p) \geq 0$ and include these constraints into the Lagrangian for the planning problem. 

Third, we establish (Lemma \ref{l:convexity}) that one can represent the constraint that the indirect utility function has to be convex as a matrix condition by introducing a positive semidefinite Kuhn-Tucker matrix $M(p)$ for each worker $p \in P$. Instead of considering the infinite series of constraints for each worker $p$, a single positive semidefinite matrix $M(p)$ induces convexity of the indirect utility function. Upon integration by parts, this restriction appears as a modified social welfare weight $\omega(p) = 1+\frac{\Delta M(p)}{\lambda\pi}$. In summary, the main contribution of the convexity constraint to the planning problem is modifying the social welfare weights through a convexity correction. Finally, we derive in Appendix \ref{a:ocp2} and Appendix \ref{a:ocp3} how this modified social welfare weight translates into the optimal tax condition (\ref{e:general_abceq}).

\vspace{0.4 cm}
\noindent We now discuss in more detail how the optimal tax condition applies in regions without bunching. Specifically, we consider the domain where the indirect utility function is strongly convex and, therefore, there is no bunching.  

The main difficulty in analyzing bunching in the multidimensional case is that the possible indirect utility perturbations $\hat{u}$ are required to be convex. The convexity of perturbations thus acts as an additional constraint on the entire tax schedule. Without bunching, the perturbation argument is straightforward to construct and leads to equating of cost and benefits of taxes at each skill level. Intuitively, if the underlying utility function is strongly convex, a small enough additive perturbation preserves convexity. As a result, the optimal tax condition (\ref{e:general_abceq}) in Theorem \ref{s:p_equality} applies with the convexity correction $\Delta M=0$ at the types where there is no bunching.

\begin{corollary}{\textit{Multidimensional Optimal Tax Formula without Bunching}.}\label{p:el}
If the indirect utility function is strongly convex for a worker $p$, then:
\begin{equation}
\pi \left( \frac{1}{u'(\mathcal{C}(c))} - \lambda \right) = \partial_{p_c} \left( \frac{\pi}{u'(\mathcal{C}(c))} p_c \frac{\tau_c}{1 - \tau_c}  \right) + \partial_{p_m} \left(  \frac{\pi}{u'(\mathcal{C}(c))} p_m \frac{\tau_m}{1 - \tau_m} \right) .\label{e:euler}
\end{equation}
\end{corollary}

\vspace{0.4 cm}
\noindent The proof is in Appendix \ref{a:el}. In order to provide intuition for Corollary \ref{p:el}, and to connect our expression to the existing literature, we also write this condition in the original worker type coordinates $\alpha$:
\begin{equation}
\phi(\alpha) \left( \lambda - \frac{1}{u'(c(\alpha))} \right) = \frac{1}{\rho} \partial_{\alpha_c}\left( \frac{\phi(\alpha)}{u'(c(\alpha))} \alpha_c \frac{\tau_c}{1 - \tau_c}  \right) + \frac{1}{\rho} \partial_{\alpha_m} \left( \frac{\phi(\alpha)}{u'(c(\alpha))} \alpha_m \frac{\tau_m}{1 - \tau_m} \right) \label{e:eulera},
\end{equation}
which is the same form as derived in \citet[p. 23]{Kleven:2006}, \citet{Lehmann:2021}, and \citet{Golosov:2022}. The left-hand side captures the marginal benefit of increasing taxes, lowering the resource cost by taxing worker $\alpha$ at the cost $\lambda$ of tightening the promise keeping condition, and where $\phi$ denotes the density function over the types $\alpha$. At an optimum, the marginal benefit of increasing taxes is equated to the marginal distortionary cost of increasing taxes, which is given by the right-hand side. The right-hand side captures the change in labor distortions inversely weighted by the marginal utility of consumption. Distortionary costs of taxation scale with the elasticity of labor supply, which is governed by $\rho$. When the supply of tasks is elastic (low $\rho$), marginal distortionary costs are large. When the supply of tasks is inelastic (high $\rho$), marginal distortionary costs are small. All else equal, if the marginal utility from consumption is low, $\lambda < 1 / u'(c(\alpha))$, for high-skill workers, the labor skill distortion decreases with an increase in either cognitive or manual skills. When more workers are affected by a change in the skill distortions, or when the promise keeping constraint is tight, marginal labor distortions change more rapidly.

Finally, we provide a converse to Corollary \ref{p:el} that allows to determine the regions of bunching.

\begin{proposition}{\textit{Identifying Bunching}.} \label{p:bunch}
If equation (\ref{e:euler}) does not hold for a worker type $p$, then this worker is bunched.
\end{proposition}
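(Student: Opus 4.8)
The plan is to establish \Cref{p:bunch} as essentially the contrapositive of \Cref{p:el}, bridged to the notion of bunching through \Cref{l:strconvex}. Recall that \Cref{p:el} is a one-directional implication: strong convexity of the indirect utility function at a worker $p$ forces the pointwise optimality condition (\ref{e:euler}) to hold at $p$. Its contrapositive is precisely that a violation of (\ref{e:euler}) at $p$ rules out strong convexity at $p$. The real work is then to upgrade this pointwise statement into the neighborhood statement that the second part of \Cref{l:strconvex} requires in order to conclude that $p \in \mathcal{B}$.

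First I would record that, under the maintained regularity that the optimal allocation $(c,x_s)$, the density $\pi$, and the assignment $z$ are continuously differentiable, both sides of (\ref{e:euler}) are continuous functions of the type $p$: the left-hand side $\pi(1/u'(\mathcal{C}(c))-\lambda)$ is continuous since $c$ is, and the right-hand side is a sum of first-order partial derivatives of continuously differentiable objects, hence continuous. Consequently the difference between the two sides of (\ref{e:euler}) is itself continuous in $p$.

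Next, suppose (\ref{e:euler}) fails at $p$, so this continuous difference is nonzero at $p$. By continuity there is an open neighborhood $N$ of $p$ on which the difference remains nonzero; that is, (\ref{e:euler}) fails at \emph{every} $q \in N$. Applying the contrapositive of \Cref{p:el} pointwise on $N$, the indirect utility function cannot be strongly convex at any $q \in N$, since strong convexity there would force (\ref{e:euler}) to hold at $q$. Hence the indirect utility function is not strongly convex at all points of the neighborhood $N$ of $p$, and the second statement of \Cref{l:strconvex} then delivers that worker $p$ is bunched.

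I expect the main obstacle to be exactly this passage from the single-point violation supplied by the contrapositive of \Cref{p:el} to the neighborhood-wide violation demanded by \Cref{l:strconvex}; it is handled entirely by continuity of the expressions entering (\ref{e:euler}), which is where the continuous-differentiability hypotheses carried over from the General ABC Formula theorem are genuinely used. Everything else is a direct chaining of the two earlier results, with \Cref{p:el} furnishing the ``strong convexity $\Rightarrow$ Euler equation'' direction and \Cref{l:strconvex} furnishing the converse link ``failure of strong convexity on a neighborhood $\Rightarrow$ bunching.''
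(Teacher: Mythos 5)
Your proof is correct and takes essentially the same route as the paper's: the contrapositive of \Cref{p:el}, a continuity argument showing that a violation of (\ref{e:euler}) at $p$ — and hence failure of strong convexity — extends to an entire neighborhood, and then the second part of \Cref{l:strconvex} to conclude $p \in \mathcal{B}$. The paper organizes the neighborhood step as a contradiction (a sequence of points with non-degenerate Hessian converging to $p$ would, by continuity of the Euler--Lagrange equation, force (\ref{e:euler}) to hold at $p$), but that is the same continuity argument you give directly via openness of the set where (\ref{e:euler}) fails.
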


\noindent Proposition \ref{p:bunch} thus provides a test to identify bunching. Whenever equation (\ref{e:euler}) is violated, the worker is bunched. We prove Proposition \ref{p:bunch} in Appendix \ref{pf:bunch}. By the contrapositive to Corollary \ref{p:el} it follows that when equation (\ref{e:euler}) does not hold, the indirect utility function is not strongly convex, meaning that the Hessian matrix is degenerate for worker $p$. We show that the Hessian matrix is also degenerate for all workers within the neighborhood of $p$, which we show is equivalent to worker $p$ being bunched.

\subsection{Legendre Linearization} \label{s:legendre}

In this section, we discuss the main technique that enables the numerical solution of our problem. Specifically, we transform our problem into a linear problem using Legendre transformations for convex functions that translates convex functions into the upper envelopes of their tangent lines. In order to explain the Legendre transform, and show its importance, we use the resource cost of providing consumption utility $\mathcal{C}$ as an example. 


A convex function exceeds all tangent lines. For any consumption utility $c$, and for any point of tangency $a$: 
\begin{equation}
\mathcal{C}(c) \geq \mathcal{C}(a) + (c - a) \mathcal{C}'(a) = \varphi c - \mathcal{C}^*(\varphi),  \label{e:c_constraint_inequality} 
\end{equation}
where the equality follows by parameterizing the tangent lines with their slope $\varphi := \mathcal{C}'(a)$ and by letting $\mathcal{C}^*(\varphi) := - \mathcal{C}(a) + a \mathcal{C}'(a)$ for $a = {\mathcal{C}'}^{-1} (\varphi)$. The function $\mathcal{C}^*$ is the Legendre transform for the resource cost of providing consumption utility $\mathcal{C}$. Since a convex function exceeds all its tangent lines, and since the function value equals the value of the tangent line at the point of tangency:
\begin{equation}
\mathcal{C}(c) = \max_{\varphi \geq 0} \; \varphi c - \mathcal{C}^*(\varphi). \label{e:lt_c} 
\end{equation}
The Legendre transformation converts the convex resource cost of providing consumption utility on the left side of (\ref{e:lt_c}) into a family of linear constraints on the right. The family of linear constraints is parameterized by the slopes of the tangent lines of the cost function. Since the resource cost increases with consumption utility, the slopes of the tangent lines are positive, or $\varphi \geq 0$. 




The previous steps apply for any convex function, allowing us to use the same argument to transform the resource cost of providing work disutility into a family of linear constraints:
\begin{equation}
\mathcal{X}(x_s) = \max_{\psi_s \leq 0} \; \psi_s x_s - \mathcal{X}^*(\psi_s) \label{e:lt_x},
\end{equation}
for each skill $s \in \mathcal{S}$. An increase in production disutility increases production and therefore lowers resource costs. The resource cost of production disutility is decreasing, implying negative slopes of the tangent lines, or $\psi_s \leq 0$.


To summarize, the transformed planning problem is to minimize the resource cost of providing utilitarian welfare $\mathcal{U}$: 
\begin{equation}
\int \Big( \max_{\varphi(p) \geq 0} \big( \varphi(p) c(p) - \mathcal{C}^*(\varphi(p)) \big) + z(p) \sum_{s} \max_{\psi_s(p) \leq 0} \big( \psi_s(p) x_s(p) - \mathcal{X}^*(\psi_s(p)) \big) \Big) \pi (p) \text{d} p \label{e:resources_original4}
\end{equation}
subject to incentive constraints (\ref{e:linear_ic}) for all workers $(p,q) \in P \times P$, and the linear promise keeping condition (\ref{e:promise_keeping_linear}).\footnote{In \Cref{pf:planner_duality}, we show this problem is equivalent to maximizing utilitarian welfare subject to the resource constraint, and the incentive constraints. In \Cref{s:transformed_planner} we establish how to derive the stochastic dominance condition and the general optimal tax formula directly from the transformed problem.}

\vspace{0.4 cm}
\noindent \textbf{Numerical Approach}. The main insight of this analysis is that Legendre transform enables us to translate the planning problem into a linear problem (see Appendix \ref{a:numerical_approach} for more detail). This is the reason why we are able to solve the model for a total of 40 thousand worker types, with 200 types in both the cognitive and the manual dimension, and a total of 1.6 billion incentive constraints. Importantly, a large number of types and numerical precision is not merely a technical and computational curiosity, it is essential to characterize the regions and nature of bunching. In addition, we use two other significant steps to reduce the number of effective incentive constraints.

First, we consider only a small set of incentive constraints by adding incentive constraints between two worker types only if the distance between them is small.\footnote{\citet{Oberman:2013} shows that the solution to the problem with only local constraints provides a reasonable initial guess.} We then use an iterative procedure to update the set of incentive constraints. On each step, we add all violated incentive constraints to the problem.\footnote{After the final step, the candidate solution satisfies all constraints to the strictly convex optimization problem and hence is the unique solution. In practice, we always obtain the same solution for different initial conditions.} With 40 thousand types, this procedure allows to reduce the number of incentive constraints to about 4 million constraints instead of 1.6 billion. Second, an important step that helps us reduce the number of incentive constraints is that we do not need to consider reducible incentive constraints (see \Cref{a:ic}). This observation additionally reduces the number of constraints by a factor of two. In Appendix \ref{a:numerical_approach} we further prove the accuracy of the approximate planner problem and describe the algorithm that we use to characterize the numerical solution. We finally note that without introducing Legendre transforms the objective is nonlinear. Currently, even the state-of-the-art nonlinear solvers cannot handle the characterization of the solution even for small numbers of types.

\section{Positive Economy} \label{s:positive}

We describe and characterize an equilibrium in a positive model of workers with multidimensional skills sorting with heterogeneous firms.

\vspace{0.4 cm}
\noindent Every firm $z$ takes wage schedule $w$ as given and chooses two workers to solve:
\begin{equation}
\Omega(z) = \max_{x_{1},x_{2}} \; y (x_1,x_2,z) - w(x_1) - w(x_2) . \label{e:firm_problem}
\end{equation}
We define the surplus $S$ as output minus payments to the workers and the firm: 
\begin{equation}
S(x_1,x_2,z) = y (x_1,x_2,z) - w(x_1) - w(x_2) - \Omega(z) .
\end{equation}
Firm output cannot exceed payments to its workers and owner, that is, $S(x_1,x_2,z) \leq 0$ for any triplet $(x_1,x_2,z)$.

Every worker takes the wage schedule $w$ as given and chooses their cognitive and manual task inputs $x$ to solve: 
\begin{equation}
\max_{x_c,x_m} \; u(c) - v \Big( \frac{x_c}{\alpha_c} \Big) - v \Big( \frac{x_m}{\alpha_m} \Big) \label{e:worker_problem}
\end{equation}
subject to the budget constraint $c = ( 1 - \tau ) w(x)$, where $w(x) = w(x_c,x_m)$ is the wage as a function of cognitive and manual inputs, and the disutility from work is given by (\ref{e:disutility}). The government taxes earnings at a rate $\tau$ to finance public expenditures $G$ that are not valued by workers.

The resource constraint is given by:
\begin{equation}
\int y(x_1,x_2,z) \text{d} \gamma(x_1,x_2,z)  = \int c(\alpha) \text{d} \Phi(\alpha) + \int \Omega(z) \text{d} F_z(z) + G . \label{e:resource_constraint}
\end{equation}
Total production, $\int y(x_1,x_2,z) \text{d} \gamma(x_1,x_2,z)$, equals output distributed to workers, $\int c(\alpha) \text{d} \Phi(\alpha)$, to firms $\int \Omega(z) \text{d} F_z(z)$, and to public expenditures $G$.

\vspace{0.4 cm}
\noindent \textbf{Equilibrium}. Given fiscal policy $(\tau,G)$, an equilibrium is a firm value function $\Omega$, a wage schedule $w$, a worker input distribution $F_x$, a feasible assignment $\gamma$, and an allocation $\{ (c(\alpha),x_c(\alpha),x_m(\alpha) )\}$ such that firms solve their profit maximization problem (\ref{e:firm_problem}), workers solve the worker's problem (\ref{e:worker_problem}), the government budget constraint is satisfied $G = \tau \int w(x) \text{d} \Phi(\alpha)$, and the resource constraint (\ref{e:resource_constraint}) is satisfied.

\vspace{0.4 cm}
\noindent The equilibrium assignment is the assignment that maximizes aggregate output, that is, solves the primal problem, while the equilibrium wages $w$ and firm value function $\Omega$ solve the corresponding dual problem. The characterization of the equilibrium assignment, wage schedule, and firm value function through primal and dual problems is  discussed for completeness in Appendix \ref{p:equilibrium_transport}.

\subsection{Characterizing Equilibrium} 

We note that solving for the equilibrium assignment in the positive economy follows the same steps as solving for the planner assignment (\ref{e:assignment}) in Section \ref{s:planner_assignment}. It follows from Proposition \ref{prop:p_assignment} that the equilibrium features self-sorting between workers and coworkers, and positive sorting between team quality and firm project values. 

In order to characterize wages and firm values, we solve the dual transport problem. Since the surplus is negative for any triplet in equilibrium, $S(x_1,x_2,z) \leq 0$, and since the aggregate resource constraint (\ref{e:resource_constraint}), the government budget constraint and the household budget constraints hold in equilibrium, the surplus equals zero almost everywhere with respect to the equilibrium assignment, so $w(x_1) + w(x_2) + \Omega(z) = y(x_1,x_2,z)$. Output is distributed to the owner and to the workers. We use this condition to establish further properties of the firm value function and the wage schedule in \Cref{a:wage_effective}.

In \Cref{a:wage_effective}, we first note that wages are only a function of effective worker skills $X = x^2_c + x^2_m$, and we define $h(X)$, the firm's total wage bill, as $h(X) =2w(x)$. By applying standard arguments from optimal transport, wages are convex in effective worker skill $X$. In other words, small differences in effective worker skill translate into increasingly large differences in earnings.\footnote{The hedonic pricing condition $z = h'(X)$ delivers superstar effects in our model as well as a number of other assignment models (see, for example, \citet{Rosen:1981}, \citet{Gabaix:2008}, \citet{Tervio:2008}, \citet{Scheuer:2017}, and \citet{BTZ:2021}).} Moreover, the firm value function is the Legendre transform of the wage bill, $\Omega = h^*$. As a result, $h(X) + h^*(z) = z X$. 

\vspace{0.4 cm}
\noindent In our quantitative analysis, we infer the distribution of project values $F_z$ using earnings data. The key is to show that there exists a firm project $z$ such that $h(X) + h^*(z) = z X$ for any pairing $(z,X)$. When the wage bill $h$ is continuously differentiable $h(X) + h^*(z) = z X$ implies $z = h'(X)$. That is, the derivative of the firm's wage bill equals its project value. Given an increasing and convex wage bill $h$, and effective skills $X$, this condition identifies increasing values for firm productivity $z$. 

We apply this logic to the parametric continuously differentiable function $h(X) = X^\eta + 2 \zeta$ where $\eta \geq 1$ governs the convexity of wages and $\zeta$ captures the lowest wage per worker. Using the derived fact that $z = h'(X)$, we can relate the distribution of firm projects $z$ to the convexity parameter $\eta$ of the wage bill. If $\eta=1$, there is no dispersion in firm productivity. We formalize this in Lemma \ref{prop:equilibrium_positive}.


\begin{lemma}\label{prop:equilibrium_positive}
For some firm distribution $F_z$ there exists an equilibrium with ($i$) a self-sorted assignment, and ($ii$) a wage function:  
\begin{equation}
w(x) = \frac{1}{2} \big( x^2_{c} + x^2_{m} \big)^\eta + \zeta . \label{e:eq_wages}
\end{equation}
\end{lemma}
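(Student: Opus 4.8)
The plan is to proceed by explicit construction, taking the target wage schedule as given and building a consistent firm distribution $F_z$ around it. I would first fix the wage bill $h(X) := 2w(x) = X^\eta + 2\zeta$, so that the per-worker wage is $w(x) = \frac{1}{2}(x_c^2 + x_m^2)^\eta + \zeta$ as claimed in (\ref{e:eq_wages}). Since $\eta \geq 1$, the function $h$ is continuously differentiable, increasing, and convex in $X$, which is precisely the regularity that the preceding derivation of $\Omega = h^*$ and of the identity $h(X) + h^*(z) = zX$ requires. I would then define the firm value function as the Legendre transform $\Omega := h^*$.

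Next I would pin down the remaining equilibrium objects. Solving the worker's problem (\ref{e:worker_problem}) under this wage schedule determines each worker's task inputs $x(\alpha)$, hence the input distribution $F_x$ and the induced distribution of effective skills $X = x_c^2 + x_m^2$. I would then take $F_z$ to be the pushforward of this distribution of $X$ under the increasing map $z = h'(X) = \eta X^{\eta-1}$; this realizes the ``for some $F_z$'' in the statement. With $F_z$ so chosen, I define the assignment $\gamma$ by self-matching of workers with coworkers together with positive sorting between the effective skill $X$ and the project $z$, so that each $X$ is paired exactly with $z = h'(X)$.

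To verify this is an equilibrium I would invoke Lemma \ref{c:equilibrium_transport} and check the primal and dual in turn. The assignment $\gamma$ solves the primal problem (\ref{e:assignment2}) directly by Proposition \ref{prop:p_assignment}, since it self-matches and sorts positively, and its marginals agree with $F_x$ and $F_z$ by the construction of $F_z$. For the dual, the only nontrivial requirement is that the surplus satisfies $S(x_1, x_2, z) \leq 0$ for every triple, with equality on the support of $\gamma$. Writing $S = z(x_{1c}x_{2c} + x_{1m}x_{2m}) - \frac{1}{2}h(X_1) - \frac{1}{2}h(X_2) - h^*(z)$ and bounding the inner product by $x_{1c}x_{2c} + x_{1m}x_{2m} \leq \frac{1}{2}(X_1 + X_2)$, the surplus is at most $\frac{1}{2}(zX_1 - h(X_1)) + \frac{1}{2}(zX_2 - h(X_2)) - h^*(z)$, and each bracketed term is bounded above by $h^*(z)$ via the Fenchel--Young inequality $zX - h(X) \leq h^*(z)$. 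Hence $S \leq 0$, with equality exactly when $x_1 = x_2$ and $z = h'(X)$, i.e. on the support of $\gamma$. I would close by confirming that the government budget $G = \tau \int w(x) \text{d}\Phi$ and the resource constraint (\ref{e:resource_constraint}) hold, which follows because the zero-surplus condition on the support yields $w(x_1) + w(x_2) + \Omega(z) = y(x_1,x_2,z)$ almost everywhere and the consumption accounting $c = (1-\tau)w(x)$ then integrates consistently.

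The main obstacle I anticipate is not the dual feasibility, which reduces cleanly to the two inequalities above, but rather ensuring the worker's problem (\ref{e:worker_problem}) admits a well-defined solution generating a sensible distribution of $X$: because $w$ is convex in $X$ while the disutility $v(\ell) = \kappa \ell^\rho$ is convex in $\ell$, the worker's objective need not be globally concave, so I would lean on the restriction $\rho > 2$ together with the concavity of $u$ to guarantee that the first-order conditions identify a genuine interior maximum and that the resulting map from skills $\alpha$ to effective skill $X$ is regular enough for the pushforward defining $F_z$ to be a valid distribution. A secondary point to verify is the degenerate limit $\eta = 1$, where $h' \equiv 1$, so the map collapses and $F_z$ is a point mass, consistent with the earlier observation that there is then no dispersion in firm productivity.
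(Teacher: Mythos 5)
Your overall architecture is sound, and your firm-side argument is a genuinely different (and valid) route from the paper's. The paper never invokes Fenchel--Young directly: it shows firm optimality by a midpoint improvement argument $-$ a firm employing $x_1 \neq x_2$ does strictly better hiring two copies of $\tfrac{1}{2}(x_1+x_2)$, since $\big(\tfrac{x_{1c}+x_{2c}}{2}\big)^2 + \big(\tfrac{x_{1m}+x_{2m}}{2}\big)^2 \geq x_{1c}x_{2c}+x_{1m}x_{2m}$ raises output while convexity of $t \mapsto t^\eta$ lowers the wage bill $-$ and then reads off $z = h'(X)$ from the first-order condition. Your verification that $S \leq 0$ globally, via $x_{1c}x_{2c}+x_{1m}x_{2m} \leq \tfrac{1}{2}(X_1+X_2)$ and $zX - h(X) \leq h^*(z)$, with equality exactly at $x_1 = x_2$, $z = h'(X)$, establishes the same dual feasibility and complementary slackness in a cleaner optimal-transport style, and correctly routes the conclusion through \Cref{c:equilibrium_transport} and \Cref{prop:p_assignment}. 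That part of your proposal I would accept as written.

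The genuine gap is the worker's problem, and you flagged it yourself but then proposed the wrong repair. Neither $\rho > 2$ alone nor concavity of $u$ rescues concavity of the objective in (\ref{e:worker_problem}): in the paper's quantitative specification $u(c) = c$ is linear, so concavity of $u$ buys nothing, and with $w$ convex in $x$ the objective in the original variables $(x_c, x_m)$ is a difference of convex functions about which $\rho > 2$ by itself says nothing. The paper's proof resolves this with a change of variables you are missing: setting $\tilde{x}_s := x_s^\rho$ makes the disutility terms \emph{linear}, $-\tilde{x}_c/p_c - \tilde{x}_m/p_m$, and reduces everything to showing that the transformed wage term $\tilde{w}(\tilde{x}) = \big(\tfrac{1}{2}\sum_s \tilde{x}_s^{2/\rho}\big)^\eta$ is strictly concave, which holds precisely because the CES aggregate with exponent $2/\rho < 1$ composed with $t \mapsto t^\eta$ is concave when $\eta \leq \rho/2$. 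That parameter restriction $\eta \leq \rho/2$ (satisfied in the calibration, $1.1 \leq 1.4$) is the substantive condition your proposal never identifies; without it the worker's first-order conditions need not characterize a maximum, uniqueness of the solution fails, and the distribution $F_x$ $-$ and hence your pushforward construction of $F_z$ under $z = h'(X)$ $-$ is not well defined. Supplying the $\tilde{x}_s = x_s^\rho$ transformation and the $\eta \leq \rho/2$ condition would close the gap and make your proof complete.
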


\noindent The proof is in Appendix \ref{proof:equilibrium_positive}. The idea is to show there is a firm distribution $F_z$ so that given wage schedule (\ref{e:eq_wages}), workers and firms both optimize in a self-sorting equilibrium. Given the firm technology (\ref{e:firm_tech}) and the wage equation (\ref{e:eq_wages}), firm profits decrease in the difference between their workers' skills. In order to minimize output losses, firms thus hire pairs of identical workers. Given wage equation (\ref{e:eq_wages}), the worker problem (\ref{e:worker_problem}) has a unique solution, so that the distribution of worker inputs $F_x$ is uniquely determined by the worker problem. Finally, we map the firm distribution that induces (\ref{e:eq_wages}) as an equilibrium wage equation using $z = h'(X)$. We use these steps to pointwise identify the worker skill distribution as we show in Section \ref{s:quant}.

\section{Quantitative Analysis}\label{s:quant}

In this section we infer the distribution of cognitive and manual talents $\Phi$. The inference of the underlying distributions of skills, a central input for the calculation of the optimal tax formula, generalizes the approach of the unidimensional skills in \citet{Saez:2001} to a labor market model with multidimensional skills, coworker and firm effects. We also calibrate the parameter $\rho$ that governs the curvature of disutility with respect to effort.

\subsection{Data Sources}

We use data from the American Community Survey (ACS). We consider individuals between 25 and 60 years of age. The final sample from the ACS includes almost 16 million individuals between 2000 and 2019. For all our results, we use sample weights provided by the survey.  Our measure of labor income is wage and salary income before taxes over the past 12 months.\footnote{This measure includes wages, salaries, commissions, cash bonuses, tips, and other money income received from an employer. We drop individuals with earnings below a threshold to focus on workers who are attached to the labor market. This minimum is one-half of the federal minimum wage times 13 weeks at 40 hours per week (as in \citet{Guvenen:2014}).} 

The ACS contains occupational information for every worker. We combine a worker with the task intensity for their occupation using O*NET task measures from \citet{Acemoglu:2011}. Our cognitive measure is the average of their cognitive measures, and our manual measure is the average of their manual measures. Our resulting scores are approximately normally distributed across occupations.

For identification, we first construct a measure of relative task intensity by occupation. To obtain aggregated task production levels we use a Cobb-Douglas technology to map worker subtasks into final task production similar to \citet{Kremer:1993}, \citet{Acemoglu:2011} and \citet{Deming:2017}:  
\begin{equation}
q_s = \exp \bigg( \frac{1}{|\mathcal{V}|}\sum\limits_{\nu \in \mathcal{V}} \log q_{s \nu} \bigg) . \label{e:cobb_task}
\end{equation}
Letting $\log q_{s \nu}$ be the $Z$-score by subtask $\nu$, we obtain cognitive and manual task production levels. Since our aggregated cognitive and manual measure are approximately normally distributed, task production levels are approximately lognormal. We now make an identification assumption that the relative task input is equal to the relative task production level, $x_m/x_c = q_m/q_c$, which is hence also approximately lognormally distributed across occupations.

   \begin{figure}[!t]
    \begin{subfigure}{0.34\linewidth}
    \centering
        \includegraphics[trim=0.0cm 0.0cm 0.0cm 0.0cm, width=1\textwidth,height=0.22\textheight]{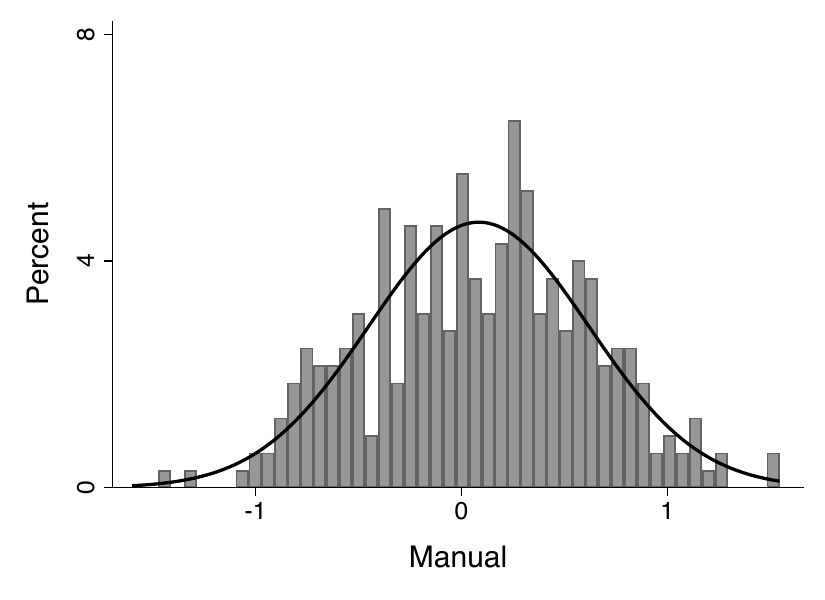}
    \end{subfigure}%
    \begin{subfigure}{0.34\linewidth}
    \centering
        \includegraphics[trim=0.0cm 0.0cm 0.0cm 0.0cm, width=1\textwidth,height=0.22\textheight]{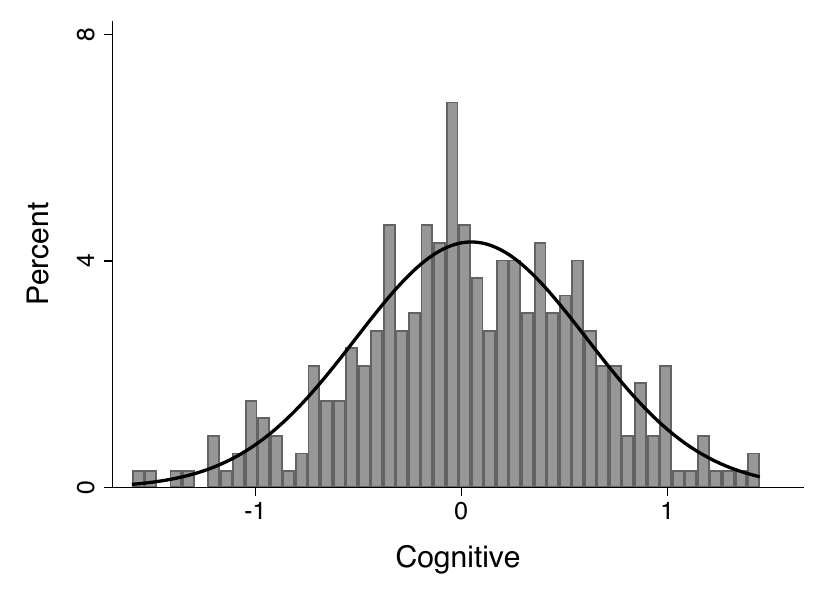}
    \end{subfigure}
    \begin{subfigure}{0.34\linewidth}
    \centering
        \includegraphics[trim=0.0cm 0.0cm 0.0cm 0.0cm, width=1\textwidth,height=0.22\textheight]{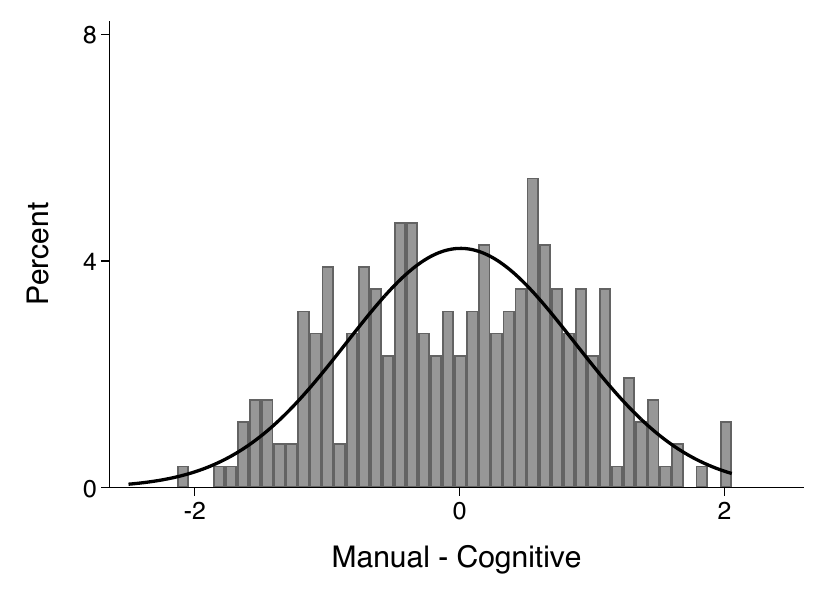}
    \end{subfigure}
    \vspace{-0.2 cm}
    \caption{Task Intensity Across Occupations} \label{f:task_intensity}
{\scriptsize \vspace{.2 cm} Figure \ref{f:task_intensity} shows the distribution of manual and cognitive task production levels across occupations in logs (left and center panel) together with the relative distribution of manual and cognitive task intensity (right panel). Each distribution is well-approximated by a lognormal distribution.}
    \label{fig:manmade}
    \end{figure}

Figure \ref{f:task_intensity} shows the distribution of manual and cognitive task production levels across occupations in logs together with the relative distribution of manual and cognitive task intensity. The first two panels show that the distribution of manual task production levels and the distribution of cognitive task production levels can be described by a lognormal distribution. The right panel shows that the same holds for the relative manual task intensity.

Figure \ref{f:relative_wages} displays the relation between relative task intensity and average earnings across occupations. Earnings are low for occupations with high manual task intensity, such as gardeners and truck drivers, while earnings are high for occupations with high cognitive task intensity such as software developers and actuaries. Moving from the 25th percentile to the 75th percentile in relative manual task intensity decreases earnings from 62 to 35 thousand dollars. 

   \begin{figure}[!t]
    \begin{subfigure}{1\linewidth}
    \centering
        \includegraphics[trim=0.0cm 0.0cm 0.0cm 0.0cm, width=0.62\textwidth,height=0.31\textheight]{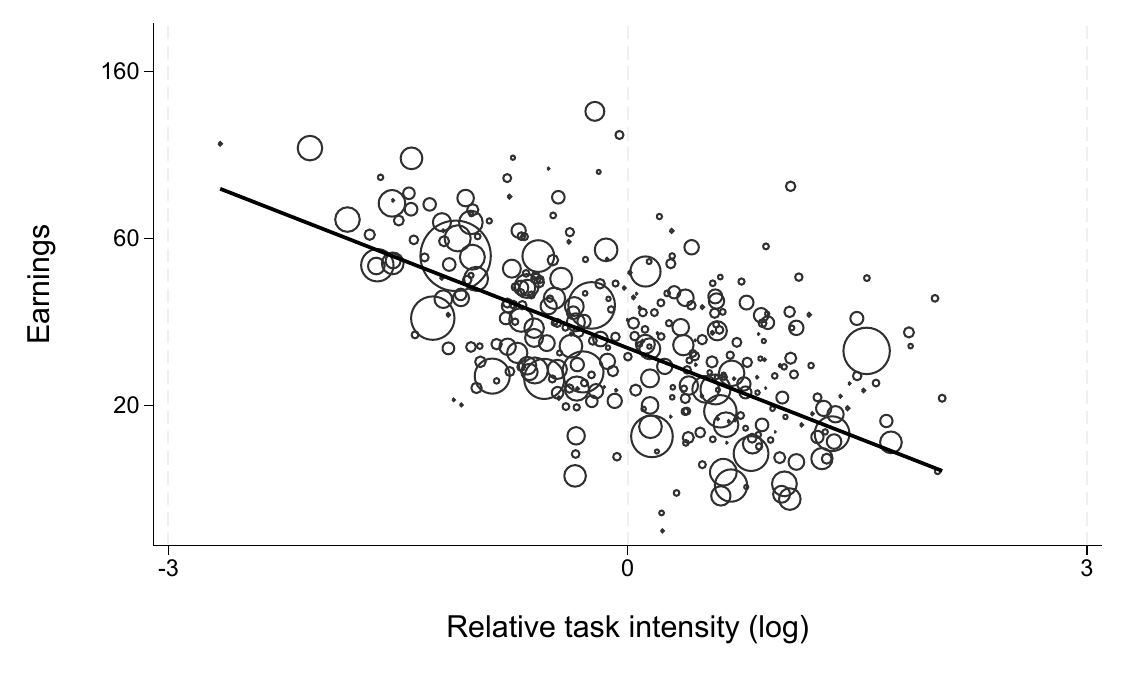}
    \end{subfigure}%
    \vspace{-0.2 cm}
    \caption{Earnings and Relative Task Intensity} \label{f:relative_wages}
{\scriptsize \vspace{.2 cm} \Cref{f:relative_wages} show the relation between average earnings (y-axis, logarithmic scale) and relative task intensity across occupations. Average earnings are decreasing in the relative manual task intensity. The size of each circle corresponds to the occupation's employment share.}
    \end{figure}


\subsection{Calibration}

We now calibrate the positive model. We parameterize fiscal policy and preferences, and infer the underlying multidimensional skill distribution.

The government taxes labor earnings to finance expenditures $G$. If pre-tax earnings are $w$, then taxes are given by $
T(w) = \tau w$. After-tax earnings are thus $( 1 - \tau ) w$, we set $\tau = 0.3$.

Firm heterogeneity governs the convexity of the wage schedule (see Lemma \ref{prop:equilibrium_positive}). We set the curvature parameter for the wage schedule $\eta$ to align the added variation in log wages due to firm heterogeneity with evidence from the literature on variation in log wages due to firm effects. Using the wage equation (\ref{e:eq_wages}), the variation in firm projects multiplies the underlying variation across workers by $\eta^2$. We set $\eta = 1.1$ to attribute 17 percent of the added variation in wages to firm effects. Our target of 17 percent is in line with estimates from the literature.\footnote{For example, \citet{Abowd:2003} find that firm variation makes up 17 percent of the variance in wages while \citet{Song:2019} instead report that firm variation makes up between 8 percent and 12 percent.}




We next discuss the calibration of worker preferences. We use linear preferences with respect to consumption goods, $u(c) = c$, and estimate the parameter governing the curvature of the disutility function to efforts in each task $\rho$. We set $\rho$ such that a regression of log market hours on hourly wages, holding constant the marginal value of wealth, yields a coefficient of 0.55. This target value comes from the meta-analysis of estimates of the intensive margin Frisch elasticity from \citet{Chetty:2012}.

To use estimates for the Frisch elasticity for total hours with respect to hourly productivity to calibrate the curvature of the utility function with respect to effort, we derive this expression within our model. Given the specification for the disutility from work (\ref{e:disutility}), the linear utility from consumption, and the worker technology (\ref{e:worker_tech}), the worker's problem (\ref{e:worker_problem}) is: 
\begin{equation}
\max\limits_{x_c,x_m} \; \frac{1}{2}( 1 - \tau ) ( x_{c}^2 + x_{m}^2 )^\eta - \kappa \Big( \frac{x_c}{\alpha_c} \Big)^\rho - \kappa \Big( \frac{x_m}{\alpha_m} \Big)^\rho .
\end{equation}
The optimality condition to the worker's problem for each task $s \in \mathcal{S}$ is:
\begin{equation}
(1-\tau) \eta \big( 2 w(x) \big)^{\frac{\eta-1}{\eta}} =  \kappa \rho \frac{x_s^{\rho-2}}{\alpha^{\rho}_s} , \label{e:foc_s}
\end{equation}
where $w(x) = \frac{1}{2} ( x^2_{c} + x^2_{m} )^\eta + \zeta $ by wage equation (\ref{e:eq_wages}) with $\zeta$ representing minimum earnings in our data. That is, the marginal consumption utility from supplying extra tasks equals the marginal cost of effort. Taking the ratio of these optimality conditions, this implies that the skill, effort and task intensity ratio are related by:
\begin{equation}
\frac{\alpha_m}{\alpha_c} = \Big( \frac{x_m}{x_c} \Big)^\frac{\rho-2}{\rho} = \Big( \frac{\ell_m}{\ell_c} \Big)^\frac{\rho-2}{2} , \label{e:skill_ratio} 
\end{equation}
where the second equality follows from the worker task technology (\ref{e:worker_tech}).  The marginal rate of substitution between activities, $\big( \frac{\ell_c}{\ell_m} \big)^{\rho-1}$, is equal to the ratio of marginal benefits between activities, $\big( \frac{\alpha_{c}}{\alpha_{m}} \big)^{2} \frac{\ell_{c}}{\ell_{m}}$. Relative efforts are determined by relative skills $\frac{\alpha_{c}}{\alpha_{m}}$. Workers spend more effort on tasks in which they are more talented. 

Using the first-order conditions for effort, and observing that the share of total efforts on each task is constant by (\ref{e:skill_ratio}), we can express the Frisch elasticity of total hours $\ell_c + \ell_m$ as:\footnote{See \Cref{a:worker_problem}.} 
\begin{equation}
\varepsilon = \frac{\partial \log (\ell_c + \ell_m)}{\partial \log z(x)} \bigg\vert_{\lambda} = \frac{\partial \log (\ell_c + \ell_m)}{\partial \log (1 - \tau)} \bigg\vert_{\lambda}  = \frac{1}{\rho - 1} \label{e:frisch_calibration},
\end{equation}
where $\lambda$ is the marginal value of wealth, and $z(x) := w(x) / (\ell_c + \ell_m)$ is productivity per hour. We set $\rho = 2.8$ so that the Frisch elasticity $\varepsilon$ is indeed $0.55$. Finally, we normalize $\kappa = \frac{1}{2\rho}$.
\vspace{0.4 cm}
\noindent \textbf{Skill Distribution}. We now identify the skill distribution pointwise. Using the solution to the worker's problem, together with data on both total earnings and occupational relative task intensity for each worker, we separately identify two sources of worker productivity $(\alpha_c, \alpha_m)$ that rationalize the data as a model outcome. This identification argument is similar to \citet{BK2:2020,BK1:2021} who use explicit solutions for home production models to identify productivity at home and to identify permanent and transitory market productivity using data on consumption, home and market hours.

Using the O*NET task measures, we have information on the relative task intensity for each occupation $\frac{x_m}{x_c}$ and, hence, we identify the relative skills $\frac{\alpha_m}{\alpha_c}$ by equation (\ref{e:skill_ratio}). In order to determine the level of tasks, we use the wage equation (\ref{e:eq_wages}):
\begin{equation}
w(x) = \frac{1}{2} \big( x^2_c +  x^2_m \big)^\eta = \frac{ x^{2 \eta}_c}{2} \bigg( 1 + \Big( \frac{x_m}{x_c} \Big)^2 \bigg)^\eta . \label{e:eq_wages_data}
\end{equation}
Given the skill ratio for an individual's occupation, $\frac{x_m}{x_c}$, and an individual's earnings $w(x)$, this equation uniquely determines the level of cognitive tasks $x_c$, and hence the level of manual tasks $x_m$. By the optimality condition (\ref{e:foc_s}), we identify both cognitive skills $\alpha_c$ and manual skills $\alpha_m$ for each worker. 


\begin{table}[t!]
\def\arraystretch{1.4}%
\begin{center}
\caption{Example of Identification}\label{t:simple_example}
\begin{tabular}{clcccccc}
\hline  \hline
 & & \multicolumn{1}{c}{Relative Task} &  \multicolumn{1}{c}{Wages} & \multicolumn{2}{c}{Task Intensity}   & \multicolumn{2}{c}{Task Skills}   \\
 & & \hspace{0.45 cm}  $x_m/x_c$ \hspace{0.45 cm}  & \hspace{0.45 cm} $ w(x)$ \hspace{0.45 cm} & \hspace{0.45 cm}  $x_m$ \hspace{0.45 cm}  & \hspace{0.45 cm}  $x_c$ \hspace{0.45 cm} & \hspace{0.45 cm}  $\alpha_m^\rho$ \hspace{0.45 cm}  & \hspace{0.45 cm}  $\alpha_c^\rho$ \hspace{0.45 cm}   \\
\hline
1 & \; Baseline		   			  & 		1 & 		1 & 		  1.00     & 		  1.00  & 0.50 &  0.50  \\
2 & \; Task intensity \hspace{0.10 cm} 	   &		3 & 		1 &   		  1.35     &		          0.45  &  0.63 & 0.26  \\ 
3 & \; Wages			  			  &		1 & 		4 & 	  	  2.00     & 		  2.00  &  0.87 & 0.87  \\ 
4 & \; Taxes $\tau = 0.3$	   			  &		1 & 		1 & 	  	  1.00     &  		  1.00  & 0.71 & 0.71 \\ 
5 & \; Firms $\eta = 1.1$	   			   &		1 & 		1 &  		  0.97     & 	           0.97  & 0.42 & 0.42 \\
\hline \hline
\end{tabular}\end{center}
{\footnotesize \Cref{t:simple_example} illustrates the identification of workers' manual and cognitive skills through five examples. We infer higher levels of manual skills with higher manual task intensity (in Row 2), higher earnings (Row 3), higher taxes (Row 4), and with less dispersion in firms' project values (Row 5).}
\end{table}

\vspace{0.4 cm}
\noindent \textbf{Examples}. In order to provide insight into the identification of worker skill heterogeneity, we consider a numerical example. We first consider an economy without taxes $\tau = 0$ and without heterogeneity in firm projects, $\eta = 1$. 

Suppose a worker's occupational relative task intensity is equal to one, $\frac{q_m}{q_c} = \frac{x_m}{x_c} = 1$, and their earnings equal mean earnings, which we normalize to one. By equation (\ref{e:eq_wages_data}), the worker's cognitive task intensity and the worker's manual task intensity are equal to $1$. Using the optimality condition for task inputs (\ref{e:foc_s}), $\alpha_s^\rho = \frac{1}{2}$, implying the worker is equally skilled in both tasks. This worker is presented in the first row of \Cref{t:simple_example}.

Inferred manual skill increases with manual task intensity. Consider some worker with relative manual task intensity equal to three, $\frac{x_m}{x_c} = 3$, and average earnings. By equation (\ref{e:eq_wages_data}), the cognitive task intensity is $x_c=\frac{1}{\sqrt{5}}<1$ and hence the worker's manual task intensity is greater with $x_m = \frac{3}{\sqrt{5}} > 1$. Since $\alpha^{\rho}_s = \frac{1}{2} x_s^{\rho-2}$, it follows that the worker's inferred manual skill increases with relative manual task intensity, while the worker's cognitive skills decreases, as shown in the second row of \Cref{t:simple_example}.

Inferred skill levels increase with earnings. For a worker with a relative task intensity of one, but a high level of earnings, the relative skill intensity is one but the level of each task is greater. Consider a worker earning four times average earnings. By equation (\ref{e:eq_wages_data}), we identify the worker's cognitive task intensity, and therefore the worker's manual task intensity, to be equal to $2$. Using the worker's optimality condition for task inputs (\ref{e:foc_s}), $\alpha_s^\rho = \frac{1}{2} 2^{\rho-2}$, implying that the worker is equally skilled in both tasks, and almost 1.75 times as skilled as a worker in the same occupation earning average earnings. This worker is presented in the third row of \Cref{t:simple_example}.

The presence of taxes does not affect inferred task intensities $x$ but does increase the inferred skill levels $\alpha$. Since the identification of the task intensity is based on pretax earnings (\ref{e:eq_wages_data}), inferred task intensities do not vary with taxes. For $\eta =1$, since the task intensity does not change with taxes, we obtain $\alpha^{\rho}_s = \frac{1}{2(1-\tau)} x_s^{\rho-2}$. When workers are taxed, the marginal benefit from completing tasks is reduced. In order to rationalize the same levels of cognitive and manual task intensity supplied by a worker, it must be less costly for the worker to complete tasks due to increased levels of skills, as shown in the fourth row of \Cref{t:simple_example}.

Finally, increased dispersion in firm project values decreases wage dispersion that is attributed to dispersion in task intensity. Consider the dispersion in firm projects with $\eta > 1$. Reorganizing the wage equation (\ref{e:eq_wages_data}), $x_c = \left(2 w(x) \right)^{\frac{1}{2\eta}} \Big/ \sqrt{1 + \left( \frac{x_m}{x_c} \right)^2}$, shows that higher values of $\eta$ compress the dispersion in task intensity. Further, by combining the first-order condition (\ref{e:foc_s}) with wage equation (\ref{e:eq_wages_data}), we obtain $\alpha^\rho_s \propto w(x)^{\frac{\rho}{2 \eta} - 1}$. An increase in $\eta$ decreases the effective dispersion in skills. Dispersion in firm projects magnifies underlying differences in task intensity due to the positive sorting between workers and projects. Equivalently, small differences in effective worker skills generate large earnings differences.

\begin{table}[t!]
\def\arraystretch{1.4}%
\begin{center}
\caption{Illustration of Identification}\label{t:example}
\begin{tabular}{lcccccc}
\hline  \hline
\multicolumn{1}{l}{Occupation} &  \multicolumn{1}{c}{Relative} &  \multicolumn{1}{c}{Wages} & \multicolumn{1}{c}{Manual}  &  \multicolumn{1}{c}{Cognitive}  &  \multicolumn{1}{c}{Firm} &  \multicolumn{1}{c}{SOC Code}  \\
  & \hspace{0.15 cm}  $\log \frac{q_m}{q_c}$ \hspace{0.15 cm}  & \hspace{0.15 cm} $\mathbb{E} w(x)$ \hspace{0.15 cm} & \hspace{0.15 cm}  $\frac{\alpha_m - \mathbb{E}\alpha_m}{\sigma_{m}}$ \hspace{0.15 cm}  &\hspace{0.15 cm}   $\frac{\alpha_c - \mathbb{E}\alpha_c}{\sigma_{c}}$ \hspace{0.15 cm} &\hspace{0.15 cm}   $\frac{\alpha_z - \mathbb{E}\alpha_z}{\sigma_{z}}$ \hspace{0.15 cm}  \\
\hline
Gardeners 				\hspace{1.7 cm} & \phantom{-}1.7 & \phantom{1}23 & \phantom{-}0.93     & 		  -2.35  & -1.28 &37$-$3010  \\ 
Cashiers				 	 			 &\phantom{-}0.7 & \phantom{1}20 & \phantom{-}0.47     & 		  -1.16  & -1.62 &41$-$2010  \\ 
Police officers 			 				 & 		     -0.1 & \phantom{1}64 & \phantom{-}0.82     & \phantom{-}0.48  & \phantom{-}0.82 &33$-$3050 \\ 
Physicians 	 			 			 &		     -0.2 & 			184 & \phantom{-}1.77     & \phantom{-}1.32 & \phantom{-}3.11 &29$-$1060  \\ 
Chief executives  						 & 		     -2.1 & 			149 & 		  -2.39     & \phantom{-}1.71  & \phantom{-}2.63 &11$-$1010  \\ 
Actuaries 								 &                 -2.7 & 			136 &  		  -3.46     & \phantom{-}1.65  & \phantom{-}2.43 &15$-$2010  \\ 
\hline \hline
\end{tabular}\end{center}
{\footnotesize \Cref{t:example} shows the identification of worker skills for a number of occupations. Holding constant the relative manual skill intensity, high earnings identify high skill levels as seen by comparing the manual and cognitive skills of police officers and physicians. Holding constant earnings, high manual task intensity identifies high manual skills as seen by comparing the skills of gardeners and cashiers.}
\end{table}

\vspace{0.4 cm}
\noindent Having illustrated the identification with examples, we turn to identification using earnings data. \Cref{t:example} illustrates the identification of underlying skills for representative workers in occupations listed in the first column. The second column shows the relative manual task intensity for these occupations from O*NET task measures. The third column shows average earnings of the workers by occupation in the ACS. \Cref{t:example} shows a negative relation between manual task intensity and average earnings by occupation, in line with Figure \ref{f:relative_wages}. 

In order to identify manual and cognitive skills, we use equations (\ref{e:foc_s}), (\ref{e:skill_ratio}) and (\ref{e:eq_wages_data}). First, we establish that higher earnings identify higher levels of skills, everything else equal. Consider an example of police officers and physicians. Since the relative task intensity for police officers and physicians is comparable, their relative skills are comparable by (\ref{e:skill_ratio}). Average earnings of physicians exceed the average earnings of police officers implying a higher level of both cognitive and manual skills for physicians. Indeed, the fourth and fifth column in \Cref{t:example} show that while both physicians and police officers' cognitive and manual talents exceed the population average, $\alpha_s > \mathbb{E}\alpha_s$, the skills of physicians exceed the skills of police officers in both dimensions. 


Second, we consider two occupations with similar wages to show that high manual task intensity identifies high manual skill all else being equal. While the earnings of gardeners and cashiers are similar, gardening is more demanding in manual skills. By equation (\ref{e:eq_wages_data}), the cognitive task requirements of gardeners are lower than the cognitive task requirements for cashiers. By equation (\ref{e:skill_ratio}) it follows that a gardener has more manual skills than a cashier, but less cognitive skills. The fourth and fifth column in \Cref{t:example} displays this pattern.

   \begin{figure}[!t]
    \begin{subfigure}{\linewidth}
    \centering
        \includegraphics[trim=0.0cm 0.0cm 0.0cm 0.0cm, width=0.56\textwidth,height=0.29\textheight]{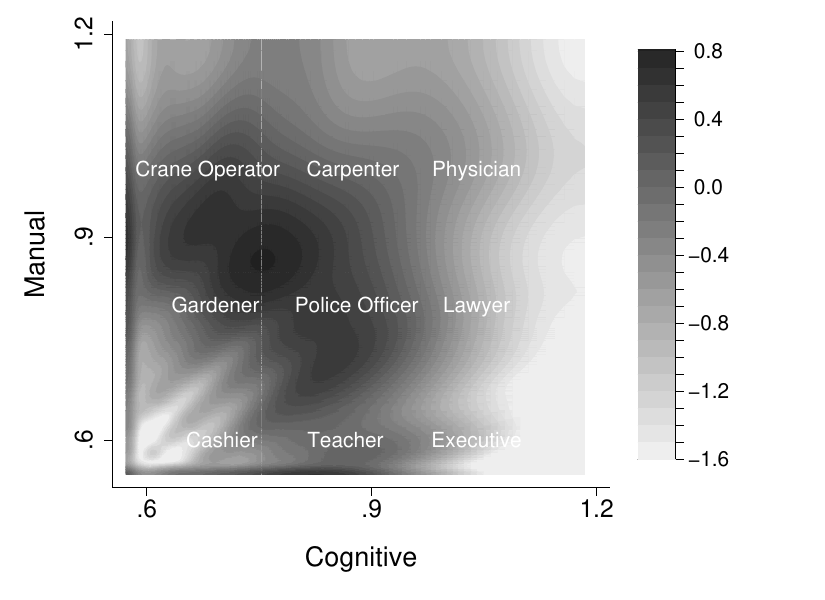}
    \end{subfigure}%
    \vspace{-0.2 cm}
    \caption{Inferred Skill Distribution} \label{f:skill_distribution}
{\scriptsize \vspace{.2 cm} \Cref{f:skill_distribution} shows the inferred worker skill distribution, with bright colors indicating more mass. The panel shows the smoothed distribution of cognitive and manual skills that exactly rationalizes the data which is obtained using data on relative task intensity by occupation and worker earnings, through equations (\ref{e:foc_s}) to (\ref{e:eq_wages_data}). The values are normalized such that one reflects a uniform distribution.}
    \end{figure}

We apply the identification argument to all workers in the ACS to identify their skills. By identifying skills at the worker level, we allow for skill heterogeneity within occupations driven by earnings differences within occupation. As in the example, workers with high earnings have higher cognitive and manual skills than a worker with low earnings in the same occupation. \Cref{f:skill_distribution} shows the resulting distribution of cognitive and manual skills, after 98 percent winsorization and after smoothing the pointwise identified distribution using a kernel density estimation.\footnote{We correct our kernel density estimator at the boundaries of our rectangular type space by reflecting along all boundaries, see, e.g. \citet{Karunamuni:2005}.}

   \begin{figure}[!t]
    \begin{subfigure}{0.49\linewidth}
    \centering
        \includegraphics[trim=0.0cm 0.0cm 0.0cm 0.0cm, width=\textwidth,height=0.25\textheight]{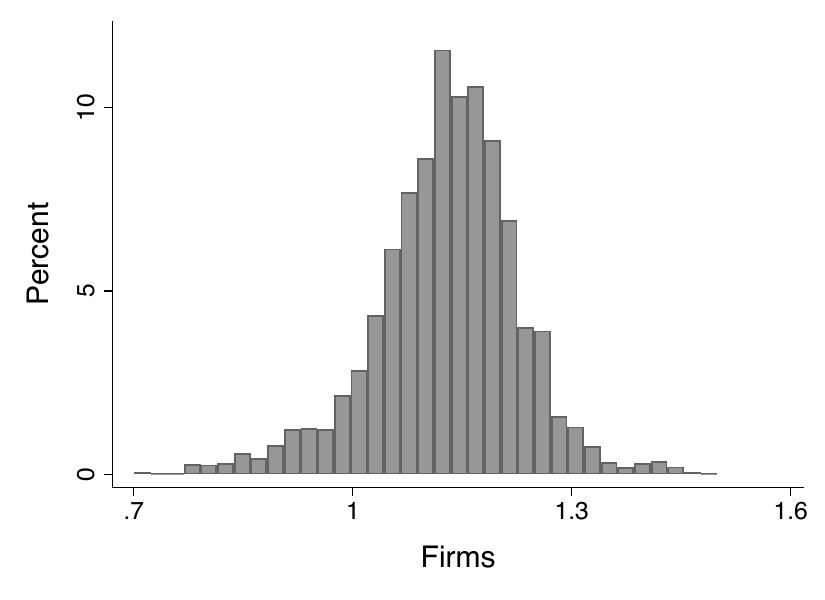}
    \end{subfigure}%
      \begin{subfigure}{0.49\linewidth}
    \centering
        \includegraphics[trim=0.0cm 0.0cm 0.0cm 0.0cm, width=\textwidth,height=0.25\textheight]{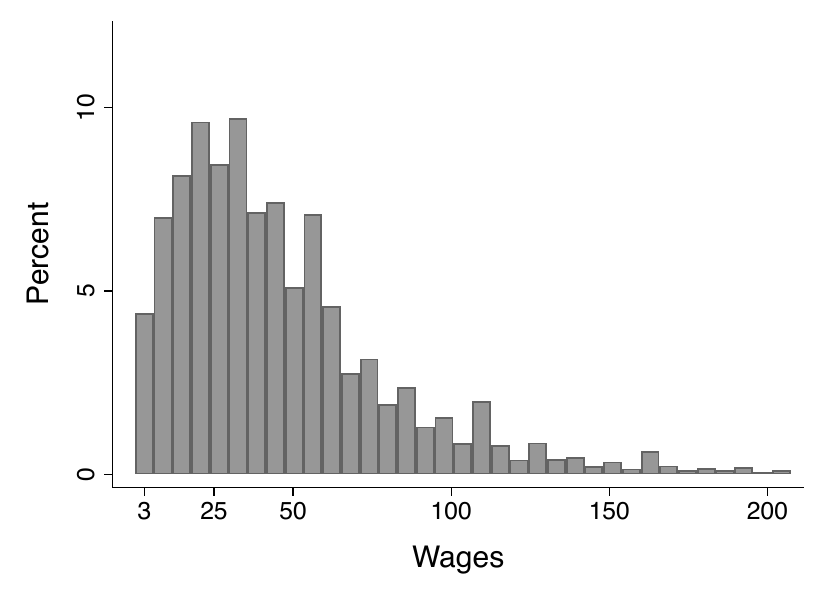}
    \end{subfigure}%
    \vspace{-0.2 cm}
    \caption{Firm and Wage Distribution} \label{f:firm_wage_distribution}
{\scriptsize \vspace{.2 cm} \Cref{f:firm_wage_distribution} shows the histogram for the inferred firm distribution (left panel) and the model implied distribution of wages (right panel).}
    \end{figure}

For illustrative purposes, we introduce representative occupations in \Cref{f:skill_distribution}. Specifically, we provide nine representative occupations within the type space. For example, cashiers are workers with both low cognitive and low manual skills, chief executives have low manual skills but high cognitive skills, while physicians have both high cognitive and high manual skills.

Finally, \Cref{f:firm_wage_distribution} shows the inferred firm productivity distribution in the left panel and the implied wage distribution in the right panel. The left hand distribution shows that the distribution of firm projects is relatively concentrated with project values ranging from 30 percent below the mean to 40 percent above the mean (1.1). By construction, the right panel replicates the empirical wage distribution.

\section{Quantitative Results} \label{s:quantres}

In this section, we present the quantitative results to the planning problem using the empirically relevant model of Section \ref{s:quant}. 

\subsection{Unconstrained Benchmark} \label{s:benchmark}

In order to build intuition for the solution, we first present a benchmark without incentive constraints and firm heterogeneity. The planning problem then simplifies to minimizing resource costs (\ref{e:resources_original2}) subject to the promise keeping condition (\ref{e:promise_keeping}). By using the functional form for preferences, the promise keeping condition simplifies to:
\begin{equation}
\int \Big( \hspace{-0.05 cm} c(\alpha) - \kappa \big( x_c(\alpha) \big/ \alpha_c \big)^\rho - \kappa \big( x_m(\alpha) \big/ \alpha_m \big)^\rho  \Big) \text{d} \Phi \geq \mathcal{U} \label{e:promise_keeping2}.
\end{equation}

At the optimum, cognitive tasks are independent of workers' routine skills, and the elasticity of cognitive tasks with respect to cognitive skills is $\frac{\rho}{\rho - 2}$. Furthermore, the solution does not feature bunching. In order to see this, note that the following condition has to be satisfied:
\begin{equation}
x_s \propto \alpha_s^\frac{\rho}{\rho-2}, \label{e:x_s}
\end{equation}
for each skill $s \in \{c,m\}$. Due to additive separability of tasks in preferences and technology, the efforts on task $s$ depend only on the worker's skills in this task. Equivalently, there is no cross-dependence between tasks. Since (\ref{e:x_s}) describes a one-to-one relation between the worker's skills and efforts in each task, there is no bunching at optimum. That is, in a neighborhood of worker $\alpha$, every pair of distinct workers $(\alpha',\alpha'')$ is assigned distinct allocations as $x(\alpha') \neq x(\alpha'')$. 


   \begin{figure}[t!]
    \begin{subfigure}{0.50\linewidth}
    \centering
        \includegraphics[trim=0.0cm 0.0cm 0.0cm 0.0cm, width=1\textwidth,height=0.29\textheight]{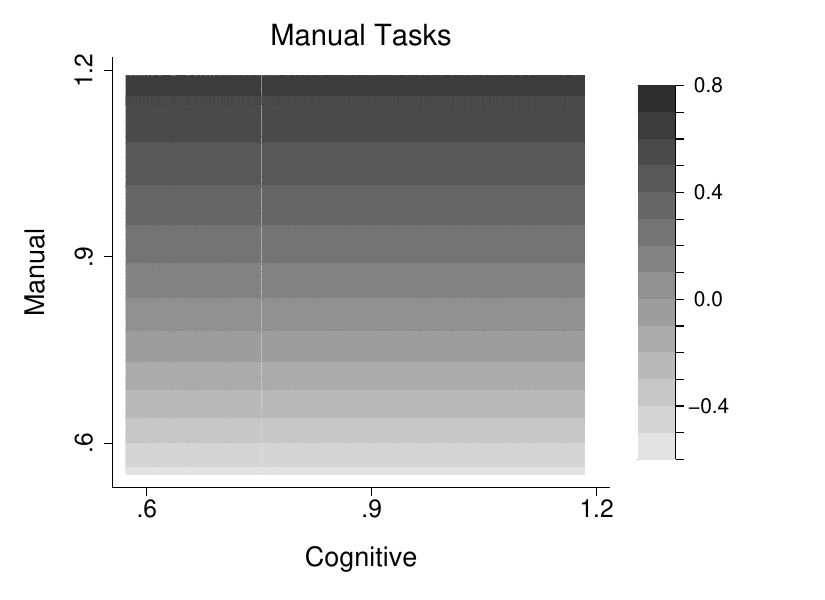}
    \end{subfigure}%
    \begin{subfigure}{0.50\linewidth}
    \centering
        \includegraphics[trim=0.0cm 0.0cm 0.0cm 0.0cm, width=1\textwidth,height=0.29\textheight]{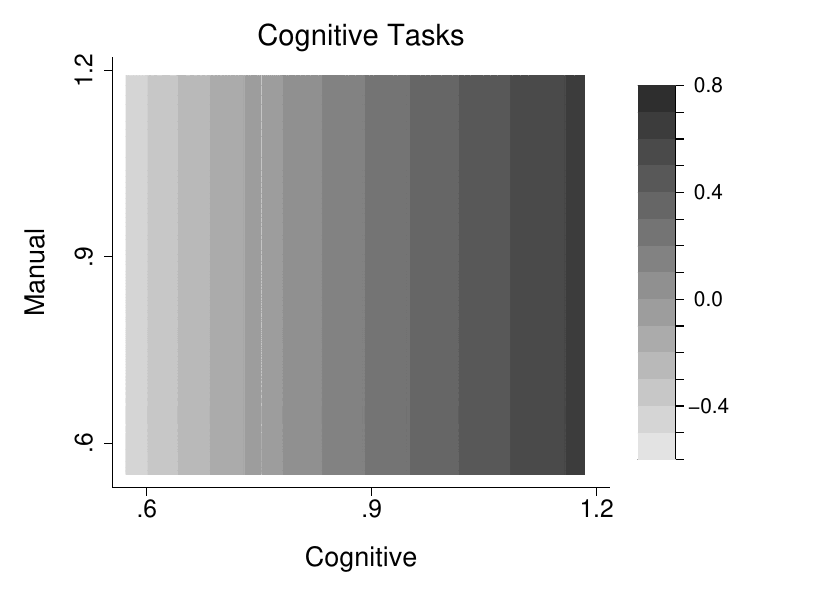}
    \end{subfigure}%
    \vspace{-0.2 cm}
    \caption{Benchmark Allocation} \label{f:data_autarky}
{\scriptsize \vspace{.2 cm} Figure \ref{f:data_autarky} shows the benchmark allocation for task intensity by worker's cognitive and manual skills. The left panel shows the allocation of manual tasks, the right panel illustrates the allocation of cognitive tasks. The optimal allocation does not feature any cross-dependence between tasks: manual task intensity only varies with manual skill, while cognitive task intensity only varies with cognitive skill. }
    \end{figure}



Given the empirical description of the distribution for cognitive and manual skills in \Cref{f:skill_distribution}, equation (\ref{e:x_s}) gives the optimal allocation of both cognitive and manual tasks. Figure \ref{f:data_autarky} visualizes the benchmark allocation of task intensity by worker's cognitive and manual skills. The left panel shows the allocation of manual tasks, the right panel shows the allocation of cognitive tasks. Since (\ref{e:x_s}) rules out any cross-dependence between tasks, the optimal allocation is captured by parallel horizontal and vertical lines, respectively. Manual task intensity only varies with manual skill, while cognitive task intensity only varies with cognitive skill. 

\subsection{Optimal Solution}

   \begin{figure}[t!]
    \begin{subfigure}{0.50\linewidth}
    \centering
        \includegraphics[trim=0.0cm 0.0cm 0.0cm 0.0cm, width=1\textwidth,height=0.26\textheight]{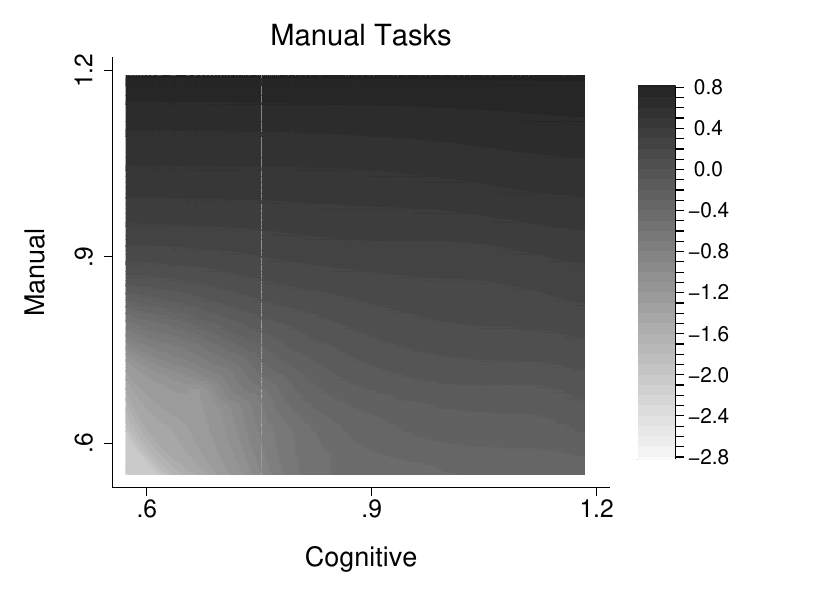}
    \end{subfigure}%
    \begin{subfigure}{0.50\linewidth}
    \centering
        \includegraphics[trim=0.0cm 0.0cm 0.0cm 0.0cm, width=1\textwidth,height=0.26\textheight]{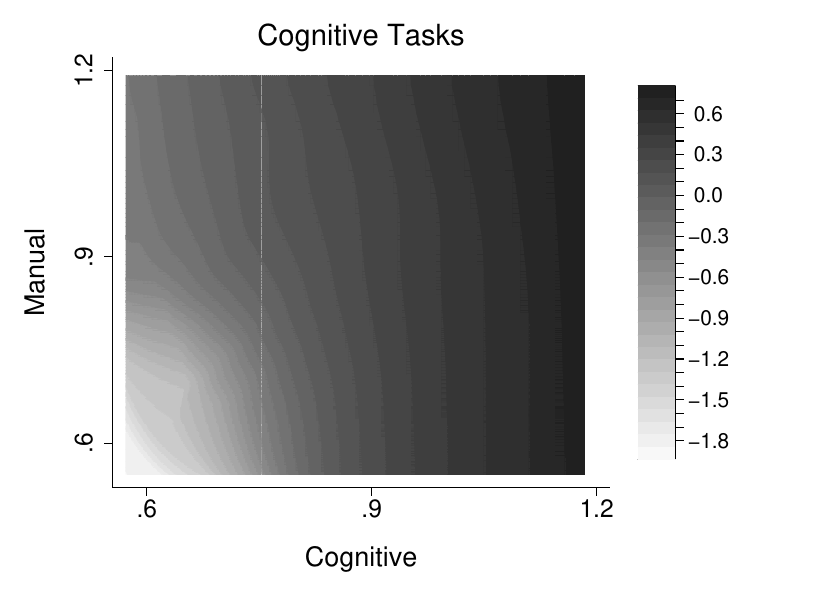}
    \end{subfigure}%
    
    \begin{subfigure}{0.50\linewidth}
    \centering
        \includegraphics[trim=0.0cm 0.0cm 0.0cm 0.0cm, width=1\textwidth,height=0.26\textheight]{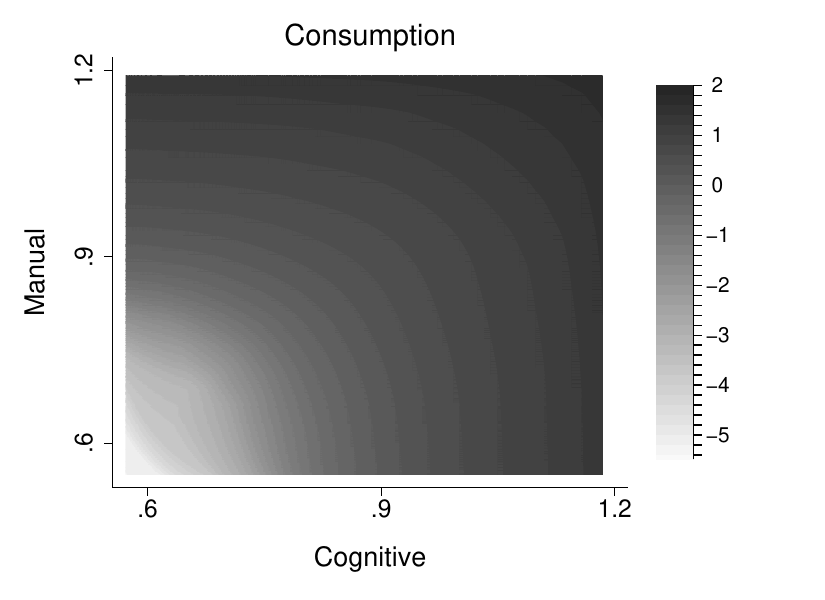}
    \end{subfigure}%
    \begin{subfigure}{0.50\linewidth}
    \centering
        \includegraphics[trim=0.0cm 0.0cm 0.0cm 0.0cm, width=1\textwidth,height=0.26\textheight]{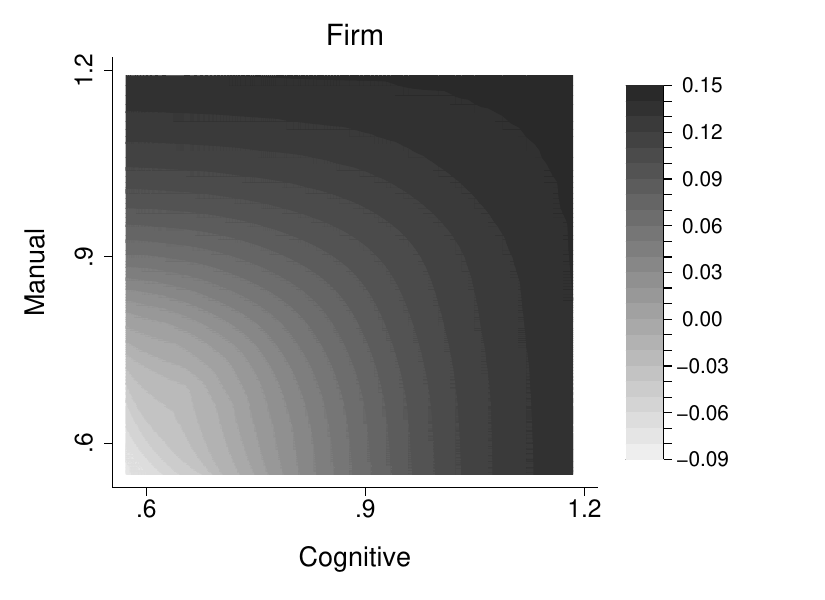}
    \end{subfigure}%
    \vspace{-0.2 cm}
    \caption{Planner Allocation} \label{f:data_results}
{\scriptsize \vspace{.2 cm} \Cref{f:data_results} visualizes the solution by worker's cognitive and manual skills. The top row shows the manual and cognitive task allocation, the bottom row shows the consumption allocation and the assignment of workers to firms. The solution features positive dependence between tasks. For example, optimal cognitive task intensity increases with manual skills.}
    \end{figure}

\Cref{f:data_results} shows the solution to the planner problem. The top row shows the allocation of manual and cognitive tasks, the bottom row shows the allocation of consumption and the assignment of workers to firms. In contrast to the benchmark, optimal task intensity in one skill depends positively on a worker's other skills. Consider the manual task allocation in the top left panel. Similar to the benchmark, the manual task intensity increases with a worker's manual skills holding constant their cognitive skills. In contrast to the benchmark solution, the manual task intensity also increases with workers' cognitive skills. That is, workers with the same manual ability but with a higher cognitive ability conduct a higher level of manual tasks. Moreover, this codependence between cognitive skills and manual tasks intensifies at low levels of cognitive skill. This can be seen by the contour lines being almost negative 45 degree lines at low levels of manual ability, while being almost flat at high levels of manual ability. The same pattern holds for cognitive tasks.

In this economy, the binding incentive constraints are for high types to mimic to be low types, which is also generally the case with unidimensional skill. In order to prevent the high type from pretending to be the low type, the allocation for the low types is distorted. With multidimensional skills, the allocation for the low types is distorted both by reducing the level of task output similar to the unidimensional case, and by increasing the codependence between tasks. The latter is a new type of distortion that emerges in taxation problems with multidimensional skill.


The bottom left panel shows the solution for consumption. Consumption increases with skills. Consumption of workers with top cognitive skills exceeds consumption of workers with top manual skills due to higher absolute levels of skill. The bottom right panel shows the assignment of workers to firms. Given the cognitive and manual tasks, the planner assigns workers with greatest effective skills $X = x^2_c + x^2_m$ to projects of greater value following Proposition \ref{prop:p_assignment}. A physician thus works on a more valuable project than a cashier as in the positive economy. Since the range of the cognitive skills is higher than the range of the manual skills, the high value projects are assigned towards workers with greater cognitive skills.

\vspace{0.4 cm}
\noindent \textbf{Bunching}. We now describe the nature of bunching in the optimal solution. Bunching means that different workers are assigned identical labor supply allocations and, therefore, are also assigned identical consumption allocations (see Section \ref{s:bunching_theory}). We use three distinct methods based on the theoretical analysis in Section \ref{s:planner} and Section \ref{s:characterization} to comprehensively characterize the bunching patterns that emerge in the quantitative model.

\begin{figure}[!t]

\begin{centering}
\includegraphics[width=0.9\textwidth,height=0.28\textheight ]{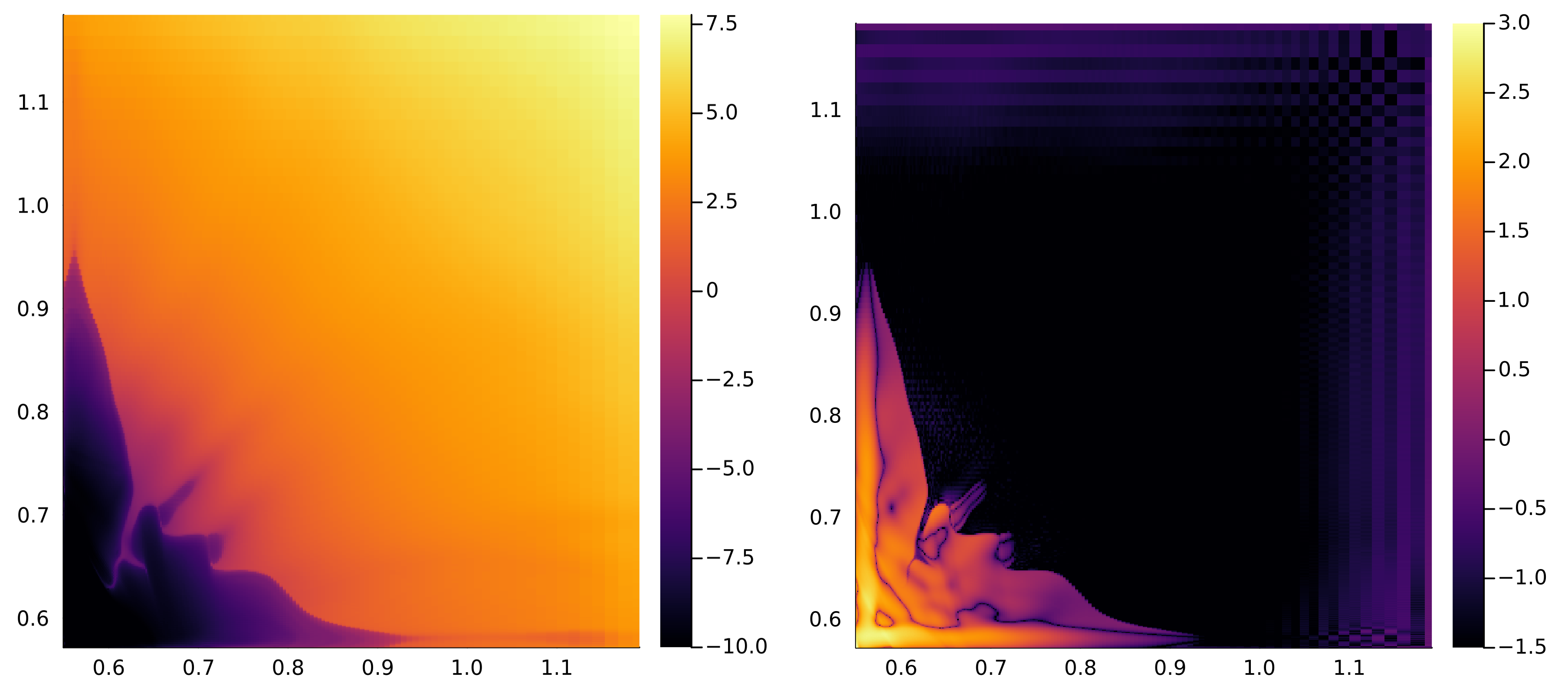}
\par\end{centering}

\caption{Bunching}
{\scriptsize{}{}\vspace{0.2cm}
 \Cref{f:baselinepR1emp} illustrates bunching in the optimal allocation. The left panel identifies bunching by analyzing the determinant of the Hessian matrix of the indirect utility function, and shows the value of the determinant on a log base 10 scale. The right panel identifies bunching using Corollary \ref{p:el} by analyzing deviations from the multidimensional optimal tax formula on a log base 10 scale. The variable on the horizontal axis is cognitive skill $\alpha_c$; the variable on the vertical axis is manual skill $\alpha_m$. Both panels identify that workers in the bottom left region of the type space are bunched under the optimal allocation. \label{f:baselinepR1emp}}
\end{figure}

First, we use our theoretical results in Lemma \ref{l:strconvex} to identify bunching by analyzing the determinant of the Hessian matrix of the indirect utility function. By Lemma \ref{l:strconvex}, if the indirect utility function is not strongly convex at all points in the neighborhood of type $p$, then worker $p$ is bunched. If the Hessian matrix is not invertible, then the indirect utility function is not strongly convex. A matrix is invertible if and only if the determinant is not equal to zero. Therefore, if the determinant of the Hessian matrix equals zero, the matrix is not invertible, so the indirect utility function is not strongly convex and the worker is bunched. Thus bunching is present in regions where the determinant of the Hessian matrix of the indirect utility function is equal to zero. We apply this method in the left panel of \Cref{f:baselinepR1emp}, which shows the value of the determinant on a log base 10 scale. The left panel shows that workers in the bottom left (dark) region are bunched.

Second, we now use our theoretical results in Corollary \ref{p:el} to identify bunching. If the multidimensional optimal tax formula without bunching does not hold, then the indirect utility function is not strongly convex for worker $p$. By Lemma \ref{l:strconvex}, this implies that worker type $p$ is bunched. Numerically, we analyze deviations from the multidimensional optimal tax formula without bunching to establish bunching. We apply this method in the right panel of \Cref{f:baselinepR1emp}. The figure shows the deviations from the multidimensional optimal tax formula without bunching on a log base 10 scale. The right panel delivers the same bunching region as the left panel in the bottom left (light) region.

   \begin{figure}[!t]
    \begin{subfigure}{1\linewidth}
    \centering
        \includegraphics[trim=0.0cm 0.0cm 0.0cm 0.0cm, width=0.58\textwidth,height=0.28\textheight]{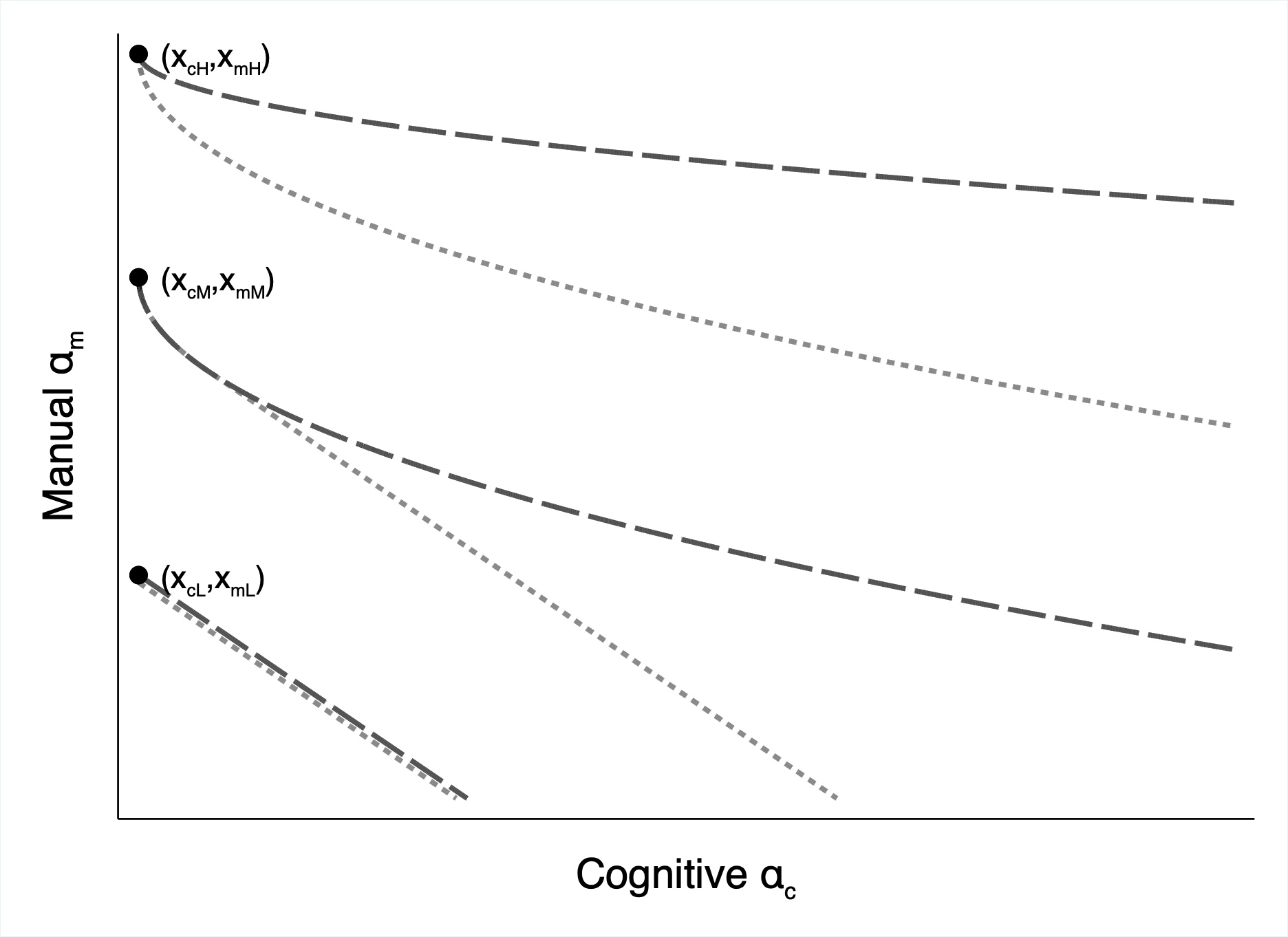}
    \end{subfigure}%
    \vspace{-0.2 cm}
    \caption{Illustration of Bunching} \label{f:bunching_ill}
{\scriptsize \vspace{.2 cm} \Cref{f:bunching_ill} illustrates the procedure to classify bunching using isocurves. The long-dashed lines represent isocurves for different cognitive task levels, while the short-dashed lines represent isocurves for different manual task levels.  Worker $\alpha$ is bunched with worker $\alpha'$ if the isocurves for $x(\alpha)$ intersect $\alpha'$.} 
    \end{figure}


The third approach to identify bunching is based directly on the definition of bunching (see Section \ref{s:bunching_theory}). When different worker types bunch, they are assigned identical task levels. Worker $\alpha$ is bunched if there are other workers who are assigned the same task levels $x(\alpha)$. Visually, we draw the isocurves corresponding to both $x_c(\alpha)$ and $x_m(\alpha)$ displayed on Figure \ref{f:bunching_ill} in the worker space $\alpha$ and assess whether the isocurves intersect for any other worker $\alpha'$. If there exists a worker $\alpha' \neq \alpha$ such that the isocurves intersect, then workers $\alpha$ and $\alpha'$ are bunched. 

\Cref{f:bunching_ill} gives an example of the procedure to classify optimal bunching using isocurves. The long-dash lines represent isocurves for different cognitive task levels, while the short-dash lines represent isocurves for different manual task levels. The dots indicate three hypothetical allocations. First, consider the isocurves corresponding to the allocation of crane operators in the top left corner. The long-dash isocurve represents workers with other combinations of skills $(\alpha_c, \alpha_m)$ who produce the same cognitive tasks $x_{cH}$ as the crane operator. The short-dash isocurve represents workers with other combinations of skills $(\alpha_c, \alpha_m)$ who produce the same manual tasks $x_{mH}$ as the crane operator. The lines intersect only at one point $-$ the skills of the crane operator at the top left corner. That is, no other worker $(\alpha_c, \alpha_m)$ receives the same task allocation $(x_{cH}, x_{mH})$ that is assigned to the crane operator. Next, consider the isocurves corresponding to the allocation of a gardener in the middle of \Cref{f:bunching_ill}. The long-dash isocurve (workers producing the same cognitive tasks $x_{cM}$ as the gardener) overlaps with the short-dash isocurve (workers producing the same manual tasks $x_{mM}$ as the gardener) for high levels of manual skill $\alpha_m$ and for low levels of cognitive skill $\alpha_c$. These workers $(\alpha_c, \alpha_m)$ produce the same cognitive and manual tasks $(x_{cM} , x_{mM})$ as the gardener, indicating that gardeners bunch with workers whose comparative advantage also lies in manual work. We call this type of bunching \textit{targeted bunching}. Finally, consider the bottom-left allocation corresponding to cashiers. In this case, the long-dash and short-dash isolines for cognitive and manual tasks overlap throughout the type space. All workers with skills $(\alpha_c, \alpha_m)$ on that line produce the same cognitive and manual tasks $(x_{cL} , x_{mL})$ as the cashier, despite their skill differences. We call this type of bunching \textit{blunt bunching}, where the planner does not distinguish different worker types when allocating tasks.

   \begin{figure}[t!]
        \begin{subfigure}{0.50\linewidth}
    \centering
        \includegraphics[trim=0.0cm 0.0cm 0.0cm 0.0cm, width=1\textwidth,height=0.26\textheight]{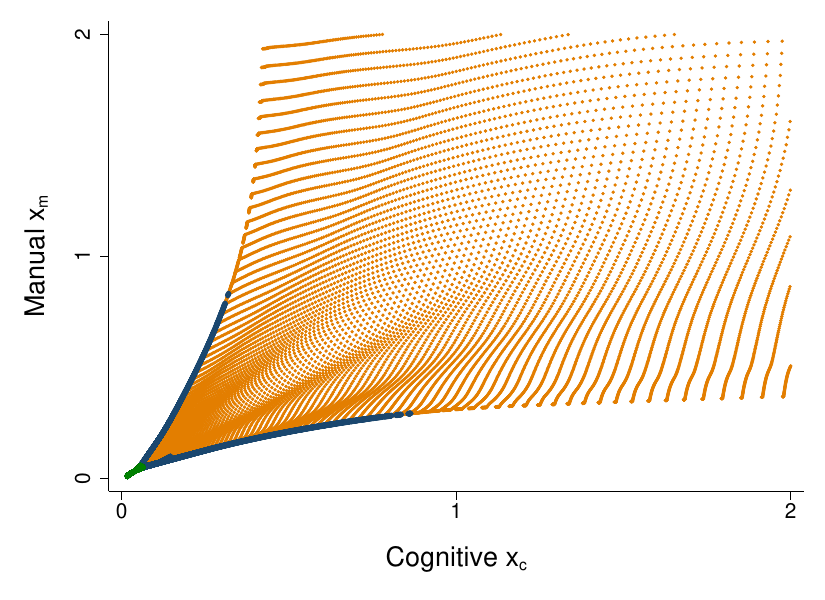}
    \end{subfigure}%
    \begin{subfigure}{0.50\linewidth}
    \centering
        \includegraphics[trim=0.0cm 0.0cm 0.0cm 0.0cm, width=1\textwidth,height=0.26\textheight]{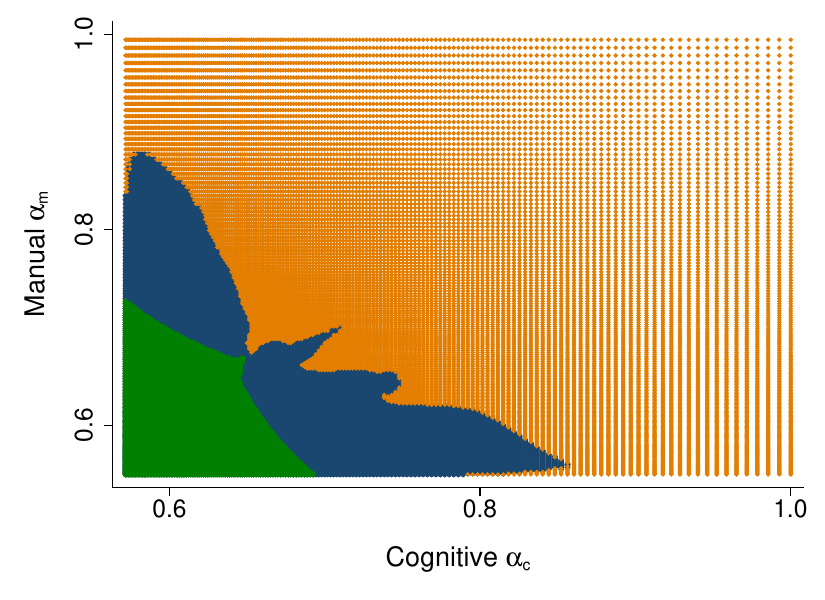}
    \end{subfigure}%
    \vspace{-0.2 cm}
    \caption{Optimal Targeted and Blunt Bunching} \label{f:data_bunching}
{\scriptsize \vspace{.2 cm} \Cref{f:data_bunching} shows bunching at the solution. The left panel demonstrates bunching in the allocation space by displaying combinations of optimal cognitive and manual tasks $(x_c , x_m)$. An allocation is marked in green or in blue if the allocation is assigned to more than one worker, while the allocation is marked orange if the allocation is assigned to one worker. The right panel displays bunching in the worker type space $(\alpha_c ,\alpha_m)$. In this figure, a worker type is marked in green or in blue if the worker's task allocation is also assigned to another worker. The green area indicates the blunt bunching region while the blue area indicates the targeted bunching region.}
    \end{figure}

\Cref{f:data_bunching} shows bunching at the optimal solution and presents two complementary views of the issue. The left panel displays bunching through the allocation of tasks. We display combinations of cognitive and manual tasks $(x_c , x_m)$ that are optimal. An allocation is marked green or blue if the allocation is assigned to more than one worker, while the allocation is marked orange if the allocation is assigned to one worker. There are two main regions of bunching in this picture. The first region is that of the blunt bunching and is given by the green line segment at the bottom of both cognitive and manual tasks. The second region is that of the targeted bunching and is represented by two blue line segments at the borders of the task trapezoid. That is, targeted bunching happens when the task input is low for only one task. The lower (flat) blue line segment represents low manual tasks, and the upper (vertical) line segment represents low cognitive tasks. Note that on these borders, when the task intensity becomes sufficiently high, the allocations are no longer bunched – that is, the blue line turns orange on the edges of the trapezoid. 

\vspace{0.4 cm}
\noindent \textbf{Blunt and Targeted Bunching}. In order to see which workers are bunched, the right panel shows bunched workers in the type space $(\alpha_c ,\alpha_m)$. In this figure, a worker type is marked in blue or green if the worker's task allocation is also assigned to another worker.\footnote{We construct the distance between two allocations $x(p)$ and $x(\hat{p})$ by considering the Euclidean distance between the allocations relative to the Euclidean distance between the respective types $p$ and $\hat{p}$. We classify two allocations to be identical if this ratio is below $10^{-4}$.} This panel shows that bunched workers have low cognitive or low manual skills (or both). Worker types are marked blue in the region of targeted bunching, and marked green in the region of blunt bunching. Workers with both low cognitive and manual skills are in the blunt bunching region. Workers with medium manual or medium cognitive skills are in the targeted bunching region when their skill set is asymmetric. Workers with medium cognitive skills bunch when their manual skills are low, and vice versa. In order to quantify the measure of bunched workers in the economy, we now combine the right panel with the worker type distribution of Figure \ref{f:skill_distribution}. At the optimum, 10.4 percent of the workers is bunched. The blunt bunching region comprises 30 percent of bunched workers. The targeted bunching region comprises 70 percent of bunched workers.
    
       \begin{figure}[!t]
    \begin{subfigure}{0.5\linewidth}
    \centering
        \includegraphics[trim=0.0cm 0.0cm 0.0cm 0.0cm, width=0.95\textwidth,height=0.28\textheight]{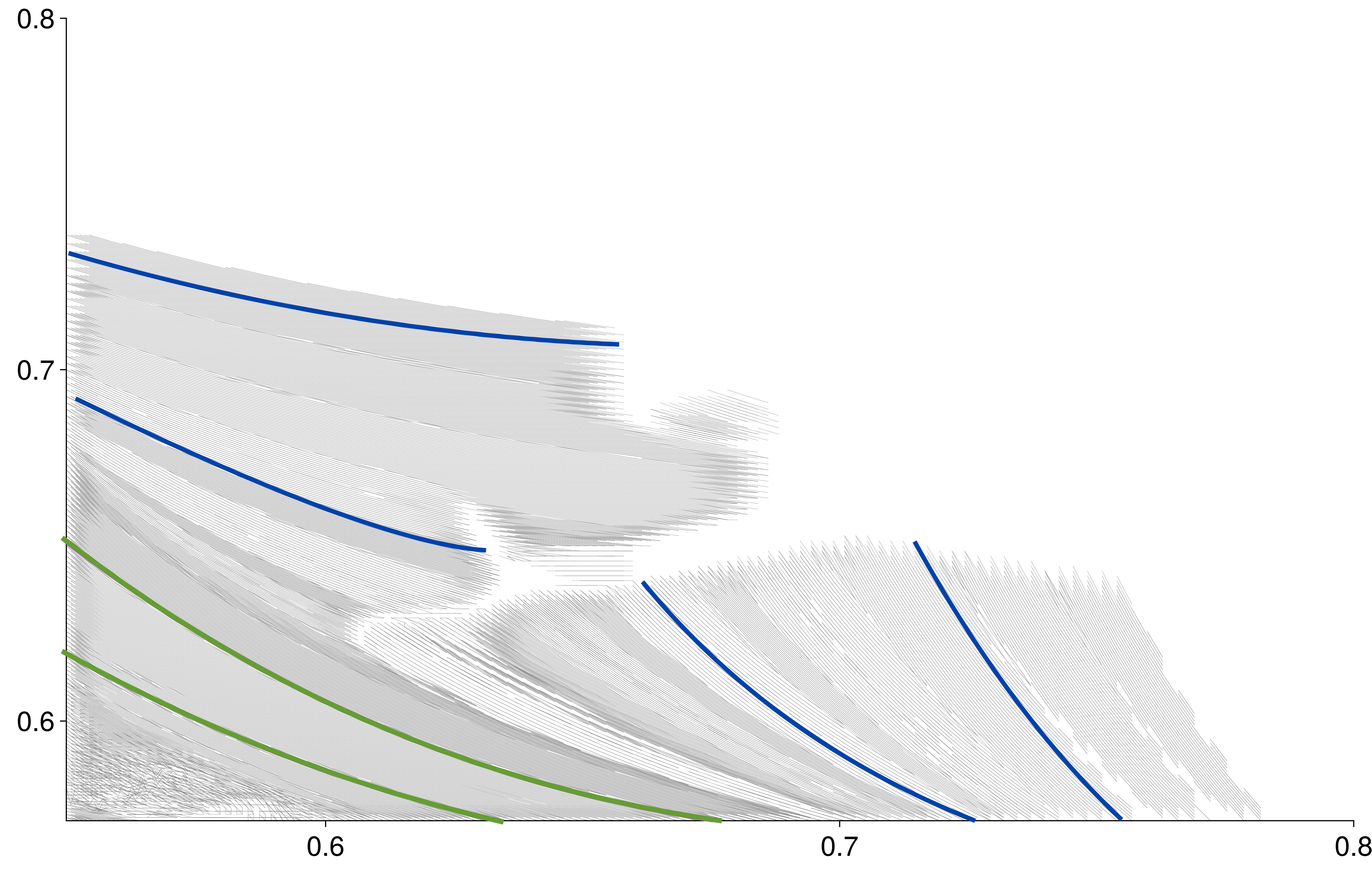}
    \end{subfigure}%
        \begin{subfigure}{0.5\linewidth}
    \centering
        \includegraphics[trim=0.0cm 0.0cm 0.0cm 0.0cm, width=0.95\textwidth,height=0.28\textheight]{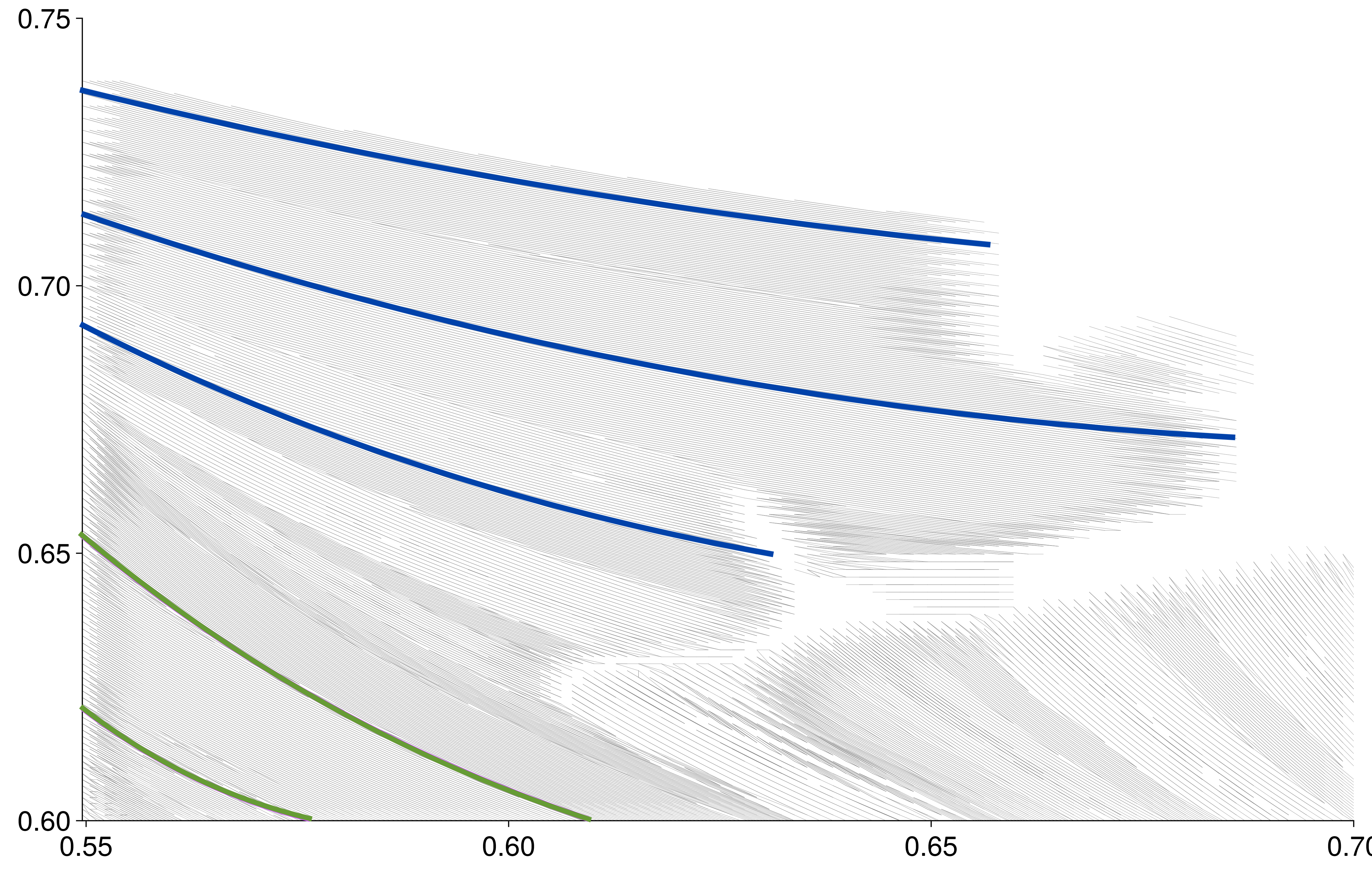}
    \end{subfigure}%
    \vspace{-0.1 cm}
    \caption{Bunching Patterns} \label{f:bunching_relation}
{\scriptsize \vspace{.2 cm} \Cref{f:bunching_relation} shows with whom workers bunch by connecting workers with a line in the worker type space if their allocations are bunched. Workers exclusively bunch with workers that are better in one skill dimension, but worse in another as represented by downward-sloping lines. Green lines indicate the blunt bunching region, blue lines indicate the targeted bunching region. Under blunt bunching, workers on the vertical boundary bunch with workers on the horizontal boundary, unlike under targeted bunching. The right-hand panel zooms in to distinguish the blunt bunching region from the targeted bunching region.}
    \end{figure}

Workers bunch with other workers both near and far. While Figure \ref{f:data_bunching} shows at what allocations workers are bunched and which workers are bunched, it does not show with whom workers bunch. These bunching relations are shown in \Cref{f:bunching_relation}, which connects two workers with a line in the type space if they are bunched.\footnote{To facilitate the presentation, we display the bunching relations for one quarter of all workers. In addition, we display at most two relations for each worker.}


Workers do not bunch with workers over whom they have an absolute advantage or who have an absolute advantage over them. Workers exclusively bunch with workers that are better in one skill dimension but worse in another. In \Cref{f:data_bunching} this is evident since all connections are represented by downward-sloping lines. Gardeners with somewhat better cognitive skills, but somewhat lower manual skills are bunched with gardeners with less cognitive skills and higher manual skills. Despite the slight difference in skills, the planner assigns both identical tasks. 

Within our bunching regions we distinguish two distinct patterns. In the lower triangle, which we indicate by green lines, the planner bunches together cognitive and manual tasks for all workers, ranging from those with relatively high manual skills to those with relatively high cognitive skills. Workers with the same effective skill index produce the same cognitive and manual tasks. Work is not tailored towards workers' specific skills but to an overall level of skill. In this blunt region, the planner optimally satisfies the incentive constraints by assigning identical allocations to different workers. Deviations are deterred bluntly at the cost of efficiency.

The planner also bunches workers in the targeted bunching regions, but in a less rudimentary way because the efficiency cost of bunching increases with worker skills. The planner separately bunches together cognitive and manual tasks for workers that are relatively skilled in manual tasks and similarly bunches together the cognitive and manual tasks for workers that are relatively skilled in cognitive tasks. Unlike in the lower triangle, however, workers that are relatively skilled in cognitive labor do not bunch with workers that are skilled in manual labor. In sum, the planner separates on worker's comparative advantage and bunches on worker's comparative disadvantage in the targeted bunching region.

In the regions without bunching, the planner distorts allocations to deter deviations by workers to allocations they find desirable similar to the unidimensional case. This is incentive provision through distinct distorted allocations. 

\vspace{0.4 cm}
\noindent \textbf{Taxation: Wedges and Tax Implementation}. We next study the implications for optimal taxes by studying the labor wedge (\ref{e:optimal_wedges}) for cognitive and manual tasks. Using linear preferences for consumption, the labor wedge is: 
\begin{equation}
1 - \tau_s = \frac{1}{2} \frac{x_s(\alpha)^{\rho-2}}{z(\alpha) \alpha_s^\rho} \qquad \implies \qquad \log ( 1 - \tau_s ) \propto - \log z + (\rho-2 ) \log  x_s - \rho \log \alpha_s  \label{e:optimal_wedges_quant} .
\end{equation}
The labor wedge is determined by the optimal assignment through $\log z$, by the allocation of tasks through $(\rho-2 ) \log  x_s$, and by worker skills through $\rho \log \alpha_s$. Workers at better firms face a higher labor wedge $\tau_s$. If the planner reduces task inputs, the labor wedge increases. Keeping allocations constant, the labor wedge increases with worker skills. 

  \begin{figure}[t!]
        \begin{subfigure}{0.50\linewidth}
    \centering
        \includegraphics[trim=0.0cm 0.0cm 0.0cm 0.0cm, width=1\textwidth,height=0.26\textheight]{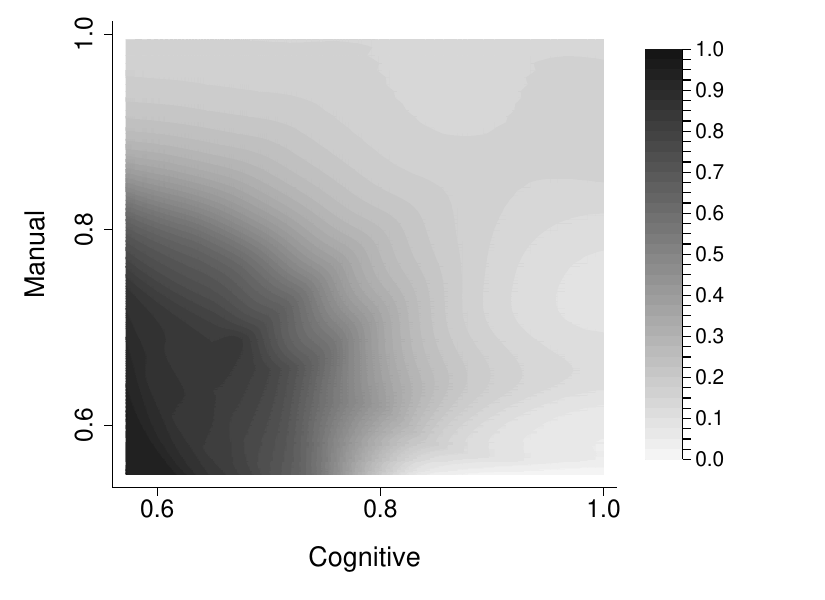}
    \end{subfigure}%
    \begin{subfigure}{0.50\linewidth}
    \centering
        \includegraphics[trim=0.0cm 0.0cm 0.0cm 0.0cm, width=1\textwidth,height=0.26\textheight]{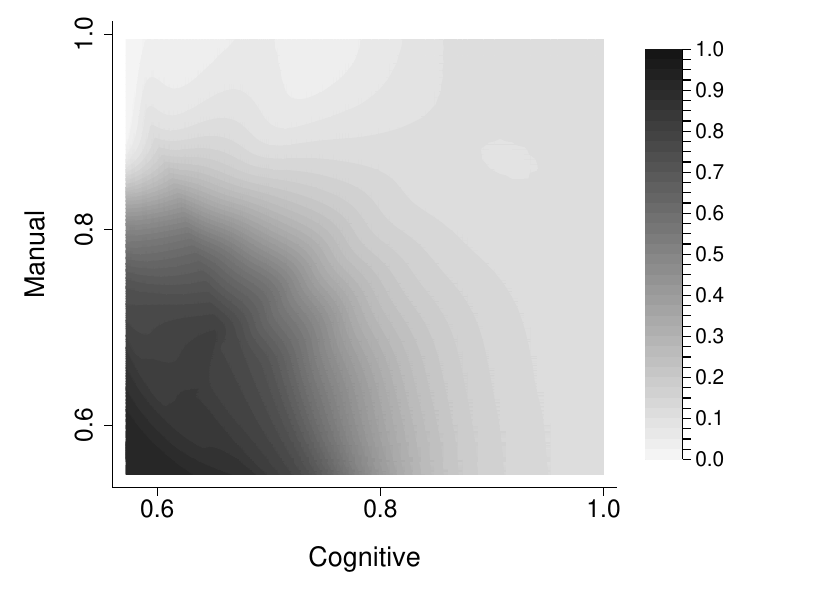}
    \end{subfigure}%
    \vspace{-0.2 cm}
    \caption{Tax Wedges} \label{f:data_taxes}
{\scriptsize \vspace{.2 cm} \Cref{f:data_taxes} visualizes the tax wedges for the planner solution. The left panel displays the manual tax wedge, the right panel displays the cognitive tax wedge.}
    \end{figure}

The optimal tax wedges are presented in Figure \ref{f:data_taxes}. First, the respective wedge is zero for the worker with the highest respective skill in either the cognitive or manual dimension. Consider the graph for the manual tax in the left panel. Workers with the highest manual skill are those represented by the top horizontal boundary of the graph. The manual tax on them is zero. Note that for the workers with the top cognitive skill, who are represented as the right vertical boundary, the manual tax is not zero. The manual tax on the best crane operator (highest manual and low cognitive) and physician (highest manual and high cognitive) is zero while it is positive on the top executive (low manual and highest cognitive).

The cognitive tax is displayed in the right panel of Figure \ref{f:data_taxes}. The cognitive tax for workers with the highest cognitive skill is zero. For workers with the top manual skill the cognitive tax is not zero. The cognitive tax on top executives (low manual and highest cognitive) and the best physician (highest manual and high cognitive) are zero while it is positive on the best crane operator (highest manual and low cognitive).

   \begin{figure}[!t]
    \begin{subfigure}{0.5\linewidth}
    \centering
        \includegraphics[trim=0.0cm 0.0cm 0.0cm 0.0cm, width=1\textwidth,height=0.29\textheight]{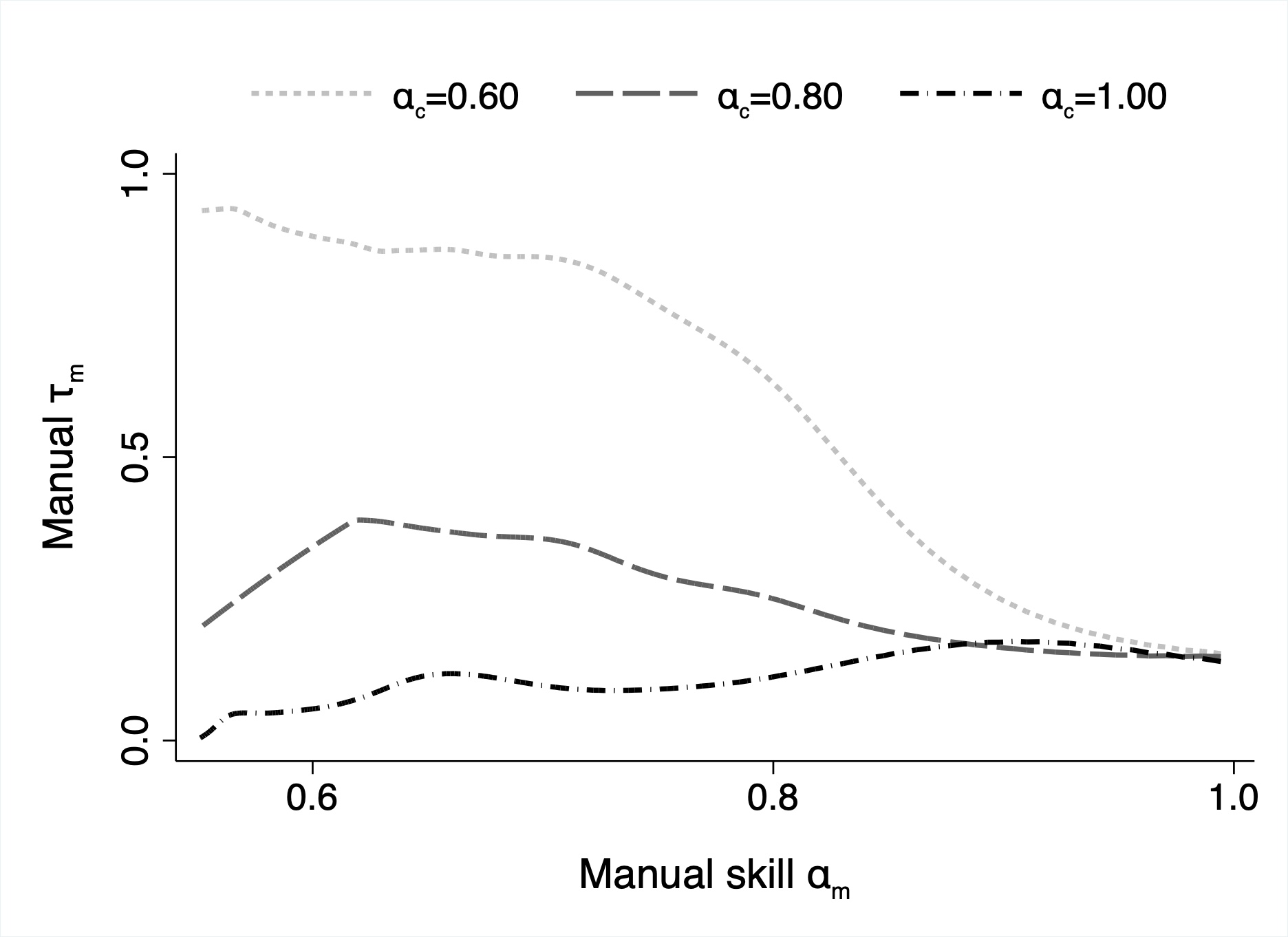}
    \end{subfigure}%
        \begin{subfigure}{0.5\linewidth}
    \centering
        \includegraphics[trim=0.0cm 0.0cm 0.0cm 0.0cm, width=1\textwidth,height=0.29\textheight]{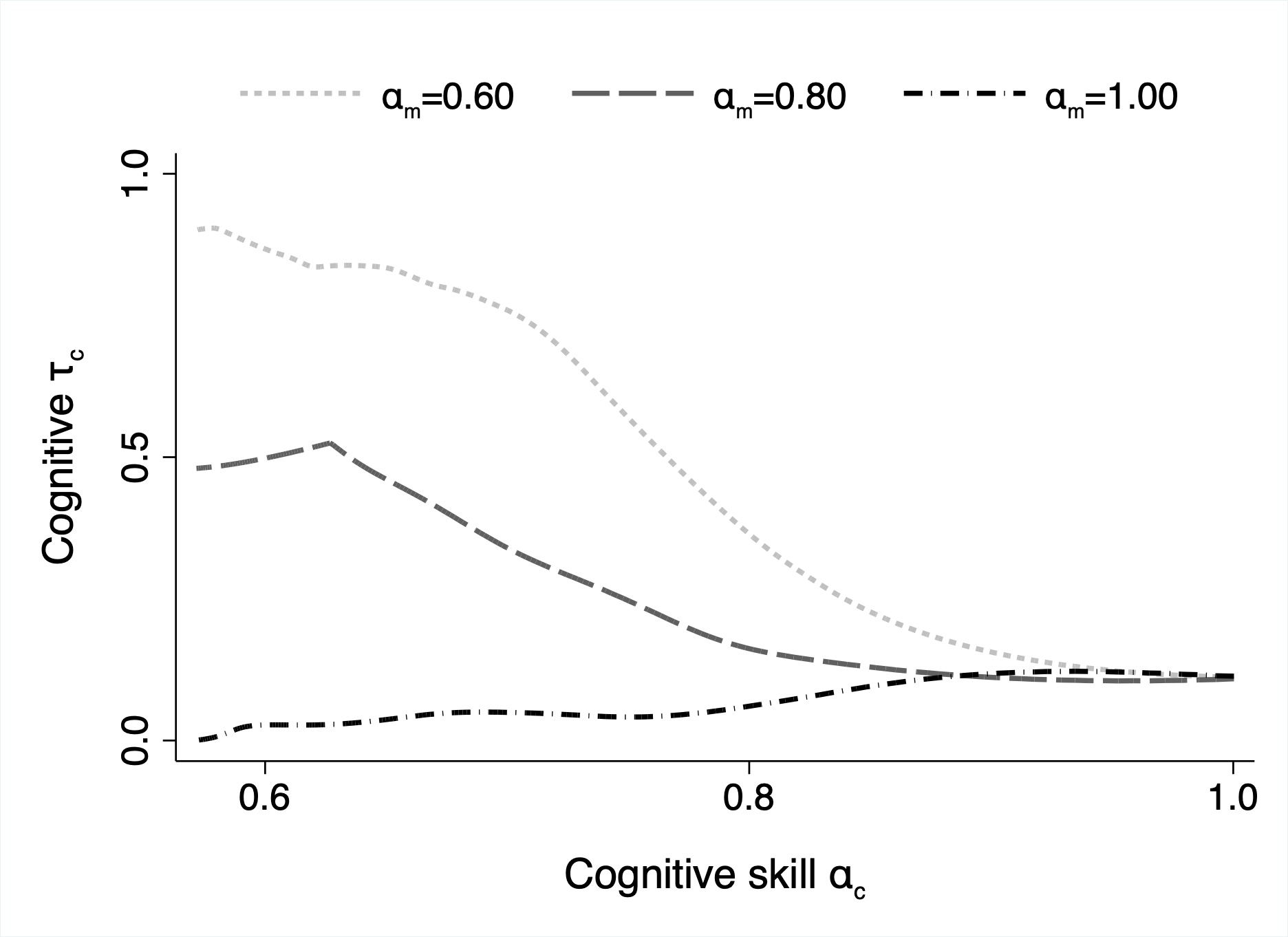}
    \end{subfigure}%
    \vspace{-0.2 cm}
    \caption{Manual and Cognitive Tax Wedge} \label{f:cognitive_wedge}
{\scriptsize \vspace{.2 cm} \Cref{f:cognitive_wedge} describes how the tax wedges vary with the level of worker skill. \Cref{f:cognitive_wedge} displays the manual wedge in the left panel, and the cognitive tax wedge in the right panel, for three groups of workers separated by their cognitive and manual skills, respectively.}
    \end{figure}

Second, we describe how taxes change with the level of the skill. \Cref{f:cognitive_wedge} plots the manual (on the left) and cognitive taxes (on the right) for three groups of workers separated by their skills corresponding to the heatmap in \Cref{f:data_taxes}. Consider top manual workers. Those are physicians (high cognitive), carpenters (medium cognitive), and crane operators (low cognitive). These workers are located in the top horizontal strip of \Cref{f:data_taxes} and represented by the dash-dotted line in the right panel of \Cref{f:cognitive_wedge}. These workers are not in bunched regions, and the cognitive tax on them is low. 

Consider medium manual workers. Those are lawyers (high cognitive), police officers (medium cognitive), and gardeners (low cognitive). These workers are located in the middle horizontal strip of \Cref{f:data_taxes} and are represented by the long-dash line in the right panel of \Cref{f:cognitive_wedge}. The cognitive tax on them is higher. Three forces give this shape from equation (\ref{e:optimal_wedges_quant}) $-$ the assignment to heterogeneous firms, the task allocation, and the worker skill. The assignment is monotonic: lawyers are assigned to better firms than police officers and gardeners. This force gives a higher labor wedge on lawyers compared to the gardeners. Conversely, the task allocation increases with skills leading to a reduction in labor wedges. Finally, the labor wedge increases in worker skills. The low cognitive skill workers (gardeners) face a higher cognitive labor wedge than the highest cognitive skill workers (lawyers) which is hence driven by the low amounts of cognitive tasks they conduct. The high level of tax distortion on gardeners is also driven by the fact that they are in the targeted bunching region while lawyers and police officers are not bunched. 


Consider low manual workers. Those are executives (high cognitive), teachers (medium cognitive), and cashiers (low cognitive). These workers are located in the bottom horizontal strip of \Cref{f:data_taxes} and are represented by the dotted line in the right panel of \Cref{f:cognitive_wedge}. The cognitive tax on them is higher than on medium manual workers. The cognitive tax is generally decreasing in skill with executives facing a lower marginal tax rate than cashiers. The high level of distortions on cashiers and teachers is also driven by the fact that they are bunched. The teachers are in the targeted bunching region – they have comparative advantage in their cognitive skills and are separated along the cognitive dimension but bunched in the manual dimension. The cashiers are in the blunt bunching region. They are bunched in both the manual and the cognitive dimensions and this leads to their allocation being heavily distorted. In the left panel of of \Cref{f:cognitive_wedge} we repeat the tax analysis for the manual tax skill wedge, using the left panel of \Cref{f:data_taxes}. 



In Appendix \ref{s:implementationtax}, we show that the planner allocation is implementable through a tax system over task inputs $x$. Since the effective skill index $X = x_c^2 + x_m^2$ determines worker wages $w(X)$ where $w$ is strictly increasing (Section \ref{s:positive}), there is a one-to-one mapping between the skill index $x_c^2 + x_m^2$ and earnings $w$. Therefore, there is a one-to-one map between $(w,x_m/x_c)$ and $(x_c,x_m)$. Since the optimum is implementable by a tax function $T(x_m,x_c)$ it is also implementable by a tax function that is only a function of income and line of work $\hat{T}(w,x_m/x_c)$.

\section{Conclusion}

We advance the understanding of optimal tax policy in multidimensional environments with bunching theoretically and quantitatively. Our optimal tax conditions generalize classic unidimensional optimal tax conditions of \citet{Mirrlees:1971}, \citet{Diamond:1998} and \citet{Saez:2001} to multidimensional taxation problems and account for global incentive constraints and bunching. For an empirically relevant model, we show that bunching is both substantial and nuanced and, hence, importantly impacts optimal policy design.

\clearpage


\vspace{-0.0cm}

{ {
\bibliographystyle{econometrica}	
\vspace{-0.0cm}
\baselineskip15.0pt
\bibliography{BTZ2_bib}

@Techreport{Abowd:2003,
  title={The Measurement of Human Capital in the U.S. Economy},
  author={Abowd, John M. and Lengermann, Paul and McKinney, Kevin L.},
  type={{U.S. Census Bureau Working Paper}},
  year={2003},
}

@Incollection{Acemoglu:2011,
  title={Skills, Tasks and Technologies: Implications for Employment and Earnings},
  author={Acemoglu, Daron and Autor, David},
  booktitle={Handbook of Labor Economics},
  volume={4},
  pages={1043--1171},
  year={2011}
}

@Incollection{Ambrosio:2013,
  title={A User’s Guide to Optimal Transport},
  author={Ambrosio, Luigi and Gigli, Nicola},
  booktitle={Modelling and Optimisation of Flows on Networks},
  pages={1--155},
  year={2013},
}

@Article{Armstrong:1996,
  title={Multiproduct Nonlinear Pricing},
  author={Armstrong, Mark},
  journal={Econometrica},
  volume={64},
  number={1},
  pages={51--75},
  year={1996}
}

@Article{Becker:1973,
  title={A Theory of Marriage: Part I},
  author={Becker, Gary S.},
  journal={Journal of Political Economy},
  volume={81},
  number={4},
  pages={813--846},
  year={1973}
}

@article{Benabou:2002,
  title={Tax and Education Policy in a Heterogeneous-Agent Economy: What Levels of Redistribution Maximize Growth and Efficiency?},
  author={Benabou, Roland},
  journal={Econometrica},
  volume={70},
  number={2},
  pages={481--517},
  year={2002},
  publisher={Wiley Online Library}
}

@Article{Bertsekas:2011,
  title={A Unifying Polyhedral Approximation Framework for Convex Optimization},
  author={Bertsekas, Dimitri P and Yu, Huizhen},
  journal={SIAM Journal on Optimization},
  volume={21},
  number={1},
  pages={333--360},
  year={2011},
  publisher={SIAM}
}

@article{Blundell:2012,
  title={Employment, Hours of Work and the Optimal Taxation of Low-Income Families},
  author={Blundell, Richard and Shephard, Andrew},
  journal={Review of Economic Studies},
  volume={79},
  number={2},
  pages={481--510},
  year={2012}
}

@Article{BK1:2021,
  title={Inferring Inequality with Home Production},
  author={Boerma, Job and Karabarbounis, Loukas},
  journal={Econometrica},
  volume={89},
  number={5},
  pages={2517--2556},
  year={2021}
}

@Article{BK2:2020,
  title={Labor Market Trends and the Changing Value of Time},
  author={Boerma, Job and Karabarbounis, Loukas},
  journal={Journal of Economic Dynamics and Control},
  volume={115},
  year={2020}
}

@Article{BTZ:2021,
  title={Sorting with Teams},
  author={Boerma, Job and Tsyvinski, Aleh and Zimin, Alexander P},
  journal={Journal of Political Economy},
  volume={133},
  number={2},
  pages={421--454},
  year={2025}
}

@Article{Bogachev:2012,
  title={The Monge-Kantorovich Problem: Achievements, Connections, and Perspectives},
  author={Bogachev, Vladimir I. and Kolesnikov, Aleksandr V.},
  journal={Russian Mathematical Surveys},
  volume={67},
  number={5},
  pages={785--890},
  year={2012}
}

@Article{Chetty:2012,
  title={Does Indivisible Labor Explain the Difference Between Micro and Macro Elasticities? A Meta-Analysis of Extensive Margin Elasticities},
  author={Raj Chetty, and Guren, Adam, and Manoli, Day, and Weber, Andrea},
  journal={NBER Macroeconomics Annual},
  volume={27},
  pages={1--56},
  year={2012}
}

@Article{Chiappori:2010,
  title={Hedonic Price Equilibria, Stable Matching, and Optimal Transport: Equivalence, Topology, and Uniqueness},
  author={Chiappori, Pierre-Andr{\'e} and McCann, Robert J. and Nesheim, Lars P.},
  journal={Economic Theory},
  volume={42},
  number={2},
  pages={317--354},
  year={2010}
}

@article{Chiappori:2017,
  title={Multi-to One-Dimensional Optimal Transport},
  author={Chiappori, Pierre-Andr{\'e} and McCann, Robert J. and Pass, Brendan},
  journal={Communications on Pure and Applied Mathematics},
  volume={70},
  number={12},
  pages={2405--2444},
  year={2017}
}

@article{Conesa:2009,
  title={Taxing Capital? Not a Bad Idea After All!},
  author={Conesa, Juan Carlos and Kitao, Sagiri and Krueger, Dirk},
  journal={American Economic Review},
  volume={99},
  number={1},
  pages={25--48},
  year={2009}
}

@Article{Deming:2017,
  title={The Growing Importance of Social Skills in the Labor Market},
  author={Deming, David J,},
  journal={Quarterly Journal of Economics},
  volume={132},
  number={4},
  pages={1593--1640},
  year={2017}
}

@Article{Diamond:1998,
  title={Optimal Income Taxation: An Example with a U-shaped Pattern of Optimal Marginal Tax Rates},
  author={Diamond, Peter A.},
  journal={American Economic Review},
  volume = {88},
  number = {1},
  pages={83--95},
  year={1998}
}

@Article{Dupuy:2014,
  title={Personality Traits and the Marriage Market},
  author={Dupuy, Arnaud and Galichon, Alfred},
  journal={Journal of Political Economy},
  volume={122},
  number={6},
  pages={1271--1319},
  year={2014}
}

@Article{Duran:1986,
  title={An Outer-Approximation Algorithm for a Class of Mixed-Integer Nonlinear Programs},
  author={Duran, Marco A and Grossmann, Ignacio E},
  journal={Mathematical Programming},
  volume={36},
  pages={307--339},
  year={1986}
}

@Article{Ekeland:2010,
  title={An Algorithm for Computing Solutions of Variational Problems with Global Convexity Constraints},
  author={Ekeland, Ivar and Moreno-Bromberg, Santiago},
  journal={Numerische Mathematik},
  volume={115},
  number={1},
  pages={45--69},
  year={2010}
}

@Article{Gabaix:2008,
  title={Why has CEO Pay Increased so Much?},
  author={Gabaix, Xavier and Landier, Augustin},
  journal={Quarterly Journal of Economics},
  volume={123},
  number={1},
  pages={49--100},
  year={2008}
}

@Book{Shaked:2007,
  title={Stochastic Orders},
  author={Shaked, Moshe and Shanthikumar, J. George},
  publisher={Springer Series in Statistics},
  year={2007}
}

@Article{Galichon:2021b,
  title={Cupid's Invisible Hand: Social Surplus and Identification in Matching Models},
  author={Galichon, Alfred and Salani{\'e}, Bernard},
  journal={Review of Economic Studies},
  volume={89},
  number={5},
  pages={2600--2629},
  year={2022}
}

@Article{Gayle:2019,
  title={Optimal Taxation, Marriage, Home Production, and Family Labor Supply},
  author={Gayle, George-Levi and Shephard, Andrew},
  journal={Econometrica},
  volume={87},
  number={1},
  pages={291--326},
  year={2019}
}

@Article{Geoffrion:1970,
  title={Elements of Large Scale Mathematical Programming Part II: Synthesis of Algorithms and Bibliography},
  author={Geoffrion, Arthur M},
  journal={Management Science},
  volume={16},
  number={11},
  pages={676--691},
  year={1970}
}

@Techreport{Golosov:2022,
  title={Multidimensional Screening in Public Finance: The Optimal Taxation of Couples},
  author={Golosov, Mikhail and Krasikov, Ilia},
  type={{University of Chicago Working Paper}},
  year={2022},
}

@article{Griessler:2018,
  title={C-cyclical Monotonicity as a Sufficient Criterion for Optimality in the Multimarginal Monge-Kantorovich Problem},
  author={Griessler, Claus},
  journal={Proceedings of the American Mathematical Society},
  volume={146},
  number={11},
  pages={4735--4740},
  year={2018}
}

@Article{Guvenen:2014,
  title={The Nature of Countercyclical Income Risk},
  author={Guvenen, Fatih and Ozkan, Serdar and Song, Jae},
  journal={Journal of Political Economy},
  volume={122},
  number={3},
  pages={621--660},
  year={2014}
}

@article{Guvenen:2020,
  title={Multidimensional Skill Mismatch},
  author={Guvenen, Fatih and Kuruscu, Burhan and Tanaka, Satoshi and Wiczer, David},
  journal={American Economic Journal: Macroeconomics},
  volume={12},
  number={1},
  pages={210--44},
  year={2020}
}

@Article{Heathcote:2021a,
  title={Optimal Income Taxation: Mirrlees meets Ramsey},
  author={Heathcote, Jonathan and Tsujiyama, Hitoshi},
  journal={Journal of Political Economy},
  volume={129},
  number={11},
  pages={3141--3184},
  year={2021}
}

@Techreport{Heathcote:2021b,
  title={Practical Optimal Income Taxation},
  author={Heathcote, Jonathan and Tsujiyama, Hitoshi},
  type={{FRB of Minneapolis Working Paper}},
  year={2021}
}

@article{HSV:2017,
  title={Optimal Tax Progressivity: An Analytical Framework},
  author={Heathcote, Jonathan and Storesletten, Kjetil and Violante, Giovanni L.},
  journal={Quarterly Journal of Economics},
  volume={132},
  number={4},
  pages={1693--1754},
  year={2017}
}

@TechReport{Judd:2017,
  author      = {Judd, Kenneth and Ma, Ding and Saunders, Michael A. and Su, Che-Lin},
  title       = {Optimal Income Taxation with Multidimensional Taxpayer Types},
  type        = {{Stanford University Working Paper}},
  year        = {2017}
}

@TechReport{Kleven:2006,
  author    ={Kleven, Henrik Jacobsen and Kreiner, Claus Thustrup and Saez, Emmanuel},
  title 	={The Optimal Income Taxation of Couples},
  type        = {{NBER Working Paper No. 12685}},
  year        = {2006}
}

@Article{Kleven:2009,
  title={The Optimal Income Taxation of Couples},
  author={Kleven, Henrik Jacobsen and Kreiner, Claus Thustrup and Saez, Emmanuel},
  journal={Econometrica},
  volume={77},
  number={2},
  pages={537--560},
  year={2009}
}

@Article{Kremer:1993,
  title={The O-Ring Theory of Economic Development},
  author={Kremer, Michael},
  journal={Quarterly Journal of Economics},
  volume={108},
  number={3},
  pages={551--575},
  year={1993}
}

@Article{Karunamuni:2005,
  title={A Generalized Reflection Method of Boundary Correction in Kernel Density Estimation},
  author={Karunamuni, Rohana J. and Alberts, Tom},
  journal={Canadian Journal of Statistics},
  volume={33},
  number={4},
  pages={497--509},
  year={2005}
}

@Book{Kocherlakota:2010,
  title={The New Dynamic Public Finance},
  author={Kocherlakota, Narayana R},
  publisher={Princeton University Press},
  year={2010}
}

@TechReport{Lehmann:2021,
  author      = {Lehmann, Etienne and Renes, Sander and Spiritus, Kevin and Zoutman, Floris},
  title       = {Optimal Taxation with Multiple Incomes and Types},
  type        = {{CEPR Discussion Paper No. 16571}},
  year        = {2021}
}

@Article{Lindenlaub:2017,
  title={Sorting Multidimensional Types: Theory and Application},
  author={Lindenlaub, Ilse},
  journal={Review of Economic Studies},
  volume={84},
  number={2},
  pages={718--789},
  year={2017}
}

@Article{Lindenlaub:2020,
  title={Multidimensional Sorting under Random Search},
  author={Lindenlaub, Ilse and Postel-Vinay, Fabien},
  journal={Journal of Political Economy},
  volume={131},
  number={12},
  pages={3497--3539},
  year={2023}
}

@Article{Lions:1998,
  title={Identification of the Dual Cone of Convex Functions and Applications},
  author={Lions, Pierre-Louis},
  journal={Comptes Rendus de l'Academie des Sciences},
  volume={12},
  number={326},
  pages={1385--1390},
  year={1998}
}

@article{Lise:2020,
  title={Multidimensional Skills, Sorting, and Human Capital Accumulation},
  author={Lise, Jeremy and Postel-Vinay, Fabien},
  journal={American Economic Review},
  volume={110},
  number={8},
  pages={2328--76},
  year={2020}
}

@Article{McAfee:1988,
  title={Multidimensional Incentive Compatibility and Mechanism Design},
  author={McAfee, R. Preston and McMillan, John},
  journal={Journal of Economic Theory},
  volume={46},
  number={2},
  pages={335--354},
  year={1988},
  publisher={Elsevier}
}

@Article{Mirrlees:1971,
  title={An Exploration in the Theory of Optimum Income Taxation},
  author={Mirrlees, James A.},
  journal={Review of Economic Studies},
  volume={38},
  number={2},
  pages={175--208},
  year={1971}
}

@Techreport{Moser:2019,
  title={Optimal Paternalistic Savings Policies},
  author={Moser, Christian and Olea de Souza e Silva, Pedro},
  type = {{Columbia Business School Working Paper}},
  year={2019},
}

@Article{Oberman:2013,
  title={A Numerical Method for Variational Problems with Convexity Constraints},
  author={Oberman, Adam M.},
  journal={SIAM Journal on Scientific Computing},
  volume={35},
  number={1},
  pages={378--396},
  year={2013}
}

@Article{Rochet:1998,
  title={Ironing, Sweeping, and Multidimensional Screening},
  author={Rochet, Jean-Charles and Chon{\'e}, Philippe},
  journal={Econometrica},
  volume={66},
  number={4},
  pages={783--826},
  year={1998}
}

@Book{Rockafellar:1984,
  title={Network Flows and Monotropic Optimization},
  author={Rockafellar, R Tyrell},
  volume={9},
  year={1999},
  publisher={Athena Scientific}
}

@Article{Rothschild:2013,
  title={Redistributive Taxation in the Roy Model},
  author={Rothschild, Casey and Scheuer, Florian},
  journal={Quarterly Journal of Economics},
  volume={128},
  number={2},
  pages={623--668},
  year={2013}
}

@Techreport{Rothschild:2014,
  title={A Theory of Income Taxation under Multidimensional Skill Heterogeneity},
  author={Rothschild, Casey and Scheuer, Florian},
  type        = {{NBER Working Paper No. 19822}},
  year        = {2014}
}

@Article{Rothschild:2016,
  title={Optimal Taxation with Rent-Seeking},
  author={Rothschild, Casey and Scheuer, Florian},
  journal={Review of Economic Studies},
  volume={83},
  number={3},
  pages={1225--1262},
  year={2016}
}

@Article{Rosen:1981,
  title={The Economics of Superstars},
  author={Rosen, Sherwin},
  journal={American Economic Review},
  volume={71},
  number={5},
  pages={845--858},
  year={1981}
}

@Techreport{Roys:2022,
  title={Skill Prices, Occupations, and Changes in the Wage Structure for Low Skilled Men},
  author={Roys, Nicolas and Taber, Christopher},
  type={{University of Wisconsin Working Paper}},
  year={2022}
}

@Article{Saez:2001,
  title={Using Elasticities to Derive Optimal Income Tax Rates},
  author={Saez, Emmanuel},
  journal={Review of Economic Studies},
  volume={68},
  number={1},
  pages={205--229},
  year={2001}
}

@article{Sanders:2012,
  title={Life-cycle Wage Growth and Heterogeneous Human Capital},
  author={Sanders, Carl and Taber, Christopher},
  journal={Annual Review of Economics},
  volume={4},
  number={1},
  pages={399--425},
  year={2012}
}

@Article{Scheuer:2014,
  title={Entrepreneurial Taxation with Endogenous Entry},
  author={Scheuer, Florian},
  journal={American Economic Journal: Economic Policy},
  volume={6},
  number={2},
  pages={126--63},
  year={2014}
}

@Article{Scheuer:2017,
  title={The Taxation of Superstars},
  author={Scheuer, Florian and Werning, Iv{\'a}n},
  journal={Quarterly Journal of Economics},
  volume={132},
  number={1},
  pages={211--270},
  year={2017}
}

@Article{Song:2019,
  title={Firming Up Inequality},
  author={Song, Jae and Price, David J. and Guvenen, Fatih and Bloom, Nicholas and Von Wachter, Till},
  journal={Quarterly Journal of Economics},
  volume={134},
  number={1},
  pages={1--50},
  year={2019}
}

@Article{Tervio:2008,
  title={The Difference That CEOs Make: An Assignment Model Approach},
  author={Tervio, Marko},
  journal={American Economic Review},
  volume={98},
  number={3},
  pages={642--68},
  year={2008}
}

@Article{Yamaguchi:2012,
  title={Tasks and Heterogeneous Human Capital},
  author={Yamaguchi, Shintaro},
  journal={Journal of Labor Economics},
  volume={30},
  number={1},
  pages={1--53},
  year={2012}
}
} }

\clearpage

\pagebreak

\renewcommand{\theequation}{A.\arabic{equation}} \setcounter{equation}{0}
\renewcommand{\thefigure}{A.\arabic{figure}}\setcounter{figure}{0}
\renewcommand{\thetable}{A.\arabic{table}}\setcounter{table}{0}
\setcounter{page}{1}
\newpage \appendix
\newpage

\begin{center}
{\Large Bunching and Taxing Multidimensional Skills \\}
\bigskip
{\Large Online Appendix \\}
\bigskip
{\large  Job Boerma, Aleh Tsyvinski and Alexander Zimin \\}
\bigskip
{\large March 2025}
\end{center}
\vspace{0.5cm}


\section{Proofs}

In this appendix, we formally prove the results in the main text.

\subsection{Proposition \ref{prop:p_assignment}} \label{pf:p_assignment}

To understand the optimal assignment, we consider a discrete version of the problem with 
identical discrete worker distributions $\{ x_{1s} \} = \{ x_{2s} \}$, which we denote by $F_x$, and a discrete firm distribution $\{z_{s}\} \sim F_z$, which we denote by $F_z$, for $s = \{1, \dots, n \}$. The discrete problem is to find an assignment $\gamma$ to maximize output:
\begin{equation}
\max\limits_{\gamma \in \underline{\Gamma}} \; \sum \gamma_{ijk} \hspace{0.05 cm}  y ( x_{1i},x_{2j},z_{k} ). \label{e:assignment_discrete}
\end{equation}
where $\gamma \in \underline{\Gamma} := \{ \gamma_{ijk} \geq 0 \hspace{0.08 cm} \big\vert  \hspace{0.04 cm} \sum_{jk} \gamma_{ijk} = 1, \sum_{ik} \gamma_{ijk} = 1, \sum_{ij} \gamma_{ijk} = 1 \}$. We next solve this problem in steps. 

First, we prove that without loss we can focus on assignments $\gamma$ that are symmetric in worker inputs, so that $\gamma_{ijk} = \gamma_{jik}$. Suppose a solution $\gamma$ is not symmetric, and use that the worker input samples are identical to define another feasible transport plan $\hat{\gamma}$ so that $\hat{\gamma}_{ijk} = \gamma_{jik}$ for all workers and projects. When $\gamma$ solves the assignment problem, so does $\hat{\gamma}$ because $\sum \hat{\gamma}_{ijk} y ( x_{1i},x_{2j},z_{k} ) = \sum \gamma_{jik} y ( x_{1i},x_{2j},z_{k} )  = \sum \gamma_{jik} y ( x_{2j},x_{1i},z_{k} ) = \sum \gamma_{ijk} y ( x_{1i},x_{2j},z_{k} )$, where the second equality follows as the production technology is symmetric in worker inputs, and the third equality follows by relabeling. This implies that the assignment $\frac{1}{2} \left( \gamma + \hat{\gamma} \right)$ is also a solution, which is indeed a symmetric solution. In summary, for every optimal assignment $\gamma$ there is a symmetric assignment $\frac{1}{2} \left( \gamma + \hat{\gamma} \right)$ that also solves the assignment problem. Without loss of generality we can therefore focus on assignments $\gamma$ that are symmetric in worker inputs.

Second, we prove it is optimal to self-sort workers so that $\gamma_{ijk} \neq 0$ implies that the workers are identical, or $i = j$. Consider an optimal symmetric assignment $\gamma$ and some project $z_k$, and denote the joint distribution of workers assigned to this project by $\gamma^k_{ij} := \gamma_{ijk}$. We construct the marginal distributions of workers and coworkers assigned to this project as $\mu^k_{1i} := \sum_j \gamma^k_{ij}$ and $\mu^k_{2j}  := \sum_i \gamma^k_{ij}$. Due to the symmetry of the assignment function $\gamma$, the worker and coworker distribution within the firm are identical, $\mu^k = \mu^k_1 = \mu^k_2$. Further, we let $\hat{\gamma}^k$ denote the optimal reassignment of workers and coworkers within the project:
\begin{equation}
\max\limits_{\gamma^k \in \tilde{\Gamma}(\mu,\mu)} \; \sum \gamma_{ij} \hspace{0.05 cm}  y \big( x_{1i},x_{2j} , z_{k} \big). \label{e:assignment_discrete_within}
\end{equation}
that is, $\hat{\gamma}^k$ solves the assignment problem within a firm given worker and coworker distribution $\mu$.

Within firms it is optimal to self-sort workers. Suppose some firm $z$ is assigned some worker and coworker distribution $\mu$, and consider two identical samples from this distribution. The within-firm assignment problem given these identical worker samples $\{x_{ij}\}$ for $i \in \{1,2\}$ and $j \in \mathcal{J} := \{1,\dots,J\}$ is to choose an assignment, or equivalently a permutation $\sigma$, to maximize output $z \sum\limits_{j\in\mathcal{J}} \hspace{0.05 cm} \big( x^j_{1c} x^{\sigma(j)}_{2c} + x^j_{1m} x^{\sigma(j)}_{2m} \big)$. Using the rearrangement inequality, aggregate output is bounded by:
\vspace{-0.3 cm}
\begin{equation}
\max\limits_{\sigma} \; z \sum\limits_{j\in\mathcal{J}} \hspace{0.05 cm} \big( x^j_{1c} x_{2c}^{\sigma(j)} + x_{1m}^{j} x_{2m}^{\sigma(j)} \big) \leq z \sum\limits_{j\in\mathcal{J}} \hspace{0.05 cm} \big( x^j_{1c} x^j_{2c} + x^j_{1m} x^j_{2m} \big) = z \sum\limits_{j\in\mathcal{J}} \hspace{0.05 cm} \big( (x^{j}_c)^{2} + (x^{j}_m)^2 \big) .\label{c_monotone}
\end{equation}
The final equality follows as the worker and coworker distributions are identical. We conclude that self-sorting within every project attains maximum production. The rearrangement inequality implies optimality of positively sorting the skills of workers within each firm as the production technology for each unidimensional task is supermodular as in \citet{Becker:1973}. In our environment with multidimensional skills positive sorting within each task is indeed attained by self-sorting, implying that $\hat{\gamma}^k$ is a diagonal matrix. 

To formally establish that the optimal assignment function features self-sorting for each firm, also with continuous marginal distributions $\mu$, we observe that (\ref{c_monotone}) implies that the self-sorting set $M \subset X \times X$ is $c$-monotone (see, e.g., \citet{Bogachev:2012} or \citet{Ambrosio:2013}). 

\vspace{0.1 cm}
\begin{definition*}
The set $M$ is \underline{$c$-monotone\vphantom{g}} if for all pairings $(x_{11},x_{21}),(x_{12},x_{22}),\dots,(x_{1n},x_{2n}) \in M$:
\begin{equation}
\sum\limits_{j\in\mathcal{J}} y(x_{1j},x_{2j}) \geq \sum\limits_{j\in\mathcal{J}}  y(x_{1j},x_{2\sigma(j)})
\end{equation}
for any permutation $\sigma$.
\end{definition*}
\vspace{0.1 cm}

\noindent The $c$-monotonicity condition directly implies the weaker condition that the matching set $M$ is $c$-cyclically monotone, or $\sum\limits_{j\in\mathcal{J}} y(x_{1j},x_{2j}) \geq \sum\limits_{j\in\mathcal{J}} y(x_{1j},x_{2j+1})$, where $x_{2n+1} = x_{21}$. The self-sorting assignment $\gamma_z$ with support on matching set $M$ is optimal as this statement is equivalent to the support of $\gamma_z$ being $c$-cyclically monotone following Theorem 1.2.7 in \citet{Bogachev:2012} or Theorem 1.13 in \citet{Ambrosio:2013}.

Given optimal self-sorting of workers within each firm, we construct a diagonal assignment $\hat{\gamma}$ by replacing $\gamma^k$ with the optimal self-sorted $\hat{\gamma}^k$ for every project $k$. Since $\hat{\gamma}^k$ solves the assignment problem within each firm, $\sum \hat{\gamma}_{ijk}   y ( x_{1i},x_{2j} , z_{k}) \geq \sum \gamma_{ijk} y ( x_{1i},x_{2j} , z_{k})$. By the construction of $\hat{\gamma}$, it holds that $\hat{\gamma}_{ijk} \neq 0$ implies that workers are identical $i = j$ for any $k$. Without loss of generality, an optimal assignment indeed features self-sorting of workers with coworkers within projects $z$. We define effective worker skills, or a team's quality, by $X := x^2_{c} + x^2_{m}$. 

Finally, the optimal assignment sorts the best teams with the most valuable firm projects. Since the optimal assignment of workers within firms is self-sorting, the Kantorovich problem (\ref{e:assignment_discrete}) simplifies to finding assignment $\gamma_{ik} \in \underline{\Gamma} := \{ \gamma_{ik} \geq 0 \hspace{0.08 cm} \big\vert \hspace{0.04 cm} \sum_{k} \gamma_{ik} = 1, \sum_{i} \gamma_{ik} = 1 \}$ to solve:
\begin{equation}
\max\limits_{\gamma \in \underline{\Gamma}} \; \sum \gamma_{ik} \hspace{0.05 cm}  y \big( X_{i},z_{k} \big), \label{e:assignment_discrete_reduced}
\end{equation}
that is, to assign teams to firms. Given that the reduced-form production technology is supermodular, the solution to this problem is a positive sorting between the team quality $X_i$ and the project value $z_k$. The solution to the original multimarginal Kantorovich problem (\ref{e:assignment_discrete}) is then constructed using $\gamma_{ijk} = \gamma_{ik} \delta_{ij}$, where $\delta$ is the Kronecker delta function.

While we constructed the solution to the Kantorovich formulation of the assignment problem in the main text, we observe that the optimal assignment to the discrete planning problem $\gamma_{ijk}$ is a Monge solution, meaning $\gamma_{ijk} \in \{ 0 , 1 \}$. This means the optimal assignment is a solution to the discrete planning problem of choosing permutations $\sigma_i$, to maximize output 
\begin{equation}
\max\limits_{\sigma_1,\sigma_2} \; \sum_s \hspace{0.05 cm}  y \big( x_{1\sigma_1(s)},x_{2\sigma_2(s)},z_{s} \big). 
\end{equation}
given the identical worker samples $\{ x_{is} \}$ drawn from the distribution $F_x$ and a firm project sample $\{z_{s}\}$ drawn from the distribution $F_z$ for $s = \{1, \dots, n \}$.

Before proceeding, we observe that a transport problem with two identical worker distributions $F_x$ with measure one for each role is equivalent to a transport problem with a single distribution of workers $F_x$ with measure two. Owing to the symmetry of workers' skills in production (\ref{e:firm_tech}), these are equivalent. Intuitively, any assignment for a problem with distinct worker distributions can be made symmetric. Consider assignment $\gamma$ that solves the discrete assignment problem for distinct worker distributions $F_{x_1}$ and $F_{x_2}$. The transpose of the assignment $\gamma$ along the worker input dimensions, which we denote by $\gamma'$, solves the discrete assignment problem with worker input distributions $F_{x_2}$ and $F_{x_1}$. This implies that symmetric assignment $\hat{\gamma} := \frac{\gamma + \gamma'}{2}$ solves the discrete assignment problem with worker distributions $F_x = \frac{1}{2} ( F_{x_1} + F_{x_2} )$.

\vspace{0.4 cm}
\noindent \textbf{Continuous Distributions}. To obtain the solution for continuous distributions of workers and coworkers, we extend our argument for the discrete distributions. We construct an assignment $\gamma$ that self-sorts workers and coworkers to obtain a unidimensional distribution for team quality $X$. The assignment $\gamma$ combines self-sorting of workers with positive sorting between worker skill index $X$ and projects $z$. 

This assignment $\gamma$ solves the Kantorovich problem (\ref{e:assignment}). To prove this claim, denote the support of the assignment by $M$, the matching set. Consider a collection of points within the matching set, $\{ (x_{1s},x_{2s},z_s) \} \in M$, then for each of those points it holds $x_{1s} = x_{2s}$, and that $z_s \leq z_{s'}$ implies $X_s \leq X_{s'}$. Since the support is constructed by using a Monge solution for the discrete assignment problem, $\sum_s \hspace{0.05 cm}  y ( x_{1\sigma_1(s)},x_{2\sigma_2(s)},z_{s} ) \leq \sum_s \hspace{0.05 cm}  y ( x_{1s},x_{2s},z_{s} )$ for all permutations $\sigma_1,\sigma_2$. Equivalently, the matching set $M$ is $c$-monotone. By Theorem 1.2 in \citet{Griessler:2018}, the assignment $\gamma$ solves the Kantorovich problem. This concludes the proof to Proposition \ref{prop:p_assignment}.

\subsection{Incentive Constraints}\label{a:ic}

\noindent We first discuss the differentiability of the indirect utility function $u$. Feasibility of an allocation is exactly equivalent to the indirect utility function $u$ being feasible, which means that $u$ is convex, nonnegative, and decreasing, together with the additional constraint $-x(p) \in \partial u(p)$, where $\partial u$ denotes the subdifferential of $u$ at a given point. 

Consider the incentive constraint (\ref{e:linear_ic}):
\begin{equation*}
c(p) - p_c x_c(p) - p_m x_m(p) \ge c(p') - p_c x_c(p') - p_m x_m(p')
\end{equation*}
Using the notation for the indirect utility function (\ref{e:indirect_utility}), that is, $u(p) = c(p) - p_c x_c(p) - p_m x_m(p)$, this can be equivalently rewritten as
\begin{equation*}
u(p) - u(p') \ge \langle p - p', -x(p') \rangle.
\end{equation*}
This inequality for each pair of types $(p, p')$ is equivalent to the convexity of $u$ together with the constraint that $-x(p) \in \partial u(p)$, where $\partial u$ denotes the subdifferential of $u$ at given point. In this case, we know by Alexandrov theorem that $u$, since it is a convex function, is differentiable almost everywhere, and hence $\partial u(p) = \nabla u(p)$ almost everywhere.

\vspace{0.4 cm}
\noindent We next show which incentive compatibility constraints are redundant to the planner problem in the numerical analysis. We establish that every reducible incentive constraint is redundant in the presence of the irreducible constraints, which shrinks the set of incentive constraints that needs to be taken into account by the planner. To show this result, we let $L \subseteq \mathbb{R}^2$ be a finite subset and we first define irreducible constraints. 

\begin{figure}[!t]
                \begin{subfigure}{\textwidth}
                                \begin{center}
                                                \begin{tikzpicture}
                                                \filldraw(0.0,0.0) circle[radius=2.5pt];
                                                \filldraw(5.0,0.0) circle[radius=2.5pt];
                                                \filldraw(10.0,0.0) circle[radius=2.5pt];                                              
                                                \filldraw(0.0,3.0) circle[radius=2.5pt];
                                                \filldraw(5.0,3.0) circle[radius=2.5pt];
                                                \filldraw(10.0,3.0) circle[radius=2.5pt];
                                                \filldraw(0.0,6.0) circle[radius=2.5pt];
                                                \filldraw(5.0,6.0) circle[radius=2.5pt];
                                                \filldraw(10.0,6.0) circle[radius=2.5pt];                                                
                                                \draw (0.0,0.0) node[below] {A};
                                                \draw (0.0,6.0) node[above] {D};
                                                \draw (10.0,0.0) node[below] {E};
                                                \draw (5.0,3.0) node[above] {B};
                                                \draw (10.0,6.0) node[above] {C};
                                                \draw[dotted,line width=0.8mm,color=orange] (0.0,0.0) parabola [bend at start] (10,6);
\draw[decoration={markings, mark=at position (0.58) with  {\arrow[black,line width=0.7mm]{>}};},postaction=decorate,line width=0.3mm] (0.0,3.0) -- (0.0,0.0);
\draw[decoration={markings, mark=at position (0.42) with  {\arrow[black,line width=0.7mm]{<}};},postaction=decorate,line width=0.3mm] (0.0,3.0) -- (0.0,0.0);
\draw[decoration={markings, mark=at position (0.55) with  {\arrow[black,line width=0.7mm]{>}};},postaction=decorate,line width=0.3mm] (5.0,0.0) -- (0.0,0.0);
\draw[decoration={markings, mark=at position (0.45) with  {\arrow[black,line width=0.7mm]{<}};},postaction=decorate,line width=0.3mm] (5.0,0.0) -- (0.0,0.0);
\draw[decoration={markings, mark=at position (0.53) with  {\arrow[black,line width=0.7mm]{>}};},postaction=decorate,line width=0.3mm] (5.0,6.0) -- (0.0,0.0);
\draw[decoration={markings, mark=at position (0.46) with  {\arrow[black,line width=0.7mm]{<}};},postaction=decorate,line width=0.3mm] (5.0,6.0) -- (0.0,0.0);
\draw[decoration={markings, mark=at position (0.54) with  {\arrow[blue,line width=0.7mm]{>}};},postaction=decorate,line width=0.3mm, blue] (5.0,3.0) -- (10.0,6.0);
\draw[decoration={markings, mark=at position (0.46) with  {\arrow[blue,line width=0.7mm]{<}};},postaction=decorate,line width=0.3mm, blue] (5.0,3.0) -- (10.0,6.0);
\draw[decoration={markings, mark=at position (0.28) with  {\arrow[black,line width=0.7mm]{>}};},postaction=decorate,line width=0.3mm] (10.0,3.0) -- (0.0,0.0);
\draw[decoration={markings, mark=at position (0.23) with  {\arrow[black,line width=0.7mm]{<}};},postaction=decorate,line width=0.3mm] (10.0,3.0) -- (0.0,0.0);
\draw[decoration={markings, mark=at position (0.54) with  {\arrow[black,line width=0.7mm]{>}};},postaction=decorate,line width=0.3mm] (5.0,3.0) -- (0.0,0.0);
\draw[decoration={markings, mark=at position (0.46) with  {\arrow[black,line width=0.7mm]{<}};},postaction=decorate,line width=0.3mm] (5.0,3.0) -- (0.0,0.0);
                                   	      \end{tikzpicture}
                                \end{center}
                \end{subfigure}%
    		\caption{Reducible and Irreducible Incentive Constraints} \label{f:ic}
{\scriptsize \vspace{.2 cm} \Cref{f:ic} shows reducible and irreducible incentive constraints for worker $A$. When the irreducible incentive constraints between workers $A$ and $B$ are satisfied (as indicated by the black solid line between workers $A$ and $B$), and the irreducible incentive constraints between workers $B$ and $C$ are satisfied (as indicated by the blue solid line between workers $B$ and $C$), then reducible incentive constraints between $A$ and $C$ are satisfied (as indicated by the orange dashed line). Every reducible incentive constraint is satisfied when the irreducible constraints are. The black solid lines denote all the irreducible incentive constraints for worker $A$.}

\end{figure}
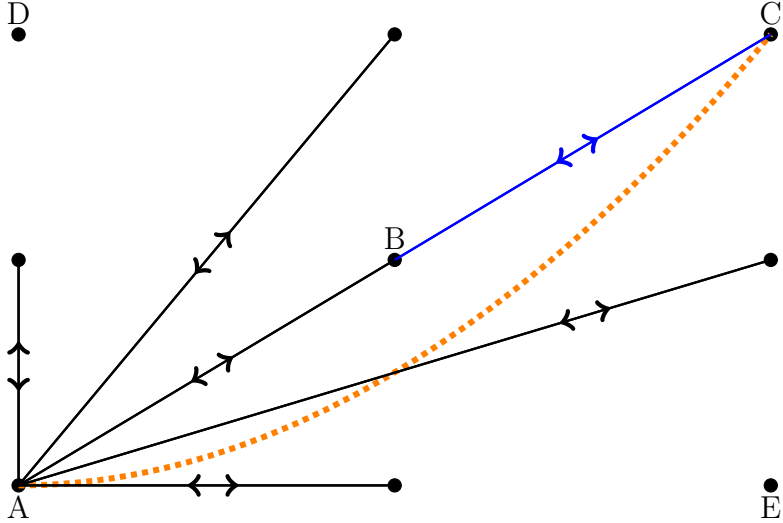

\vspace{0.1 cm}
\begin{definition*}
A couple of points $(p,q) \subseteq L$ is \underline{irreducible\vphantom{g}} if there is no point $m\in L$ on the interval between $p$ and $q$. 
\end{definition*}

\begin{lemma}\label{prop:reducible_ics}
All reducible incentive constraints are implied by irreducible incentive constraints. 
\end{lemma}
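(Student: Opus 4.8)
The plan is to argue by induction on the number of types of $L$ lying on the open segment between the two points of the pair, exploiting the convexity-type structure of the constraints. Writing each incentive constraint in the form of (\ref{e:linear_ic2}), the constraint that type $p$ does not report $q$ reads $u(p) \ge u(q) + \langle p - q, \nabla u(q)\rangle = u(q) - \langle p-q, x(q)\rangle$, where I use $\nabla u(q) = -x(q)$. A pair $(p,q)$ is reducible precisely when some $m \in L$ lies strictly between $p$ and $q$, so that $m = \lambda p + (1-\lambda) q$ for some $\lambda \in (0,1)$; irreducible pairs constitute the base case, where there is nothing to prove, and by symmetry it suffices to treat the direction from $p$ to $q$.

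For the inductive step I would fix such an intermediate $m$ and add the constraint that $p$ does not report $m$ to the constraint that $m$ does not report $q$. Summing $u(p) \ge u(m) - \langle p - m, x(m)\rangle$ and $u(m) \ge u(q) - \langle m-q, x(q)\rangle$ and substituting $p - m = (1-\lambda)(p-q)$ and $m-q = \lambda(p-q)$ gives
\[
u(p) \ge u(q) - (1-\lambda)\langle p-q, x(m)\rangle - \lambda \langle p-q, x(q)\rangle .
\]
Comparing with the target $u(p) \ge u(q) - \langle p-q, x(q)\rangle$, the two differ only by the term $(1-\lambda)\langle p-q, x(q) - x(m)\rangle$, so the chained constraints imply the direct one as soon as this term is shown to be nonnegative.

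The crux, and the one genuinely non-trivial ingredient, is this monotonicity inequality $\langle p-q, x(q) - x(m)\rangle \ge 0$. I would obtain it by adding the two opposing constraints between $m$ and $q$: summing $u(m) \ge u(q) - \langle m-q, x(q)\rangle$ with $u(q) \ge u(m) - \langle q-m, x(m)\rangle$ cancels the utility levels and yields
\[
\langle m-q, x(m) - x(q)\rangle \le 0 ,
\]
which after substituting $m-q = \lambda(p-q)$ and dividing by $\lambda>0$ is exactly the desired inequality. Hence the reducible constraint from $p$ to $q$ is implied by just three constraints: those between $p$ and $m$, between $m$ and $q$, and between $q$ and $m$.

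To close the induction I would observe that each of these three constraints is supported on a segment, $[p,m]$ or $[m,q]$, whose open interior contains strictly fewer types of $L$ than $[p,q]$ does, since $m$ itself is lost as an interior point; by the induction hypothesis each is therefore implied by irreducible constraints, and the claim follows. The main obstacle is really just isolating the monotonicity of the allocation $x$ along the segment — itself a consequence of the two-sided incentive constraints and reflecting the convexity of $u$ from Lemma \ref{lemma:p_convexity} — as the precise bridge between the chained constraints and the direct one; once that is identified, the remainder is bookkeeping on the induction count and an appeal to symmetry for the reverse direction.
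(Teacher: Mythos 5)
Your proof is correct and follows essentially the same route as the paper's: you chain the two incentive constraints through the intermediate point $m$, extract the key monotonicity inequality $\langle p-q,\, x(q)-x(m)\rangle \geq 0$ by summing the two opposing constraints between $m$ and $q$ (this is exactly the paper's step $\langle v, x(p)\rangle \geq \langle v, x(p+\lambda v)\rangle$ in its ray parameterization), and close by reducing to pairs with no interior points of $L$. Your induction on the number of types in the open segment is merely a more explicit bookkeeping of the paper's concluding descent to the lowest possible $\lambda$ along each ray, so the two arguments coincide in substance.
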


\noindent The proof is presented below.

\Cref{f:ic} shows the reducible and the irreducible incentive constraints for worker $A$. When the irreducible incentive constraints between workers $A$ and $B$ are satisfied (as indicated by the black solid line between workers $A$ and $B$), and the irreducible incentive constraints between workers $B$ and $C$ are satisfied (as indicated by the blue solid line between workers $B$ and $C$), then reducible incentive constraints between $A$ and $C$ are satisfied (as indicated by the orange dashed line). Every reducible incentive constraint is satisfied when the irreducible constraints are. The black solid lines denote the irreducible incentive constraints for worker $A$. The incentive constraints between workers $A$ and $D$ as well as between workers $A$ and $E$ are also reducible.


\vspace{0.4 cm}
\noindent We next establish that no other incentive constraints can be eliminated a priori. The set of feasible allocations strictly increases by removing any of the irreducible incentive constraint.

\vspace{0.05 cm}
\begin{lemma} \label{prop:minimal_ics}
Consider any irreducible incentive constraint where worker type $p_0$ does not want to report to be of type $q_0$. If we eliminate such an incentive constraint, then there exists an allocation that satisfies all other incentive constraints while worker type $p_0$ wants to report $q_0$. That is, for any irreducible pair $(p_0,q_0) \subseteq L$ there exist functions $(u,x)$ such that:
\begin{equation*}
u(p) - u(q) \geq \langle p - q, - x (q) \rangle 
\end{equation*}
for all $(p,q) \in L$ where $(p,q) \neq (p_0,q_0)$ and 
\begin{equation*}
u(p_0) - u(q_0) < \langle p_0 - q_0, - x (q_0) \rangle .
\end{equation*}
\end{lemma}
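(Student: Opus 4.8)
The plan is to exploit the fact that the value function $u$ and the slopes $x$ may be prescribed independently: in the mechanism each type $q$ is assigned a pair $(c(q),x(q))$ and $u(q):=c(q)-\langle q,x(q)\rangle$, so by varying $c(q)$ one can specify $u(q)$ and $x(q)$ as free data subject only to the linear inequalities in the statement. The structural observation I would use is that the slope $x(q_0)$ enters \emph{only} the constraints whose target is $q_0$, namely $\{(p,q_0):p\in L\}$; every other constraint is untouched by the choice of $x(q_0)$. It therefore suffices to build a convex $u$ together with genuine subgradients at all $q\ne q_0$, which automatically satisfy every constraint with target $q\ne q_0$ (and every constraint in which $q_0$ is the deviator, by convexity), and then to choose a single tilted slope at $q_0$ that overshoots $p_0$ while staying below all other types.

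Concretely, I would set $u(p):=\mathrm{dist}\big(p,[q_0,p_0]\big)^2$, the squared Euclidean distance from $p$ to the closed segment $[q_0,p_0]$. This function is convex (squared distance to a convex set), nonnegative, and vanishes exactly on the segment; by irreducibility the only points of $L$ on the closed segment are its endpoints, so $u(q_0)=u(p_0)=0$ while $u(p)>0$ for every other $p\in L$. For each $q\ne q_0$ I would take $x(q)$ to be the negative of any subgradient of $u$ at $q$. Since a subgradient of a convex function supports the entire graph, $u(p)-u(q)\ge\langle p-q,-x(q)\rangle$ holds for all $p$, which covers every constraint with target $q\ne q_0$; the constraints $(q_0,q)$ hold as well because these supporting hyperplanes lie weakly below $u(q_0)$.

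For the slope at $q_0$ I would set $-x(q_0):=\eta\,d$ with $d:=p_0-q_0$ and $\eta>0$. Then the target-$q_0$ constraint for $p_0$ reads $\eta|d|^2>0=u(p_0)-u(q_0)$, which fails for every $\eta>0$ — precisely the desired violation. For any other $p\ne p_0,q_0$ the corresponding constraint reads $\eta\langle d,p-q_0\rangle\le u(p)$, and since $u(p)>0$ is a fixed positive number while the left-hand side is $O(\eta)$, it holds once $\eta$ is small enough; as $L$ is finite a single small $\eta$ works uniformly. This delivers functions $(u,x)$ satisfying all constraints except $(p_0,q_0)$.

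The step I expect to be the crux is the second one: manufacturing a convex $u$ for which $q_0,p_0$ form an \emph{exposed edge} of the lower graph, i.e. for which some supporting hyperplane at $q_0$ is tight at $p_0$ but strict at all other types. This is exactly where irreducibility is indispensable: if some $m\in L$ lay strictly between $q_0$ and $p_0$, convexity would force the lift of $m$ onto (or below) the segment joining the lifts of $q_0$ and $p_0$, so the tilt violating $(p_0,q_0)$ would simultaneously violate $(m,q_0)$ — consistent with \Cref{prop:reducible_ics}, which states that such a reducible constraint is redundant and hence cannot be isolated. The squared-distance-to-segment choice resolves this delicate edge-exposure problem in one stroke, since its zero set is exactly the segment and irreducibility guarantees no other point of $L$ spoils it.
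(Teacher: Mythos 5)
Your proof is correct, and it takes a genuinely different route from the paper's. The paper argues by induction on $|L|$: the base case $|L|=2$ is handled with $x\equiv 0$, and the inductive step removes a vertex $z$ of the convex hull of $L$ with $z\notin\{p_0,q_0\}$ (irreducibility enters exactly here, to guarantee such a vertex exists, since otherwise the hull would be the segment $[p_0,q_0]$ and any third point would lie between them), then extends $u$ to $z$ by the maximal binding value $\mathrm{C}$ and manufactures $x(z)$ via a separating-hyperplane (Farkas-type) argument with a scaling factor $t$ chosen large enough. You instead give a one-shot global construction: $u(p)=\mathrm{dist}(p,[q_0,p_0])^2$ is convex and in fact $C^1$ (gradient $2(p-\Pi(p))$ where $\Pi$ is the projection onto the segment), so taking $x(q)=-\nabla u(q)$ for $q\neq q_0$ makes every constraint with target $q\neq q_0$ an instance of the gradient inequality, and the single tilted slope $-x(q_0)=\eta(p_0-q_0)$ violates $(p_0,q_0)$ since $u(p_0)=u(q_0)=0$, while finiteness of $L$ and the strict positivity $u(p)>0$ off the endpoints (this is where your use of irreducibility lives, dual to the paper's) give a uniform small $\eta$ validating all other target-$q_0$ constraints; I verified the remaining edge cases, e.g.\ $x(p_0)=0$ so the reverse constraint $(q_0,p_0)$ holds with equality, and collinear points beyond $p_0$ still have $u>0$. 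What each approach buys: yours is shorter, fully explicit, avoids both the induction and the vertex/Farkas machinery, and shows additionally that the violating profile can be taken to be (the restriction of) a globally convex, smooth $u$ with genuine gradients everywhere except the one tilted slope; the paper's inductive extension scheme is more combinatorial, but its extend-one-point-at-a-time structure is the kind of argument that adapts more readily if one wanted to preserve extra properties of a given partial allocation on a sublattice rather than build the counterexample from scratch.
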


\noindent The proof is presented below. 

We denote the set of utility allocations that satisfy the set of irreducible linear incentive constraints by $\mathcal{I}$. 

\vspace{0.4 cm}
\noindent \textbf{Proof to Lemma \ref{prop:reducible_ics}}. Let $v$ be a ray from a parameter point $p = (p_c,p_m)$, and let scalar parameters $\lambda$ and $\beta$ such that $0 < \lambda < \beta$. We consider points $p + \lambda v$ and $p + \beta v$.

We first show that if incentive constraints between points $p$ and $p + \lambda v$ as well as $p + \lambda v$ and $p + \beta v$ are satisfied, then incentive constraints between $p$ and $p + \beta v$ are implied. By (\ref{e:linear_ic2}), we know the incentive constraint that $p$ does not want to report $q = p + \lambda v$ implies:
\begin{equation}
u(p) - u(p + \lambda v) \geq - \lambda \langle v, \partial u(p + \lambda v) \rangle = \lambda \langle v, x(p + \lambda v) \rangle .
\end{equation}
Similarly, the incentive constraint that $q = p + \lambda v$ does not want to report $p$ implies:
\begin{equation*}
u( p + \lambda v) - u(p) \geq  \lambda \langle  v , \partial u(p) \rangle = - \lambda \langle  v , x (p) \rangle
\end{equation*}
Adding these two constraints, we obtain $\langle  v , x (p) \rangle \geq \langle v, x(p + \lambda v) \rangle$.

Using the incentive constraint that $p + \beta v$ does not want to report $p + \lambda v$:
\begin{equation}
u(p + \beta v) - u(p + \lambda v) \geq - (\beta - \lambda) \langle v , x(p + \lambda v) \rangle , \label{e:IC1}
\end{equation}
we show that given these incentive constraints, the constraint between points $p + \beta v$ and $p$ is implied. We evaluate 
\begin{align*}
u(p + \beta v) - u(p) & = \big( u(p + \beta v) - u(p + \lambda v) \big) + \big( u(p + \lambda v) - u(p) \big) \\
& \geq - (\beta - \lambda) \langle v , x(p + \lambda v) \rangle - \lambda \langle  v , x (p) \rangle \\
& \geq - (\beta - \lambda) \langle v , x(p) \rangle - \lambda \langle  v , x (p) \rangle = - \beta \langle v , x(p) \rangle
\end{align*}
where the first inequality follows from the first and third incentive constraint, while the second inequality follows as $\langle  v , x (p) \rangle \geq \langle v, x(p + \lambda v) \rangle$. The final equality indeed implies that $p + \beta v$ does not want to report $p$.

Similarly, we use the incentive constraint that $p + \lambda v$ does not want to report $p + \beta v$:
\begin{equation}
u(p + \lambda v) - u(p + \beta v) \geq - (\lambda - \beta) \langle v, x(p + \beta v) \rangle \label{e:IC2}
\end{equation}
in order to prove that $p$ does not want to report $p + \beta v$:
\begin{align*}
u(p) - u(p + \beta v) & = \big( u(p) - u(p + \lambda v) \big) + \big( u(p + \lambda v) - u(p + \beta v) \big) \\
& \geq \lambda \langle v, x(p + \lambda v) \rangle - (\lambda - \beta) \langle v, x(p + \beta v) \rangle \\
& \geq \lambda \langle v, x(p + \beta v) \rangle - (\lambda - \beta) \langle v, x(p + \beta v) \rangle = \beta \langle v, x(p + \beta v) \rangle
\end{align*}
where the final inequality follows by adding (\ref{e:IC1}) and (\ref{e:IC2}), which implies $\langle v, x(p + \lambda v) \rangle \geq \langle v, x(p + \beta v) \rangle$. This shows we do not need to incorporate incentive constraints between $p$ and $p + \beta v$ when we incorporate the incentive constraint between $p$ and $p + \lambda v$, and between $p + \lambda v$ and $p + \beta v$.

The final step is that our result so far held for general scalar parameters $\lambda$ and $\beta$ be such that $0 < \lambda < \beta$. We note that for $\underline{\lambda}$ so that $0 < \underline{\lambda} < \lambda < \beta$, we can show that the constraints between $p$ and $p + \lambda v$ are implied by the constraints between $p$ and $p + \underline{\lambda} v$ as well as $p  + \underline{\lambda} v$ and $p + \lambda v$. Hence, for every point $p$ we only need to consider the constraints for the lowest possible values for $\lambda$. These constraints are irreducible.

\vspace{0.4 cm}
\noindent \textbf{Proof to Lemma \ref{prop:minimal_ics}}. By induction. The induction base is a set of points in $L$ with $|L| = 2$. Let the points within the set be $p_0$ and $q_0$. We show there exist functions $(u,x)$ so that: 
\begin{align*}
u(q_0) - u(p_0) & \geq \langle q_0 - p_0, - x (p_0) \rangle \\
u(p_0) - u(q_0) & < \langle p_0 - q_0, - x (q_0) \rangle .
\end{align*}
Construct the function $x(p_0) = x(q_0) = 0$, and $u(p_0) = 0$ and $u(q_0) = 1$.

Induction step for $|L| = n + 1$ points. Let $z$ denote a vertex of the convex hull of set $L$ which is neither $p_0$ nor $q_0$. Such a point indeed exists, else the convex hull is an interval between $p_0$ and $q_0$, implying that any other point of the set $L$ would be a point between $p_0$ and $q_0$ contradicting $(p_0,q_0)$ is irreducible.

Remove the point $z$ from the set $L$. By induction step at $n$, there exist functions $(u,x)$ for the lattice $L\setminus \{ z \}$ such that for the same irreducible pair $(p_0,q_0) \subseteq L \setminus \{ z \}$: 
\begin{equation*}
u(p) - u(q) \geq \langle p - q, - x (q) \rangle 
\end{equation*}
for all $(p,q) \in L \setminus \{ z \}$ and $(p,q) \neq (p_0,q_0)$ and 
\begin{equation*}
u(p_0) - u(q_0) < \langle p_0 - q_0, - x (q_0) \rangle .
\end{equation*}

We need to extend the functions $u$ and $x$ onto the point $z$. Here, we will use that $z$ is a vertex of the convex hull. We construct the value for the functions $u$ and $x$ at point $z$. At the point $z$, we require: 
\begin{align*}
u(z) - u(p) & \geq \langle z - p, - x (p) \rangle \\
u(p) - u(z) & \geq \langle p - z, - x (z) \rangle 
\end{align*}
for all $p \in L \setminus \{ z \}$. Reorganizing, the first inequality becomes 
\begin{equation*}
u(z) \geq \max_{p \neq z} \; \big\{ u(p) + \langle z - p, - x (p) \rangle \big\} =: \mathrm{C}
\end{equation*}
where we observe constant $\mathrm{C}$ is independent of both $u(z)$ and $x(z)$. We set $u(z) = \mathrm{C}$. As a result, the second inequality is written as: 
\begin{equation*}
u(p) - \mathrm{C} \geq \langle p - z, - x (z) \rangle .
\end{equation*}


To show that the inequality is satisfied, we use the following variation of Farkas' Lemma. For any convex polytope $\mathrm{P}$ and for any vertex $v$ of this polytope, there exist a hyperplane such that $v$ belongs to the hyperplane while all other points of the convex polytope $\mathrm{P}$ lie strictly on one side of it. Equivalently, there exists a vector $h$ such that $\langle x - v, h \rangle < 0$ for all $x \in \mathrm{P} \setminus \{ v \}$. 

Since point $z$ is a vertex of the convex hull of $L$ there exists $\tilde{x}(z)$ so that $\langle p - z, - \tilde{x}(z) \rangle < 0$ for every $p \in L \setminus \{ z \}$. Define the constant $\mathrm{C}_p = \langle p - z, - \tilde{x}(z) \rangle < 0$. Then there exists positive value $t_p > 0$ so that: 
\begin{equation*}
\langle p - z, - t_p \tilde{x}(z) \rangle = t_p \mathrm{C}_p < u(p) - \mathrm{C}.
\end{equation*}
Further, let $t = \max\limits_{p \in L \setminus \{ z \}} t_p$, implying that $\langle p - z , - t \tilde{x}(z) \rangle = t \mathrm{C}_p \leq t_p \mathrm{C}_p < u(p) - \mathrm{C}$ for all $p \in L \setminus \{ z \}$. Hence, we set $x(z) = t \tilde{x}(z)$ to conclude our claim. 

\subsection{Lemma \ref{l:strconvex}} \label{pf:strconvex}

We first prove that if the indirect utility function (\ref{e:indirect_utility}) is strongly convex, then there is no bunching. By the inverse function theorem, using that the indirect utility function $u$ is twice continuously differentiable, if the Jacobian matrix of the mapping from $p$ to $x$ is invertible then the labor task allocation is invertible. The Jacobian matrix of the mapping from $p$ to $x$ is the negative to the Hessian matrix of the indirect utility function $ 
\Big( \begin{smallmatrix} \partial x_{c} / \partial p_c & \partial x_{c} / \partial p_m \\ \; \partial x_{m} / \partial p_c \;\; & \;\; \partial x_{m} / \partial p_m \; \end{smallmatrix}\Big)$ using $x(p) = - \nabla u(p)$. Since the utility function $u$ is strongly convex for worker $p$, its Hessian matrix is invertible, and hence the Jacobian matrix is. Summarizing, if the utility function is strongly convex, then there is no bunching. 


Now we prove that if the indirect utility function (\ref{e:indirect_utility}) is not strongly convex, i.e. the Hessian matrix is degenerate for all workers in the neighborhood of $p$, worker $p$ is bunched, or $p \in \mathcal{B}$. To prove this statement, consider a mapping $f$ from worker type $p$ to the labor allocation $x$, and let $\mathcal{P}$ denote the neighborhood of workers around $p$ such that Hessian matrix $H(u)$ is degenerate for all workers $p \in \mathcal{P}$.  Since the Jacobian matrix of the mapping $f$ is the negative to the Hessian matrix of the indirect utility function, the Jacobian matrix is degenerate for all workers $p \in \mathcal{P}$. Equivalently, $\mathcal{P}$ is a critical set. By Sard's theorem it follows that the image $f(\mathcal{P})$ has Lebesgue measure zero. 

To prove that worker $p$ is bunched, suppose by contradiction they are not, $p \notin \mathcal{B}$. Equivalently, the mapping $f$ is injective in a neighborhood $\hat{\mathcal{P}} \subseteq \mathcal{P}$. By the invariance of domain, the image $f(\hat{\mathcal{P}})$ is a non-empty open set. This implies that the Lebesgue measure is strictly positive for the image, contradicting the implication from Sard's theorem. Thus, workers are bunched when the optimality condition does not hold.


\subsection{Proposition \ref{p:implementable}} \label{pf:implementable}

To prove the proposition, we prove Lemma \ref{lemma:sd_basic} and Lemma \ref{lemma:sd2_basic}.

\vspace{0.20 cm}
\begin{lemma}\label{lemma:sd_basic}
Let $( c,x )$ solve the planner problem. The following condition holds with equality at an optimum:
\begin{equation}
\int \big( \mathcal{C}'( c ) c + z \big( \mathcal{X}' ( x_c ) x_c + \mathcal{X}' ( x_m ) x_m \big) \big) \pi  \text{d}p  = \lambda \int \big( c  - p_c x_c -p_m x_m \big) \pi \text{d}p \label{e:stoch_dominance_basic}.
\end{equation}
\end{lemma}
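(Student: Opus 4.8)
The plan is to exploit the homogeneity of the feasibility constraints under proportional rescaling of the optimal allocation, following the first variational argument sketched after Proposition \ref{p:implementable}. Let $(c,x_s)$ denote the optimal allocation and, for $\epsilon$ in a neighborhood of $0$, consider the scaled allocation $\big( (1+\epsilon) c, (1+\epsilon) x_s \big)$. The first step is to verify this perturbation is feasible, that is, lies in $\mathcal{I}$. Since $1+\epsilon>0$ for $\epsilon$ small, and since both the linear incentive constraints (\ref{e:linear_ic}) and the outside option constraints (\ref{e:participation_linear}) are homogeneous of degree one in $(c,x_c,x_m)$ $-$ the latter because $\underline{\mathcal{U}}=0$ without loss of generality $-$ multiplying an admissible allocation by $1+\epsilon$ preserves every inequality. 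Equivalently, in terms of the indirect utility function the scaled allocation has indirect utility $(1+\epsilon)u$, which remains nonnegative, decreasing, and convex whenever $u$ is. Thus the scaled allocation is feasible for all sufficiently small $\epsilon$ of either sign.

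The second step is to use optimality. Because $(c,x_s)$ minimizes the Lagrangian (\ref{e:lagrange_basic}) over feasible allocations, the scalar map $\epsilon \mapsto \mathcal{L}\big( (1+\epsilon)c, (1+\epsilon) x_s \big)$ attains a minimum at $\epsilon=0$. Since this perturbation is two-sided (admissible for $\epsilon$ both positive and negative), the first-order condition must hold with equality:
\begin{equation}
\frac{\mathrm{d}}{\mathrm{d}\epsilon}\, \mathcal{L}\big( (1+\epsilon)c, (1+\epsilon) x_s \big) \Big|_{\epsilon = 0} = 0 .
\end{equation}

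The third step is to compute this derivative. Differentiating under the integral sign and applying the chain rule, the cost terms contribute $\int \big( \mathcal{C}'( c )\, c + z ( \mathcal{X}_c' ( x_c )\, x_c + \mathcal{X}_m' ( x_m )\, x_m ) \big) \pi\, \text{d}p$, while the promise-keeping term contributes $\lambda \int \big( c - p_c x_c - p_m x_m \big) \pi\, \text{d}p$ (the constant $\lambda\mathcal{U}$ is annihilated by the derivative). Setting the sum equal to zero delivers exactly (\ref{e:stoch_dominance_basic}).

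The main obstacle is not conceptual but resides in two regularity checks. The first is confirming feasibility of the two-sided scaling, which rests squarely on the homogeneity of the constraints and the normalization $\underline{\mathcal{U}}=0$; without this normalization the outside option constraint would not be scale-invariant and the argument would break. The second is justifying the interchange of differentiation and integration, for which one invokes the continuous differentiability of the Legendre-dual cost functions $\mathcal{C}$ and $\mathcal{X}_s$ together with the a priori boundedness of the optimal allocation $(c,x_s)$ to apply a dominated-convergence argument.
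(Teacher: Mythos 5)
Your proof is correct and follows essentially the same argument as the paper: a two-sided proportional scaling $(1+\epsilon)(c,x_s)$, which is feasible by homogeneity of the incentive and outside-option constraints (with $\underline{\mathcal{U}}=0$), combined with the first-order condition of the Lagrangian at $\epsilon=0$. The only differences are cosmetic — you make explicit the feasibility and differentiation-under-the-integral checks that the paper leaves implicit — and note that the promise-keeping term enters the Lagrangian with a minus sign, so its "contribution" should be stated as $-\lambda\int(c-p_cx_c-p_mx_m)\pi\,\text{d}p$ before setting the total derivative to zero.
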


\vspace{0.2 cm}
\begin{proof}
Consider an allocation $(c,x)$ that satisfies the incentive compatibility constraints. Consider multiplying this allocation by a constant factor $\zeta > 0$, to obtain the scaled allocation $\zeta (c,x)$. The Lagrangian of the scaled allocation exceeds the Lagrangian of the optimal allocation $(c,x)$, or $\mathcal{L}( c,x ) \leq \mathcal{L}( \zeta (c,x))$. Therefore, we can consider a variation around the optimal allocation $(c,x)$, where we scale the allocation by a small factor $\varepsilon$, so that alternative allocation $(c + \varepsilon c, x +\varepsilon x)$ is feasible. Given such a variation, the implied change in the resource cost is:
\begin{equation} 
\Delta = \varepsilon \Big( \int \big( \mathcal{C}'( c ) c + z \big( \mathcal{X}' ( x_c ) x_c + \mathcal{X}' ( x_m ) x_m \big) \big) \pi  \text{d}p  - \lambda \int \big( c - p_c x_c - p_m x_m \big) \pi \text{d}p \Big) + o(\varepsilon).
\end{equation}
At an optimum, neither a positive $(\varepsilon > 0)$ nor a negative $(\varepsilon < 0)$ small variation decreases the cost of resources, so $\Delta = o(\varepsilon)$, which establishes (\ref{e:stoch_dominance_basic}).\end{proof}

\begin{lemma}\label{lemma:sd2_basic}
Let $( c,x )$ solve the planning problem, then the implementability condition (\ref{e:stoch_dominance_weak_basic1}) holds for any feasible allocation $(\hat{c},\hat{x}) \in \mathcal{I}$.
\end{lemma}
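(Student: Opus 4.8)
The plan is to combine convexity of the feasible set $\mathcal{I}$ with the optimality of $(c,x_s)$, and then to use Lemma \ref{lemma:sd_basic} to cancel the terms evaluated at the optimum. First I would record that $\mathcal{I}$ is convex: it is cut out by the linear incentive constraints (\ref{e:linear_ic}) and the linear outside option constraints (\ref{e:participation_linear}), so any convex combination of feasible allocations is again feasible. Hence, for any $(\hat{c},\hat{x}_s) \in \mathcal{I}$ and any $\varepsilon \in [0,1]$, the allocation
\[
(c_\varepsilon, x_{s,\varepsilon}) := (1-\varepsilon)(c,x_s) + \varepsilon (\hat{c},\hat{x}_s)
\]
also lies in $\mathcal{I}$, and it interpolates between the optimum and the arbitrary feasible point.

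Next I would use optimality. Since $(c,x_s)$ minimizes the Lagrangian $\mathcal{L}$ in (\ref{e:lagrange_basic}) over $\mathcal{I}$, the scalar map $\varepsilon \mapsto \mathcal{L}(c_\varepsilon, x_{s,\varepsilon})$ attains its minimum over $[0,1]$ at $\varepsilon = 0$. Because $\mathcal{C}$ and $\mathcal{X}_s$ are differentiable, this map is differentiable in $\varepsilon$, so its right derivative at $\varepsilon=0$ must be nonnegative. Differentiating under the integral sign gives
\[
\int \big( \mathcal{C}'(c)(\hat{c}-c) + z \big( \mathcal{X}_c'(x_c)(\hat{x}_c - x_c) + \mathcal{X}_m'(x_m)(\hat{x}_m - x_m) \big) \big)\pi\,\text{d}p - \lambda \int \big( (\hat{c}-c) - p_c(\hat{x}_c - x_c) - p_m(\hat{x}_m - x_m) \big)\pi\,\text{d}p \geq 0 .
\]

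Finally I would separate the terms carrying $(\hat{c},\hat{x}_s)$ from those carrying $(c,x_s)$. The collection of terms evaluated at the optimum is exactly
\[
-\int \big( \mathcal{C}'(c)c + z \big( \mathcal{X}_c'(x_c)x_c + \mathcal{X}_m'(x_m)x_m \big) \big)\pi\,\text{d}p + \lambda \int \big( c - p_c x_c - p_m x_m \big)\pi\,\text{d}p ,
\]
which vanishes by the equality established in Lemma \ref{lemma:sd_basic}. What remains is precisely the implementability condition (\ref{e:stoch_dominance_weak_basic1}), as claimed. The only genuine subtlety is that the conclusion must be an \emph{inequality} rather than an equality: unlike the proportional rescaling used in Lemma \ref{lemma:sd_basic}, which may be taken in either direction and hence forces equality, the convex combination is only feasible for $\varepsilon \geq 0$, so only the right derivative is controlled. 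I would also note the mild regularity needed to interchange differentiation and integration, which follows from the differentiability and smoothness of the primitives already assumed.
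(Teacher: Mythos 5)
Your proof is correct and follows essentially the same route as the paper: both exploit convexity of $\mathcal{I}$ to form the one-sided perturbation $(1-\varepsilon)(c,x_s)+\varepsilon(\hat{c},\hat{x}_s)$, take the first-order term in $\varepsilon$ (the paper writes it as an $o(\varepsilon)$ expansion, you as a right derivative, which is the same thing), and invoke Lemma \ref{lemma:sd_basic} to cancel the terms evaluated at the optimum. Your closing remark on why the one-sided perturbation yields only an inequality is exactly the distinction the paper draws between this argument and the bidirectional rescaling of Lemma \ref{lemma:sd_basic}.
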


\begin{proof} If two allocations $(c,x)$ and $(\hat{c},\hat{x})$ satisfy the incentive constraints, so does their convex combination $(\tilde{c},\tilde{x}) = (1 - \varepsilon) (c,x) +\varepsilon (\hat{c},\hat{x})$ for any $\varepsilon \in (0,1)$. If $(c,x)$ is a planner solution, it follows that for any $\varepsilon \in (0,1)$ and for any feasible allocation $(\hat{c},\hat{x})$, a convex combination of the alternative allocation and the solution increases the Langrangian value relative to its optimum, $\mathcal{L}(\tilde{c},\tilde{x}) - \mathcal{L}( c,x ) \geq 0$. By construction of the convex combination, this is equivalent to $\mathcal{L}\big((c,x) + \varepsilon \big((\hat{c},\hat{x}) - (c,x) \big) \big) - \mathcal{L}\big( c,x \big) \geq 0$. 

To further develop this, note that $\mathcal{L}((c,x) + \varepsilon ((\hat{c},\hat{x}) - (c,x) ) ) - \mathcal{L}( c,x ) = \varepsilon ( \int ( \mathcal{C}'( c ) (\hat{c} - c) + z \sum \mathcal{X}' ( x_s ) (\hat{x}_s - x_s) ) \pi  \text{d}p  - \lambda \int ( (\hat{c} - c) - \sum p_s (\hat{x}_s - x_s) ) \pi \text{d}p \big) + o(\varepsilon) = \varepsilon ( \int ( \mathcal{C}'( c )\hat{c} + z \sum \mathcal{X}' ( x_s ) \hat{x}_s ) \pi  \text{d}p  - \lambda \int ( \hat{c} - \sum p_s \hat{x}_s ) \pi \text{d}p ) + o(\varepsilon)$, where the final equality follows by the optimality condition in Lemma \ref{lemma:sd_basic}. Equation (\ref{e:stoch_dominance_weak_basic1}) follows because for any $\varepsilon \in (0,1)$ the previous condition is positive, that is:
\begin{equation}
\int \big( \mathcal{C}'( c ) \hat{c} + z \big( \mathcal{X}' ( x_c ) \hat{x}_c + \mathcal{X}' ( x_m ) \hat{x}_m \big) \big) \pi \text{d}p  \geq \lambda \int \big( \hat{c} - p_c \hat{x}_c - p_m \hat{x}_m \big) \pi \text{d}p \tag{\ref{e:stoch_dominance_weak_basic1}},
\end{equation}
for any allocation $(\hat{c},\hat{x}) \in \mathcal{I}$.\end{proof}

\subsection{Global Optimal Tax Formula in One Dimension}\label{a:ssd}

In this appendix, we develop the connection of our general optimal tax condition with stochastic dominance to the classic ABC formula. First, we consider equation (\ref{e:generalized_el}) under unidimensional skill heterogeneity. With a slight abuse of notation, we denote the unidimensional skill by $p$. In this case, equation (\ref{e:generalized_el}) simplifies to:
\begin{equation}
\int \partial_{p} \big( \pi (p \mathcal{C}'( c )+ z  \mathcal{X}' ( x )) \big) \hat{u} \text{d}p \geq \int \pi (\lambda - C'(c)) \hat{u} \text{d}p \label{e:generalized_el1},
\end{equation}
for any decreasing, nonnegative and convex indirect utility function $\hat{u}$ with $\hat{u}(\bar{p})=0$.\footnote{Asserting there is no bunching at the top of the unidimensional worker skill distribution, both the boundary conditions are zero under the additional condition that $\int \pi C'(c) \text{d}p \geq \lambda$.} Moreover, with one dimension of worker heterogeneity, the measure $f$ second-order stochastically dominates the measure $g$ if and only if $\int_{\underline{p}}^{\hat{p}} F(p) \text{d} p \geq \int_{\underline{p}}^{\hat{p}} G(p) \text{d} p$, where $F$ and $G$ denote cumulative distribution functions (see the next paragraph). When the unidimensional measure $\partial_{p} \pi (p \mathcal{C}'( c )+ z  \mathcal{X}' ( x ))$ second-order stochastically dominates the unidimensional measure $\pi (\lambda - C'(c))$ it thus implies:
\begin{equation}
\int^p_{\underline{p}} \frac{\pi(s)}{u'(\mathcal{C}(c(s)))} s \frac{\tau}{1 - \tau} \text{d}s \leq \int^p_{\underline{p}} \int^t_{\underline{p}} \frac{\pi(s)}{u'(\mathcal{C}(c(s)))} \left( 1 - u'(\mathcal{C}(c(s))) \lambda \right) \text{d}s \text{d}t \label{e:generalized_el2}
\end{equation}
for every worker $p$, where we use the definition of the labor skill wedge (\ref{e:optimal_wedges}), which changes the inequality sign, and also use that $\mathcal{C}'(c)=1/u'(\mathcal{C}(c))$. At an optimum, the utility-weighted average benefit of increasing marginal tax rates for all workers below $p$, on the right, exceeds the corresponding costs. The benefit of an increase in a marginal tax rate is an increase in revenues collected from workers below $p$ (high $\alpha$) net of the cost of tightening the promise-keeping constraint, $\int^t_{\underline{p}} \frac{\pi(s)}{u'(\mathcal{C}(c(s)))} \left( 1 - u'(\mathcal{C}(c(s))) \lambda \right) \text{d}s$. The cost of increasing the marginal tax for all workers below $p$  is captured by the marginal utility-weighted labor wedge. Our optimal tax formula as stochastic dominance (\ref{e:generalized_el}) extends this logic to multidimensional skills.

\vspace{0.4 cm}
\noindent \textbf{Second-Order Stochastic Dominance in One Dimension}. Let $\Upsilon_a(p)$ denote a decreasing, nonnegative and convex function parameterized by $a$ that is strictly positive for all $p < a$ and is equal to zero for all $p \geq a$. Specifically, we let $\Upsilon_a(p) := \max(a-p,0)$. Given that $\Upsilon_a(p)$ is decreasing, nonnegative, and convex, measure $f$ second-order stochastically dominating measure $g$ implies that $\int \Upsilon_a f \text{d} p \geq \int \Upsilon_a g \text{d} p$ for all $a$ following (\ref{e:stoch_dom}). Given the specification for $\Upsilon_a(p)$ this is equivalent to $\int_0^a (a-p) f \text{d} p \geq \int_0^a (a-p) g \text{d} p$ for all $a$, alternatively $\int F \text{d}p \geq \int G \text{d}p$. Since any unidimensional decreasing, nonnegative and convex indirect utility function $\hat{u}$ with $\hat{u}(\bar{p})=0$ can be considered as a positive combination of $\Upsilon_a(p)$, the claim holds.

\vspace{0.4 cm}
\noindent \textbf{Relation to \citet{Mirrlees:1971}, \citet{Diamond:1998}, and \citet{Saez:2001}}. Second, we discuss in more detail how the optimal tax condition (\ref{e:general_abceq}) directly relates to the ABC formulas in \citet{Diamond:1998} and \citet{Saez:2001}, and to the optimal tax condition in \citet{Mirrlees:1971}. To see the relationship, we first take the ABC formula in \citet{Saez:2001}, which is equation (25) in his paper:
\begin{equation}
\frac{\tau_l(\theta)}{1-\tau_l(\theta)} = \left( 1 + \frac{1}{\varepsilon} \right) \frac{1 - F(\theta)}{\theta f(\theta)} \int_\theta^\infty \left[ 1 - \frac{u_c(x)}{p} \right] \frac{u_c(\theta)}{u_c(x)} \frac{f(x)}{1 - F(\theta)} \text{d}x
\end{equation}
Reorganizing this expression, letting $\rho = \left(1 + \frac{1}{\varepsilon} \right)$, we obtain:
\begin{equation}
\frac{1}{\rho} \frac{f(\theta)}{u_c(\theta)} \theta  \frac{\tau_l(\theta)}{1-\tau_l(\theta)} = \int_\theta^\infty \left[ \frac{1}{u_c(x)} - \frac{1}{p} \right] f(x) \text{d} x ,
\end{equation}
which is, in fact, the representation of the optimal tax formula in equation (33) in \citet{Mirrlees:1971}. Differentiating this expression with respect to type $\theta$, we obtain:
\begin{equation}
\frac{1}{\rho} \partial_\theta \left( \frac{f(\theta)}{u_c(\theta)}  \theta \frac{\tau_l(\theta)}{1-\tau_l(\theta)} \right) = \left[ \lambda - \frac{1}{u_c(\theta)} \right] f(\theta) .
\end{equation}
This is the unidimensional analog to our characterization in equation (\ref{e:general_abceq}), where we observe that the multiplier $p$ on the resource constraint in \citet{Saez:2001} is the inverse of the multiplier on the promise keeping constraint $\frac{1}{\lambda}$, which follows directly from the Lagrangian (\ref{e:lagrange_basic}).

\subsection{Global Optimal Tax Formula}\label{a:globaloptimaltax}

In this appendix, we derive the global optimal taxation formula (\ref{e:general_abceq}). For ease of presentation, we assume that the convex indirect utility functions are smooth, meaning the second derivative is well-defined and continuous.

\vspace{0.3 cm}
\noindent First, we reformulate the planner problem using the definition of the indirect utility function (\ref{e:indirect_utility}). Given the indirect utility function $u$, both consumption and effort allocations can be expressed in terms of $u$ and its gradient. The planner chooses an indirect utility function $u$ to minimize the resource cost of providing welfare:
\begin{equation}
\min_{u \in C} \int \left( \mathcal{C} ( u (p) - \nabla u(p) \cdot p ) + z(p) \left( \mathcal{X} \left( - \frac{\partial u(p)}{p_c} \right) + \mathcal{X} \left( - \frac{\partial u(p)}{p_m} \right) \right) \right) \pi (p) \text{d} p ,  \label{e:resources_original4}
\end{equation}
subject to the incentive constraint that requires the indirect utility to be convex and decreasing in worker type $p$ (Lemma \ref{lemma:p_convexity}), or $u \in C$, and the promise keeping condition:
\begin{equation}
\int u(p) \pi (p) \text{d} p \geq \mathcal{U} \label{e:promise_keeping_linearu}.
\end{equation}

We introduce multiplier $\lambda \geq 0$ on the promise keeping condition (\ref{e:promise_keeping_linearu}) in order to formulate the Lagrangian:
\begin{equation*}
\min_{u \in C} \int \hspace{-0.05 cm} \left( \mathcal{C} ( u (p) - \nabla u(p) \cdot p ) + z(p) \left( \mathcal{X} \left(-\frac{\partial u(p)}{\partial p_c}\right) + \mathcal{X} \left(-\frac{\partial u(p)}{\partial p_m}\right) \right) \right) \pi \, \text{d} p - \lambda \left( \int u(p) \pi \, \text{d} p - \mathcal{U} \right) 
\end{equation*}
In the remainder of this appendix, we refer to the problem of minimizing the Lagrangian shorthand as:
\begin{equation}
\min_{u \in C} J(u) = \min_{u \in C} \int_P L(p,u(p), \nabla u(p)) \text{d} p ,
\end{equation}
where $L(p,u(p), \nabla u(p))$ is the contribution of worker type $p$ to the Lagrangian and where $P$ denotes the type space, a bounded convex subset of $\mathbb{R}^2$.

\subsubsection{Convex Indirect Utility}

The restriction that the indirect utility function is convex is equivalent to the Hessian matrix of the indirect utility function being positive semidefinite, $H(u)\succeq 0$ for all $p \in P$, which in turn is equivalent to:
\begin{equation}
v^T H(u) v \geq 0 , 
\end{equation}
for all $v \in \mathbb{R}^2$. We treat these inequalities as an infinite series of constraints parameterized by both $v$ and $p$. For each of these constraints, we introduce a corresponding multiplier $\lambda(v, p) \geq 0$. The objective function is augmented to include these multipliers:
\begin{equation}
\min_{u \in C} \; J(u) - \int_{P} \int_{\mathbb{R}^2} \lambda(v, p) \big( v^T H(u) v \big) \text{d}v \text{d}p. \label{eq:objective_app1}
\end{equation}

Next, fix a worker type $p$ and consider the multipliers associated with this worker type. Using $\langle \cdot, \cdot \rangle$ to denote the inner products of matrices, we write:
\begin{equation}
\int_{\mathbb{R}^2} \lambda(v, p) \big( v^T H(u) v \big)  \text{d} v = \bigg\langle H(u), \int_{\mathbb{R}^2}  \lambda(v, p) v v^T \text{d} v \bigg\rangle.
\end{equation}
This equation expresses the quadratic form as an inner product between the Hessian matrix $H(u)$ and a matrix defined by the integral $\int_{\mathbb{R}^2} \lambda(v, p) v v^T \text{d} v$, which we denote by $A$.

We next characterize all matrices that can be represented as a convex combination of outer products $v v^T$ over the directions $v \in \mathbb{R}^2$:
\begin{equation}
A = \int_{\mathbb{R}^2} \lambda(v,p) v v^T \text{d} v \label{eq:Amatrix}
\end{equation}
where $\lambda(v,p) \geq 0$ is a non-negative function.

\vspace{0.2 cm}
\begin{lemma}\label{l:convexity} Matrix $A$ is a symmetric and positive semidefinite if and only if $A = \int_{\mathbb{R}^2} \lambda(v,p) v v^T \text{d} v$  with respect to some function $\lambda(v,p) \geq 0$.
\end{lemma}

\vspace{0.1 cm}
\begin{proof}
We first establish that $A = \int_{\mathbb{R}^2} \lambda(v,p) v v^T \text{d} v$  with respect to some function $\lambda(v,p) \geq 0$ implies $A$ is symmetric and positive semidefinite. Matrix $A$ is symmetric as each outer product $v v^T$ is a symmetric matrix, and any linear combination or integral of symmetric matrices is symmetric. Second, $A$ is positive semidefinite because for any vector $w \in \mathbb{R}^2$, 
$w^T A w = \int_{\mathbb{R}^2} \lambda(v,p) (w^T v)^2 \text{d} v \geq 0$, which shows $A$ is positive semidefinite.

Next, we show the converse also holds. We start with a matrix that is symmetric and positive semidefinite and show that it can be written in the form of equation (\ref{eq:Amatrix}). Any symmetric matrix $A \in \mathbb{R}^{2 \times 2}$ can be factorized using its eigenvalue decomposition. If $A$ is a positive semidefinite matrix, it can be written as:
\begin{equation*}
A = Q \Lambda Q^T \;,
\end{equation*}
where $Q \in \mathbb{R}^{2 \times 2}$ is an orthogonal matrix whose columns are the eigenvectors of $A$. Since the matrix is positive semidefinite, it follows that $\Lambda$ is a diagonal matrix with nonnegative eigenvalues $\lambda_i \geq 0$. 

Since each eigenvalue $\lambda_i$ corresponds to an eigenvector $q_i$, we rewrite $A$ as a finite sum of outer products of the eigenvectors:
\begin{equation*}
A = \sum_{i=1}^2 \lambda_i q_i q_i^T.
\end{equation*}
The sum can be generalized to an integral, with the eigenvectors $q_i$ replaced by the corresponding vectors $v \in \mathbb{R}^2$, and the eigenvalues $\lambda_i$ replaced by the corresponding continuous nonnegative function $\lambda(v,p)$. In sum, positive semidefinite matrix $A$ can be expressed in the form:
\begin{equation*}
A = \int_{\mathbb{R}^2} \lambda(v) v v^T \text{d} v,
\end{equation*}
where $\lambda(v,p) \geq 0$.\end{proof}

Instead of considering the continuous family of multipliers $\lambda(v,p) $, we represent the constraint that the indirect utility function has to be convex as a matrix condition by introducing the Kuhn-Tucker matrix $M(p)$ for each $p \in P$. By Lemma \ref{l:convexity}, the matrix $M(p)$ is required to be positive semidefinite, or $M(p) \succeq 0$, for all $p \in P$. The Kuhn-Tucker matrix substitutes the term $\int_{\mathbb{R}^2} \lambda(v, p) v v^T \text{d} v$ in the objective function (\ref{eq:objective_app1}), where $\lambda(v, p) \geq 0$:
\begin{equation}
\min_{u} \; J(u) - \int_{P} \big\langle H(u), M(p) \big\rangle \text{d} p,
\end{equation}
where $M(p)$ is the positive semidefinite matrix that enforces the convexity of the indirect utility function.

We proceed by integrating by parts the term:
\begin{equation}
\int_{P} \big \langle H(u), M(p) \big\rangle \text{d}p = \int_{P} \sum_{i,j} \frac{\partial^2 u(p)}{\partial p_i \partial p_j} M_{ij}(p) \text{d}p .
\end{equation}
Through integration by parts, we shift the derivatives from the indirect utility function to the Kuhn-Tucker matrix to obtain:\footnote{Since the boundary terms do not affect the derivation of the optimality condition and the general optimal taxation formula, we suppress them for ease of exposition.}
\begin{equation}
\int_{P} \sum_{i,j} \frac{\partial^2 u(p)}{\partial p_i \partial p_j} M_{ij}(p) \text{d} p = \int_{P} \sum_{i,j} \frac{\partial^2 M_{ij}(p)}{\partial p_i \partial p_j} u(p)  \text{d} p = \int_{P} u(p) \Delta M(p) \text{d} p \;,
\end{equation}
where $\Delta M(p) = \sum\limits_{i, j}\frac{\partial^2 M_{ij}(p)}{\partial p_i \partial p_j}$. The resulting objective function for the problem is:
\begin{equation}
\min_{u}  \int_{P} L(p, u(p), \nabla u(p)) \, \text{d} p - \int_{P} u(p) \Delta M(p)\, \text{d}  p . \label{eq:elobjective}
\end{equation}
We can rewrite this by combining the promise keeping constraint with the convexity correction $u(p)\Delta M(p)$ as:
\begin{align}
\min_{u\in C}\int\hspace{-0.05cm} & \left(\mathcal{C}(u(p)-\nabla u(p)\cdot p) + z(p)\left(\mathcal{X}\left(-\frac{\partial u(p)}{\partial p_{c}}\right)+\mathcal{X}\left(-\frac{\partial u(p)}{\partial p_{m}}\right)\right)\right)\pi\,\text{d}p \notag \\ 
& \hspace{4.35 cm} -\lambda\left(\int\pi u(p)\left(1+\frac{\Delta M(p)}{\lambda\pi}\right)\,\text{d}p-\mathcal{U}\right)
\end{align}
This representation shows that the requirement that the indirect utility function is convex leads to the modified social welfare weight $1+\frac{\Delta M(p)}{\lambda\pi}=1+\frac{\sum \frac{\partial^{2}}{\partial p_{i}\partial p_{j}}M_{ij}(p)}{\lambda\pi}$. In other words, the main difference that convexity adds to the planning problem is through modifying the welfare function by the convexity correction. We next show how this result carries over to the optimality conditions.

\subsubsection{Optimality Conditions} \label{a:ocp2}

We derive the optimality conditions by using the objective function (\ref{eq:elobjective}), and by considering a small variation in the indirect utility function. The first variation of the objective gives:
\begin{equation}
\frac{\partial L}{\partial u} - \sum_{k = 1}^2 \frac{\partial}{\partial p_k}\left( \frac{\partial L}{\partial u_k} \right) = \Delta M(p) , \label{eq:el}
\end{equation}
where $u_k = \frac{\partial u}{\partial p_k}$. The left-hand side is the standard optimality condition, $\frac{\partial L}{\partial u} - \sum \frac{\partial}{\partial p_k}\big( \frac{\partial L}{\partial u_k} \big)$. This yields the optimality condition without bunching, $\frac{\partial L}{\partial u} - \sum \frac{\partial}{\partial p_k}\big( \frac{\partial L}{\partial u_k} \big) = 0$, when the convexity constraint does not bind.

The right-hand side gives the additional term involving the second derivatives of the Kuhn-Tucker matrix. This term arises from the integration by parts of the Kuhn-Tucker matrix, and represents the effect of the convexity constraint, which is enforced through the positive semidefinite matrix $M$.


Our derivation shows that the minimizer of our variational problem over convex indirect utility functions satisfies an optimality condition with an additional term arising from the Kuhn-Tucker multipliers associated with the convexity constraint. Our results build on \citet{Lions:1998} which analyzes variational problems over convex functions through a duality approach. \citet{Lions:1998} shows that the optimality conditions can be understood in terms of the polar cone of convex functions, where elements of the dual space are represented by measures linked to second derivatives. Our result explicitly incorporates Kuhn-Tucker multipliers into the variational problem to enforce the convexity constraint and includes them into the optimality condition. This  construction directly shows the effect of convexity by showing how additional measure terms arise in the optimality conditions.

\subsubsection{Optimal Tax Formula} \label{a:ocp3}

We next apply the optimality condition for the general Lagrangian (\ref{eq:el}) to the Lagrangian for the optimality multidimensional taxation problem (\ref{e:resources_original4}). As a result, we write the optimal taxation formula as:
\begin{equation}
     \partial_{p_c} \big( \pi (p_c \mathcal{C}'( c )+ z  \mathcal{X}' ( x_c ))) + \partial_{p_m} \hspace{-0.07 cm} \left( \pi (p_m \mathcal{C}'( c )+ z  \mathcal{X}' ( x_m )) \right) = \pi (\lambda - C'(c)) + \Delta M(p).
\end{equation}
Similar to our reformulation of the general optimal tax formula as stochastic dominance (\ref{e:general_abc}), we use the definition of the labor skill wedge (\ref{e:optimal_wedges}) to rewrite the optimal taxation formula as:
\begin{equation}
\partial_{p_c} \Big( \frac{\pi}{u'(\mathcal{C}(c))} p_c \frac{\tau_c}{1 - \tau_c}  \Big) + \partial_{p_m} \Big(  \frac{\pi}{u'(\mathcal{C}(c))} p_m \frac{\tau_m}{1 - \tau_m} \Big) = \pi \left( \frac{1}{u'(\mathcal{C}(c))} - \lambda \right) - \Delta M(p),
\end{equation}
which is the optimal taxation formula (\ref{e:general_abceq}).

\subsection{Corollary \ref{p:el}} \label{a:el}

We start with the region of strong convexity of the indirect utility function $u$ and, hence, a region without bunching. To analyze properties of optimal tax distortions, we use a perturbation function. Specifically, we construct a variation of the indirect utility function for a specific worker $p$. Consider a worker $p$ in the interior of the type space such that both the assignment function $z$ and the distribution of worker types $\pi$ are differentiable in a neighborhood around this worker. Moreover, suppose that the strongly convex utility function $u$ is twice continuously differentiable within a neighborhood of the worker $p$. 

Consider an arbitrary perturbation of the indirect utility $u$ denoted $\hat{u} = u + \varepsilon V$, where $V$ is a bump function that is concentrated in a small ball around $p$ which lies within the neighborhood around $p$, and $\varepsilon$ is small. The arbitrary perturbation function $u + \varepsilon V$ is convex for small enough values for $\varepsilon$ within the support of the bump function, $\vert \varepsilon \vert < \bar{\varepsilon}$. Intuitively, if the underlying utility function is strongly convex, a small enough additive perturbation preserves convexity.\footnote{The proof of this statement is presented below. See \textit{Convex Perturbation Function}.}

The perturbation function is convex, positive and non-increasing, and therefore implementable (\ref{e:stoch_dominance_weak_basic}). Since the implementability condition (\ref{e:stoch_dominance_weak_basic}) is linearly separable and holds with equality for an optimal utility function by Proposition \ref{p:implementable}, the implementability also has to be satisfied for $\varepsilon V$ for all $|\varepsilon| \leq \bar{\varepsilon}$. Since $\varepsilon$ can take either positive or negative values, the implementability condition holds with equality with respect to the bump function $V$:
\begin{equation}
\int \big( \mathcal{C}'( c ) \big( V - \nabla V \cdot p \big) - z \mathcal{X}' ( x ) \cdot \nabla V \big) \pi  \text{d}p = \lambda \int V \pi \text{d}p \label{e:stoch_dominance_weak_V}.
\end{equation}
Integrating the left-hand side of this equation by parts and tending the bump function $V$ to the Dirac delta function, we obtain the optimality condition equation in Corollary \ref{p:el}.

\vspace{0.4 cm}
\noindent \textbf{Convex Perturbation Function}. We establish that the perturbation function is convex. We suppose that the indirect utility function $u$ is strongly convex for interior worker type $p$ and twice continuously differentiable within its neighborhood. Specifically, we suppose that $H(u) - \alpha_I I$ is positive semidefinite for worker $p$ for some $\alpha_I > 0$, where $H$ denotes the Hessian matrix and $I$ denotes the identity matrix. 

Since worker $p$ is in the interior of the type space, the indirect utility function is strictly positive and strictly decreasing for worker $p$. By contradiction, suppose the indirect utility function equals zero for worker $p$, $u(p)=0$. Since the indirect utility function is non-increasing,  $u(p+\varepsilon) = 0$ for small enough $\varepsilon \geq 0$, implying that the gradient of the indirect utility function for worker $p$ is equal to zero, $\nabla u(p) = 0$. By implication, consider that the partial derivative of the indirect utility function with respect to cognitive type $p_c$ equals zero, $\frac{\partial}{\partial p_c} u(p) = 0$. Since we consider a partial derivative for a convex function, the partial derivative increases with $p_c$  so that $\frac{\partial}{\partial p_c} u(p_c + \varepsilon_c, p_m) = 0$ for all $\varepsilon_c \geq 0$, or $\frac{\partial^2}{\partial^2 p_c} u(p) = 0$. It hence follows that $H_{cc}(u) = 0$, and hence that $H_{cc}(u) - \alpha_I < 0$ for $\alpha_I > 0$ which contradicts that $H(u) - \alpha_I I$ is positive semidefinite by the Sylvester criterion. We conclude that the utility function is strictly positive and strictly decreasing for interior worker $p$.

Since the indirect utility function $u$ is strongly convex for worker type $p$ and twice continuously differentiable within its neighborhood, the utility function is strongly convex in this neighborhood. The restriction that $H(u) - \alpha_I I$ is positive semidefinite in a neighborhood around worker type $p$ implies $H(u) - \frac{\alpha_I}{2} I$ is positive semidefinite in the neighborhood around $p$ when the indirect utility function is twice continuously differentiable. Hence, the utility function $u$ is indeed strongly convex in this neighborhood.


We consider a perturbation of the indirect utility $u$ denoted by $u + \varepsilon V$, where $V$ is a bump function that is concentrated in a small ball around $p$ which lies within the neighborhood around $p$, and $\varepsilon$ is small. The arbitrary perturbation function $u + \varepsilon V$ is convex for small enough values for $\varepsilon$ within the support of the bump function, $\vert \varepsilon \vert < \bar{\varepsilon}$. 

While intuitive, we prove that $u + \varepsilon V$ is convex for small enough values for $\varepsilon$ within the support of the bump function in two steps. First, we observe that for some $\beta>0$, it holds that $H(V) - \beta I$ is negative semidefinite and that $H(V) + \beta I$ is positive semidefinite. In the former case, negative semidefinite is equivalent to $x_c^2 V^{\phantom{2}}_{cc}+ 2 x^{\phantom{2}}_c x^{\phantom{2}}_m V^{\phantom{2}}_{cm} + x_m^2 V^{\phantom{2}}_{mm} \leq \beta (x_c^2 + x_m^2 )$ for any $(x_c,x_m)$. To see this, we first observe $x_c^2 V^{\phantom{2}}_{cc}+ 2 x^{\phantom{2}}_c x^{\phantom{2}}_m V^{\phantom{2}}_{cm} + x_m^2 V^{\phantom{2}}_{mm} \leq |x_c|^2 |V^{\phantom{2}}_{cc}|+ 2 |x^{\phantom{2}}_c| |x^{\phantom{2}}_m| |V^{\phantom{2}}_{cm}| + |x_m|^2 |V^{\phantom{2}}_{mm}|$. Furthermore, we use that $2 |x_m| |x_c| \leq |x_c|^2 + |x_m|^2$ to write $x_c^2 V^{\phantom{2}}_{cc}+ 2 x^{\phantom{2}}_c x^{\phantom{2}}_m V^{\phantom{2}}_{cm} + x_m^2 V^{\phantom{2}}_{mm} \leq x_c^2 (|V^{\phantom{2}}_{cc}| + |V^{\phantom{2}}_{cm}|) + x_m^2 (|V^{\phantom{2}}_{cm}| + |V^{\phantom{2}}_{mm}|)$. Therefore, there indeed exists $\beta = \max (|V^{\phantom{2}}_{cc}| + |V^{\phantom{2}}_{cm}|, |V^{\phantom{2}}_{cm}| + |V^{\phantom{2}}_{mm}|) > 0$ such that $H(V) - \beta I$ is negative semidefinite. Through a similar argument $H(V) + \beta I$ is positive semidefinite. Given $\beta > 0$, it holds that $\varepsilon H(V) +|\varepsilon| \beta I$ is positive semidefinite for positive $\varepsilon$, and that $\varepsilon (H(V) - \beta I) = \varepsilon H(V) +|\varepsilon| \beta I$ is positive semidefinite for negative $\varepsilon$. 

Second, we note that the Hessian matrix for the perturbation function is additively separable, $H(u + \varepsilon V) = H(u) + \varepsilon H(V)$. Since the matrix $H(u) - \frac{\alpha_I}{2} I$ is positive definite, the matrix $H(u + \varepsilon V) - \frac{\alpha_I}{2} I - \varepsilon H(V)$ is positive definite. Finally, since the sum of positive semidefinite matrices is itself positive semidefinite, it follows that $H(u + \varepsilon V) - \left( \frac{\alpha_I}{2} - |\varepsilon| \beta \right) I$ is positive semidefinite for $\varepsilon$ small enough, which confirms that the perturbation function is indeed convex. Following analogous reasoning, the indirect utility function $u$ is also decreasing and positive in a neighborhood around worker $p$.


\vspace{0.4 cm}
\noindent \textbf{Changing Coordinates}. To connect our expression to the existing literature, we transform the optimal tax formula into the original type coordinates $\alpha$. We illustrate this transformation by focusing on the partial derivative with respect to cognitive skill in (\ref{e:euler}), 
\begin{equation}
\partial_{p_c} \left( \frac{\pi}{u'(\mathcal{C}(c))} p_c \frac{\tau_s}{1 - \tau_c}  \right) , \label{e:taus1}
\end{equation}
where $\pi$ is the probability distribution in the transformed worker space $p$. To convert this term into the original worker space, we first recall the change of coordinates $p_s = \kappa \alpha_s^{-\rho}$, or equivalently $\alpha_s = ( \kappa / p_s )^{\frac{1}{\rho}}$, implying that $\text{d}\alpha_s = - \frac{1}{\rho} \frac{\alpha_s}{p_s} \text{d} p_s = - \frac{1}{\kappa \rho} \alpha_s^{\rho+1} \text{d} p_s $.

We first explicitly formulate the relationship between the distribution function in the original worker type space $\alpha$ given by $\phi$, and the worker distribution function in transformed coordinates $p$ given by $\pi$:
\begin{align*}
\phi (\alpha) \text{d} \alpha_c \text{d} \alpha_m & = \phi (\alpha) \frac{\alpha_c^{\rho+1} \alpha_m^{\rho+1}}{(\kappa \rho)^2} \text{d} p_c  \text{d} p_m = \pi(p)  \text{d} p_c \text{d} p_m ,
\end{align*}
where the distribution function $\pi(p) := \phi (\alpha) \alpha_c^{\rho+1} \alpha_m^{\rho+1} / (\kappa \rho)^2$. As a result, we express (\ref{e:taus1}) as: 
\begin{equation}
\partial_{p_c} \left( \frac{\pi}{u'(c(\alpha))} \frac{\tau_c}{1 - \tau_c}  \kappa \alpha_c^{-\rho} \right) = \partial_{p_c} \bigg( \frac{\phi}{u'(c(\alpha))} \frac{\tau_c}{1 - \tau_c}  \frac{\alpha_c \alpha_m^{\rho+1}}{\kappa \rho^2} \bigg) , \label{e:taus2}
\end{equation}
 

Next, by the chain rule we have that $\frac{\partial z}{\partial p_c} = \frac{\partial z}{\partial \alpha_c} \frac{\partial \alpha_c}{\partial p_c}$, which gives:
\begin{align*}
\partial_{p_c} \bigg( \frac{\pi}{u'(c(\alpha))} \frac{\tau_c}{1 - \tau_c}  \kappa \alpha_c^{-\rho} \bigg) & = \partial_{\alpha_c} \bigg( \frac{\phi}{u'(c(\alpha))} \frac{\tau_c}{1 - \tau_c}  \frac{\alpha_c \alpha_m^{\rho+1}}{\kappa \rho^2} \bigg) \frac{\partial \alpha_c}{\partial p_c} = - \frac{\alpha_c^{\rho+1} \alpha_m^{\rho+1}}{\kappa^2 \rho^3} \partial_{\alpha_c} \bigg( \frac{\phi}{u'(c(\alpha))} \frac{\tau_c}{1 - \tau_c} \alpha_c \bigg) 
\end{align*}
The derivation for the manual skill term is symmetric, which allows us to summarize the previous two expressions for both tasks as:
\begin{equation}
\partial_{p_s} \bigg( \frac{\pi}{u'(c(\alpha))} \frac{\tau_s}{1 - \tau_s}  \kappa \alpha_c^{-\rho} \bigg) = - \frac{\alpha_c^{\rho+1} \alpha_m^{\rho+1}}{\kappa^2 \rho^3} \partial_{\alpha_s} \bigg( \frac{\phi}{u'(c(\alpha))} \frac{\tau_s}{1 - \tau_s} \alpha_s \bigg)  , \label{e:taus3}
\end{equation}
Finally, we rewrite the left side of equation (\ref{e:euler}), using the relation between density functions, as
\begin{equation}
\pi \bigg( \frac{1}{u'(\mathcal{C}(c))} - \lambda \bigg) = - \phi(\alpha) \frac{\alpha_c^{\rho+1} \alpha_m^{\rho+1}}{\kappa^2 \rho^2} \bigg( \lambda - \frac{1}{u'(c(\alpha))}\bigg) \label{e:lhs_el}
\end{equation}
Combining equation (\ref{e:euler}) in the worker type space $p$, with (\ref{e:taus3}) and (\ref{e:lhs_el}), we obtain (\ref{e:eulera}) in the worker space $\alpha$.

\subsection{Proposition \ref{p:bunch}} \label{pf:bunch}

By Corollary \ref{p:el} it follows that when the optimality condition does not hold, the Hessian matrix is degenerate for worker $p$. We next show that the Hessian matrix $H(u)$ is also degenerate for all workers within the neighborhood of $p$. By contradiction, suppose that in every neighborhood of point $p$ we can find a worker $\hat{p}$ such that its Hessian is non-degenerate, or equivalently, has full rank. By Corollary \ref{p:el}, the optimality condition holds for worker $\hat{p}$. We can thus construct a sequence of points $\{\hat{p}_n\}$ that converges to $p$. Since the optimality equation is continuous in $p$, the sequence converges and that the optimality equation holds for worker $p$, which is a contradiction.
 
\subsection{Planner Duality} \label{pf:planner_duality}

We prove duality between our cost minimization problem and a welfare maximization problem. The welfare maximization problem is to choose allocation $(c,x)$ to maximize utilitarian welfare:
\begin{equation}
\int \big( c - p_c x_c - p_m x_m \big) \pi \text{d} p, \label{e:linear_welfare}
\end{equation}
subject to the incentive constraints (\ref{e:linear_ic}) and the linear resource constraint:
\begin{equation}
\int \big( \mathcal{C} ( c ) + z (p) \big( \mathcal{X} ( x_c ) + \mathcal{X} ( x_m ) \big)  \big) \pi \text{d} p \leq R, \label{e:convex_resources}
\end{equation}
for some exogenous level of federal resources $R$.

\begin{proposition}
Let $( c,x )$ solve the cost minimization problem associated with maximum welfare level $\overline{\mathcal{U}}$ so that the minimum resource cost is less than government resources $R$. Then allocation $(c,x)$ solves the welfare maximization problem given government resources $R$. 

Conversely, if allocation $(c,x)$ solves the welfare maximization problem for resources $R$ and induces welfare $\overline{\mathcal{U}}$, then $(c,x)$ solves the cost minimization solves the cost minimization problem for $\mathcal{U} = \overline{\mathcal{U}}$.
\end{proposition}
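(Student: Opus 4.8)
Write $W(c,x_s) := \int ( c - p_c x_c - p_m x_m )\,\pi\,\mathrm{d}p$ for the welfare objective (\ref{e:linear_welfare}) and $K(c,x_s) := \int ( \mathcal{C}(c) + z ( \mathcal{X}_c(x_c) + \mathcal{X}_m(x_m) ) )\,\pi\,\mathrm{d}p$ for the resource cost in (\ref{e:convex_resources}). Three structural facts drive the argument. First, the set $\mathcal{I}$ of allocations satisfying the incentive constraints (\ref{e:linear_ic}) and the outside option constraints (\ref{e:participation_linear}) is convex, since both are linear inequalities in $(c,x_s)$. Second, $W$ is linear in $(c,x_s)$. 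Third, $K$ is \emph{strictly} convex in $(c,x_s)$, because $\mathcal{C}$ and each $\mathcal{X}_s$ are strictly convex and $z>0$; in particular, if two allocations differ on a set of positive $\pi$-measure, the cost of their midpoint is strictly below the average of their costs. I would prove both implications by contradiction, reading the budget $R$ in the first direction as the minimal cost itself (the matched duality in which the resource constraint binds).

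\noindent \textbf{Cost minimization implies welfare maximization.} Let $(c,x_s)$ solve the cost minimization problem for welfare $\overline{\mathcal{U}}$ and set $R := K(c,x_s)$, so that $(c,x_s)$ is feasible for the welfare problem at budget $R$. Suppose it were not optimal there: some $(\hat{c},\hat{x}_s)\in\mathcal{I}$ with $K(\hat{c},\hat{x}_s)\le R$ attains $W(\hat{c},\hat{x}_s) > W(c,x_s) \ge \overline{\mathcal{U}}$. Because the two allocations yield different welfare they differ on a positive-measure set. I would then form the chord $(c_\theta,x_{s\theta}) := (1-\theta)(c,x_s) + \theta(\hat{c},\hat{x}_s) \in \mathcal{I}$ for $\theta\in(0,1)$. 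Linearity of $W$ gives $W(c_\theta,x_{s\theta}) > \overline{\mathcal{U}}$, so the promise-keeping condition holds; strict convexity of $K$ gives $K(c_\theta,x_{s\theta}) < (1-\theta)R + \theta R = R$. Thus $(c_\theta,x_{s\theta})$ is feasible for the cost minimization problem yet strictly cheaper than $R$, contradicting minimality. Hence $(c,x_s)$ solves the welfare problem at budget $R$.

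\noindent \textbf{Welfare maximization implies cost minimization.} Let $(c,x_s)$ solve the welfare problem at resources $R$ and induce welfare $\overline{\mathcal{U}} = W(c,x_s)$, so it is feasible for the cost problem. Suppose it were not cost-minimizing: some $(\hat{c},\hat{x}_s)\in\mathcal{I}$ with $W(\hat{c},\hat{x}_s)\ge\overline{\mathcal{U}}$ satisfies $K(\hat{c},\hat{x}_s) < K(c,x_s) \le R$, so the resource constraint is slack at $(\hat{c},\hat{x}_s)$. I would exploit this slack by a uniform upward shift of consumption utility, replacing $\hat{c}$ by $\hat{c} + \delta$ for small $\delta>0$. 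This shift keeps the allocation in $\mathcal{I}$: the incentive constraints (\ref{e:linear_ic}) are stated as differences and are invariant to adding a constant to $c$, and the outside option constraints (\ref{e:participation_linear}) only relax. By continuity of $K$ in $\delta$ the cost stays below $R$ for small $\delta$, while $W$ rises by $\delta\int\pi\,\mathrm{d}p > 0$ above $\overline{\mathcal{U}}$. This contradicts optimality of $(c,x_s)$ in the welfare problem, so $(c,x_s)$ minimizes cost for $\mathcal{U}=\overline{\mathcal{U}}$.

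\noindent \textbf{Main obstacle.} The delicate point in both directions is respecting the outside option constraints (\ref{e:participation_linear}) under perturbation, since an arbitrary reduction in consumption utility could push a worker below the floor $\underline{\mathcal{U}}=0$. The argument sidesteps this in the first direction by perturbing along a chord of the convex feasible set $\mathcal{I}$, which preserves every constraint automatically, and by relying on the strict convexity of $K$ to generate the strict cost improvement; in the second direction it uses that the relevant perturbation \emph{raises} consumption utility, which can never violate the outside option. The remaining checks are routine: the continuity of $K$ under the uniform shift and the strictness of the convexity inequality, both of which follow from the properties of $\mathcal{C}$ and $\mathcal{X}_s$ established above.
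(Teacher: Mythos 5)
Your second direction is essentially sound and is in fact more explicit than the paper's own converse: where the paper simply asserts that failure of cost minimality yields a welfare level strictly above $\hat{\mathcal{U}}$ attainable within $R$, you construct the improving deviation via the uniform shift $\hat{c}+\delta$, checking that the incentive constraints (\ref{e:linear_ic}) are invariant to adding a constant to $c$ and that the outside-option constraints (\ref{e:participation_linear}) only relax. That fills a step the paper glosses over.

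The first direction, however, has a genuine gap, for two distinct reasons. First, by redefining $R := K(c,x_s)$ you prove optimality only at the budget equal to the realized minimum cost, and you never invoke the hypothesis that $\overline{\mathcal{U}}$ is the \emph{maximum} welfare level whose minimum cost fits within the given $R$. When $K(c,x_s) < R$, your contradiction only considers deviations $\hat{a}$ with $K(\hat{a}) \leq K(c,x_s)$, and says nothing about allocations with cost in $\left( K(c,x_s), R \right]$ delivering welfare above $\overline{\mathcal{U}}$ --- precisely the deviations that the welfare-maximization problem at budget $R$ permits. The maximality of $\overline{\mathcal{U}}$ is what rules these out, and it does so in one line (the paper's argument): if $\hat{a} \in \mathcal{I}$ had $K(\hat{a}) \leq R$ and $W(\hat{a}) > \overline{\mathcal{U}}$, then the cost-minimization problem at the welfare level $W(\hat{a}) > \overline{\mathcal{U}}$ would have minimum cost at most $R$, contradicting that $\overline{\mathcal{U}}$ is the maximal such welfare level. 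No convexity is needed anywhere. Second, your chord argument leans on \emph{strict} convexity of $K$, but the paper only establishes that $\mathcal{C}$ is ``strictly increasing and convex'' --- and indeed the quantitative analysis takes $u(c)=c$, so $\mathcal{C}$ is linear. Two feasible allocations can differ in welfare while sharing the same $(x_c,x_m)$ and differing only in $c$; along that chord $K$ is affine, the strict inequality $K(c_\theta,x_{s\theta}) < R$ fails, and the midpoint is merely another cost minimizer rather than a contradiction. Replacing the chord construction with the direct revealed-feasibility argument above closes the first direction and matches the paper's proof.
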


\begin{proof}
First, we establish that the welfare attained by the cost minimization problem and welfare maximization problem are identical. Consider the solution to the cost minimization problem with maximum welfare level $\overline{\mathcal{U}}$ such that the resource cost is below resource level $R$. Allocation $(c,x)$ satisfies both the incentive constraints and the resource constraints of the welfare maximization problem and is thus a feasible solution to the welfare maximization problem. Welfare in the welfare maximization problem therefore exceeds $\overline{\mathcal{U}}$. 

Conversely, take the solution to the welfare maximization and let $\overline{\overline{\mathcal{U}}}$ denote maximum welfare. Consider the allocation $(c,x)$ that solves the welfare maximization problem. The allocation $(c,x)$ satisfies both the incentive constraints and the promise keeping constraint to the cost minimization problem. Further, the associated resource cost is below resource level $R$. Hence, $\overline{\mathcal{U}} \geq \overline{\overline{\mathcal{U}}}$, implying that welfare is identical for the two problems, $\overline{\mathcal{U}} = \overline{\overline{\mathcal{U}}}$.

Second, we show duality of allocations. Suppose allocation $(c,x)$ solves the cost minimization problem with maximum welfare level $\mathcal{U}$ such that the cost is below resources $R$, but the allocation does not solve the welfare maximization problem. Then, there is an alternative allocation $(\hat{c},\hat{x})$ that solves the welfare maximization problem, is feasible, and attains strictly greater welfare. This implies that there exists a welfare level $\hat{\mathcal{U}} > \mathcal{U}$ so that $(\hat{c},\hat{x})$ has a cost below resources $R$, contradicting that $\mathcal{U}$ is the maximum welfare so that the minimum resource cost is below $R$.

Conversely, suppose allocation $(\hat{c},\hat{x})$ solves the welfare maximization problem given resources $R$ inducing welfare $\hat{\mathcal{U}}$, but does not solve the cost minimization problem. Then, there exists an alternative allocation $(c,x)$ that solves the cost minimization problem for a welfare level $\mathcal{U}> \hat{\mathcal{U}}$ such that the minimum cost is below resources $R$. Allocation $(c,x)$ is feasible and attains strictly greater welfare, contradicting that $(\hat{c},\hat{x})$ solves the welfare maximization problem.\end{proof}

\subsection{Transformed Planner Problem} \label{s:transformed_planner}

In this appendix, we analyze the planner problem of choosing an allocation $(c,x)$ to minimize the resource cost of providing welfare as in Section \ref{ss:change_of_var}. Using the Legendre transforms (\ref{e:lt_c}) and (\ref{e:lt_x}) to linearize the resource costs, the planning problem is equivalent to: 
\begin{equation}
\min_{c,x} \; \max\limits_{\varphi,\psi} \; \int \Big( \big( \varphi (p) c(p) - \mathcal{C}^*(\varphi(p)) \big) + z(p) \sum_{s} \big( \psi_s(p) x_s(p) - \mathcal{X}^*(\psi_s(p)) \big) \Big) \pi (p) \text{d} p
\end{equation}
subject to the set of linear irreducible incentive constraints (\ref{e:linear_ic}) and the promise keeping constraint (\ref{e:promise_keeping_linear}).

To develop properties of the solution we formulate a Lagrangian, where $\lambda$ is the multiplier on the promise keeping constraint: 
\begin{equation}
\mathcal{L}\big( c,x,\varphi,\psi \big) = \hspace{-0.08 cm} \int \hspace{-0.08 cm} \Big( \big( \varphi c - \mathcal{C}^*(\varphi) \big) + z\sum_{s} \big( \psi_s x_s - \mathcal{X}^*(\psi_s) \big) - \lambda \Big( \int \big( c - \sum p_s x_s \big) - \mathcal{U} \Big) \Big) \pi  \text{d}p . \label{e:lagrange}
\end{equation}
The Lagrangian is a continuous function that is concave-convex. Since the Legendre transform of a convex function is itself convex, the Lagrangian is concave in the distortions $(\varphi,\psi)$ holding constant the allocations $(c,x)$, and convex in the allocations when holding constant the distortions. Further, since the set of allocations that satisfies the incentive constraints (\ref{e:linear_ic}) is convex, we can apply the minimax theorem. We use the minimax relationship, $\min\limits_{c,x \in \mathcal{I}} \; \max\limits_{\varphi \geq 0,\psi \leq 0} \mathcal{L}\big( c,x,\varphi,\psi \big) = \max\limits_{\varphi \geq 0,\psi \leq 0} \; \min\limits_{c,x \in \mathcal{I}} \; \mathcal{L}\big( c,x,\varphi,\psi \big)$, to establish Lemma \ref{lemma:sd}.


\vspace{0.45 cm}
\begin{lemma}\label{lemma:sd}
For every incentive compatible allocation $(c,x) \in \mathcal{I}$, stochastic dominance has to be satisfied:
\begin{equation}
\int \hspace{-0.05 cm} \Big( \varphi c + \sum \psi_s x_s \Big) \pi  \text{d}p  \geq \lambda \int \hspace{-0.05 cm} \big( c - \sum p_s x_s \big) \pi  \text{d}p \label{e:sd} .
\end{equation}
\end{lemma}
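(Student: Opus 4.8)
The plan is to exploit the minimax relationship stated just above the lemma, which lets me pass from the saddle formulation of the Lagrangian (\ref{e:lagrange}) to the desired pointwise inequality. Let $(c^*,x_s^*,\varphi^*,\psi_s^*)$ denote a saddle point, which exists because $\mathcal{L}$ is concave in the distortions $(\varphi,\psi_s)$, convex in the allocations $(c,x_s)$, and the feasible set $\mathcal{I}$ is convex. At the saddle point the outer maximization over $(\varphi,\psi_s)$ is attained, so by Legendre duality (\ref{e:lt_c}) and (\ref{e:lt_x}) the optimal slopes satisfy $\varphi^*(p) c^*(p) - \mathcal{C}^*(\varphi^*(p)) = \mathcal{C}(c^*(p))$ and $\psi_s^*(p) x_s^*(p) - \mathcal{X}_s^*(\psi_s^*(p)) = \mathcal{X}_s(x_s^*(p))$; that is, $\varphi^*$ and $\psi_s^*$ are exactly the tangent slopes at the optimal allocation. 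The content of the inner minimization is that $(c^*,x_s^*)$ minimizes $\mathcal{L}(\cdot,\cdot,\varphi^*,\psi_s^*)$ over $(c,x_s)\in\mathcal{I}$.

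The key simplification, and precisely what the Legendre transform buys, is that once the distortions are frozen at $(\varphi^*,\psi_s^*)$ the Lagrangian is \emph{affine} in $(c,x_s)$: the terms $\mathcal{C}^*(\varphi^*)$, $\mathcal{X}_s^*(\psi_s^*)$ and $\lambda\mathcal{U}$ are constants independent of the allocation. Stripping these constants, minimizing $\mathcal{L}(\cdot,\cdot,\varphi^*,\psi_s^*)$ over $\mathcal{I}$ is equivalent to minimizing the linear functional
\begin{equation}
J(c,x_s) := \int \Big( \varphi^* c + z\sum_s \psi_s^* x_s - \lambda\big(c - \textstyle\sum_s p_s x_s\big)\Big)\pi\,\text{d}p
\end{equation}
over $(c,x_s)\in\mathcal{I}$. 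This linearity replaces the two convex variational arguments of Lemma \ref{lemma:sd_basic} and Lemma \ref{lemma:sd2_basic} with a single observation about linear functionals on cones.

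The final step records that $\mathcal{I}$ is a convex cone. Setting $\underline{\mathcal{U}}=0$ without loss (as in the footnote to the Lagrangian), both the incentive constraints (\ref{e:linear_ic}) and the outside option constraints (\ref{e:participation_linear}) are homogeneous of degree one in $(c,x_s)$, so $(c,x_s)\in\mathcal{I}$ implies $\zeta(c,x_s)\in\mathcal{I}$ for every $\zeta>0$. A linear functional attains a finite minimum over such a cone only at the value zero: were $J(c^*,x_s^*)$ strictly negative, scaling up by $\zeta\to\infty$ would drive $J$ to $-\infty$, and were it strictly positive, scaling down by $\zeta\to 0^+$ would yield a feasible point of strictly lower value, each contradicting optimality of the finite saddle value. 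Hence $J(c^*,x_s^*)=0$, and since $(c^*,x_s^*)$ is a minimizer, $J(c,x_s)\ge 0$ for every $(c,x_s)\in\mathcal{I}$, which upon rearranging is exactly the stochastic dominance inequality (\ref{e:sd}) with $z$ understood to multiply the labor slopes as in (\ref{e:lagrange}). I expect the main obstacle to be making the saddle point rigorous — guaranteeing existence and finiteness of the saddle value so that the cone-scaling dichotomy applies — which rests on the concave-convex structure and the applicability of the minimax theorem asserted in the text.
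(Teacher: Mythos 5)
Your proof is correct and takes essentially the same route as the paper: freezing the multipliers at their optimal (Legendre tangent) values makes the Lagrangian affine in $(c,x_s)$, and since $\mathcal{I}$ is a cone under the normalization $\underline{\mathcal{U}}=0$, the scaling $\zeta(c,x_s)\in\mathcal{I}$ with $\zeta\to\infty$ forces the linear functional to be nonnegative on $\mathcal{I}$, which is exactly the paper's contradiction argument. Your additional $\zeta\to 0^{+}$ step, pinning the functional to zero at the optimum, simply establishes in passing what the paper defers to Proposition \ref{p:stoch_dominance}, so nothing is missing.
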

\vspace{0.25 cm}

\noindent The result follows by analyzing $\max\limits_{\varphi \geq 0,\psi \leq 0} \; \min\limits_{c,x \in \mathcal{I}} \; \mathcal{L}\big( c,x,\varphi,\psi \big)$. By contradiction, suppose instead that $\int ( \varphi c + \sum \psi_s x_s ) \pi  \text{d}p  < \lambda \int ( c - \sum p_s x_s ) \pi  \text{d}p $. Consider an increase in the allocation $(c,x)$ by a constant factor $\zeta > 1$. Since incentive compatible constraints are linear, the alternative allocation $\zeta (c,x)$ is feasible. By increasing the constant factor, $\zeta \rightarrow \infty$, optimization would lead to negative infinity, which is not optimal. At the solution to the planning problem, the stochastic dominance condition (\ref{e:sd}) will hold with equality.

\begin{proposition}\label{p:stoch_dominance}
Let $( c,x,\varphi,\psi )$ solve the planning problem, then stochastic dominance condition holds with equality at optimum:
\begin{equation}
\int \Big( \varphi c + \sum \psi_s x_s \Big) \pi  \text{d}p  = \lambda \int \big( c - \sum p_s x_s \big) \pi  \text{d}p \label{e:stoch_dominance}.
\end{equation}
\end{proposition}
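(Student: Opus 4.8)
The plan is to leverage the saddle-point structure already set up for the Lagrangian (\ref{e:lagrange}). By the minimax argument preceding Lemma \ref{lemma:sd}, the optimal tuple $(c, x_s, \varphi, \psi_s)$ is a saddle point: holding the dual variables $(\varphi, \psi_s)$ fixed at their optimal values, the allocation $(c, x_s)$ minimizes $\mathcal{L}(\cdot, \cdot, \varphi, \psi_s)$ over the feasible set $\mathcal{I}$. Lemma \ref{lemma:sd} already delivers the inequality (\ref{e:sd}) for every feasible allocation; what remains is to upgrade this to equality at the optimum itself.

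The key observation is that, once the dual variables are frozen, the Lagrangian is \emph{affine} in the allocation. The Legendre-transform terms $\mathcal{C}^*(\varphi)$ and $\mathcal{X}_s^*(\psi_s)$ do not involve $(c, x_s)$, the objective contributions $\varphi c$ and $z \psi_s x_s$ are linear in the allocation, and the promise-keeping penalty $-\lambda(\int(c - \sum p_s x_s)\pi\,\mathrm{d}p - \mathcal{U})$ is affine. This is precisely where the linearization pays off: unlike the proportional-scaling argument of Lemma \ref{lemma:sd_basic}, which yielded only a first-order $o(\varepsilon)$ statement for the genuinely convex costs $\mathcal{C}, \mathcal{X}_s$, here the dependence on the allocation is exactly affine, so the scaling argument becomes exact rather than infinitesimal.

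First I would note that $\mathcal{I}$ is a convex cone. The incentive constraints (\ref{e:linear_ic}) are homogeneous of degree one in $(c, x_s)$, and under the normalization $\underline{\mathcal{U}} = 0$ so is the outside-option constraint (\ref{e:participation_linear}); hence $\zeta(c, x_s) \in \mathcal{I}$ for every $\zeta > 0$ whenever $(c, x_s) \in \mathcal{I}$. Substituting $\zeta(c, x_s)$ into $\mathcal{L}(\cdot, \cdot, \varphi, \psi_s)$ yields a function of $\zeta$ that is affine, whose slope is exactly the difference between the two sides of the stochastic-dominance condition (\ref{e:stoch_dominance}). Since the optimal allocation corresponds to $\zeta = 1$, an interior point of the feasible ray $(0, \infty)$ over which an affine function is being minimized, the slope must vanish: a positive slope would be improved by sending $\zeta \to 0$, a negative slope by sending $\zeta \to \infty$. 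Setting the slope to zero reproduces (\ref{e:stoch_dominance}), with Lemma \ref{lemma:sd} confirming the direction.

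The main obstacle is bookkeeping rather than conceptual: one must verify that the feasible set is genuinely a cone (which hinges on the $\underline{\mathcal{U}} = 0$ normalization justified in the footnote) and that freezing the dual variables is legitimate, i.e. that the optimal $(\varphi, \psi_s)$ from the minimax problem are exactly those for which $(c, x_s)$ is a Lagrangian minimizer. Both follow from the saddle-point property guaranteed by the minimax theorem applied to the concave-convex $\mathcal{L}$, so no machinery beyond what Lemma \ref{lemma:sd} and the structure of (\ref{e:lagrange}) already provide is required.
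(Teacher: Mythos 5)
Your proposal is correct and takes essentially the same route as the paper: the paper's proof also works from the minimax equality (Claim \ref{c:lagrange}), freezes the dual variables so that the allocation enters the Lagrangian only linearly, and then uses the conic structure of $\mathcal{I}$ — via Lemma \ref{lemma:sd} plus feasibility of the zero allocation — to force the linear terms to vanish at the optimum. Your two-sided scaling along the ray $\zeta(c,x_s) \in \mathcal{I}$, with zero slope at the interior minimizer $\zeta = 1$, is just a cosmetic variant of that final step (it is the same rescaling device the paper deploys in Lemmas \ref{lemma:sd_basic} and \ref{lemma:sd}), so nothing is missing.
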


\vspace{0.2 cm}
\begin{proof}
To establish the result, we use two problems. First, define the maximization problem: 
\begin{equation}
\max\limits_{\varphi,\psi} \;  \underline{\mathcal{L}}(\varphi,\psi,\lambda) , 
\end{equation}
where $\underline{\mathcal{L}}(\varphi,\psi,\lambda) := \min\limits_{c,x}  \mathcal{L}( c,x,\varphi,\psi , \lambda)$. Let $(\varphi^*,\psi^*)$ be a solution to this problem. Similarly, we define a minimization problem:
\begin{equation}
 \min\limits_{c,x} \; \bar{\mathcal{L}}(c,x) , 
\end{equation}
where $\bar{\mathcal{L}}(c,x) := \max\limits_{\varphi,\psi} \mathcal{L}( c,x,\varphi,\psi , \lambda)$, and let $(c^*,x^*)$ be a minimizer to this problem. 

\vspace{0.4 cm}
\begin{claim}\label{c:lagrange}
We show that for the Lagrangian (\ref{e:lagrange}) evaluated at the optimum it holds that:
\begin{equation}
\mathcal{L}\big( c^*,x^*,\varphi^*,\psi^*, \lambda \big) = \min\limits_{c,x} \; \max\limits_{\varphi, \psi} \; \mathcal{L}\big( c,x,\varphi,\psi, \lambda \big) = \max\limits_{\varphi, \psi} \; \min\limits_{c,x} \; \mathcal{L}\big( c,x,\varphi,\psi, \lambda \big) \label{e:lagrange_optimum}
\end{equation}
\end{claim}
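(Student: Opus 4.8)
The plan is to deduce the claim from the minimax relationship already established in the preceding discussion, namely that the concave-convex structure of $\mathcal{L}$ together with the convexity of $\mathcal{I}$ and of the distortion domain $\{\varphi \geq 0,\psi_s \leq 0\}$ permits exchanging the order of the inner optimizations. Once the equality $\min_{c,x_s}\bar{\mathcal{L}}(c,x_s) = \max_{\varphi,\psi_s}\underline{\mathcal{L}}(\varphi,\psi_s,\lambda)$ is in hand, the content of Claim \ref{c:lagrange} reduces to showing that this common value is attained at the specific pair of optimizers $(c^*,x^*_s)$ and $(\varphi^*,\psi^*_s)$, which is a standard sandwich argument.

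Concretely, I would first record that by the definitions of $(\varphi^*,\psi^*_s)$ and $(c^*,x^*_s)$ as a maximizer of $\underline{\mathcal{L}}$ and a minimizer of $\bar{\mathcal{L}}$ respectively, we have $\underline{\mathcal{L}}(\varphi^*,\psi^*_s,\lambda) = \max_{\varphi,\psi_s}\underline{\mathcal{L}}$ and $\bar{\mathcal{L}}(c^*,x^*_s) = \min_{c,x_s}\bar{\mathcal{L}}$, and that the minimax relationship identifies these two extremal values as a single common value $V$. I would then chain the elementary bounds
\begin{equation}
V = \underline{\mathcal{L}}(\varphi^*,\psi^*_s,\lambda) = \min_{c,x_s}\mathcal{L}(c,x_s,\varphi^*,\psi^*_s,\lambda) \leq \mathcal{L}(c^*,x^*_s,\varphi^*,\psi^*_s,\lambda) \leq \max_{\varphi,\psi_s}\mathcal{L}(c^*,x^*_s,\varphi,\psi_s,\lambda) = \bar{\mathcal{L}}(c^*,x^*_s) = V,
\end{equation}
where the first inequality holds because $(c^*,x^*_s)$ is merely one feasible choice in the inner minimization defining $\underline{\mathcal{L}}(\varphi^*,\psi^*_s,\lambda)$, and the second because $(\varphi^*,\psi^*_s)$ is merely one feasible choice in the inner maximization defining $\bar{\mathcal{L}}(c^*,x^*_s)$. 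Since both ends of the chain equal $V$, the interior term is squeezed to $V$, which is exactly the assertion $\mathcal{L}(c^*,x^*_s,\varphi^*,\psi^*_s,\lambda) = \min_{c,x_s}\max_{\varphi,\psi_s}\mathcal{L} = \max_{\varphi,\psi_s}\min_{c,x_s}\mathcal{L}$.

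The genuinely delicate step is not this sandwich but the justification of the underlying minimax equality, which I would lean on from the surrounding text rather than reprove. The Lagrangian (\ref{e:lagrange}) is linear, hence convex, in $(c,x_s)$ for fixed distortions and, because $-\mathcal{C}^*$ and $-\mathcal{X}^*_s$ are concave as negatives of Legendre transforms, concave in $(\varphi,\psi_s)$ for fixed allocations; moreover both $\mathcal{I}$ and $\{\varphi \geq 0,\psi_s \leq 0\}$ are convex. This is precisely the hypothesis of a Sion-type minimax theorem, so the main obstacle is securing enough compactness or coercivity for the theorem to apply and for the optimizers $(c^*,x^*_s)$ and $(\varphi^*,\psi^*_s)$ to exist — exactly the point already invoked when the minimax relationship was introduced, which I would cite and treat as given here.
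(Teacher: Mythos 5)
Your proposal is correct and is essentially the paper's own argument: both proofs use the same two elementary bounds ($\min_{c,x_s}\mathcal{L}(\cdot,\varphi^*,\psi^*_s,\lambda) \leq \mathcal{L}(c^*,x^*_s,\varphi^*,\psi^*_s,\lambda) \leq \max_{\varphi,\psi_s}\mathcal{L}(c^*,x^*_s,\cdot,\lambda)$), the definitions of the optimizers, and the minimax theorem justified by the concave-convex structure; the only difference is that the paper first derives the sandwich $\min\max \geq \mathcal{L}^* \geq \max\min$ and then invokes minimax to collapse it, whereas you invoke minimax first and then squeeze, which is the same argument in a different order.
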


\begin{proof} 
Necessarily it holds that $\mathcal{L}( c^*,x^*,\varphi^*,\psi^*, \lambda ) \geq \min\limits_{c,x} \mathcal{L}( c,x,\varphi^*,\psi^* , \lambda ) = \underline{\mathcal{L}}(\varphi^*,\psi^*,\lambda)$. Since $(\varphi^*,\psi^*)$ solves the optimization problem, $\underline{\mathcal{L}}(\varphi^*,\psi^*,\lambda) = \max\limits_{\varphi,\psi} \underline{\mathcal{L}}(\varphi,\psi,\lambda) = \max\limits_{\varphi,\psi} \min\limits_{c,x}  \mathcal{L}( c,x,\varphi,\psi , \lambda)$, and thus it follows $\mathcal{L}( c^*,x^*,\varphi^*,\psi^*, \lambda ) \geq  \max\limits_{\varphi,\psi} \min\limits_{c,x}  \mathcal{L}( c,x,\varphi,\psi , \lambda)$.

Similarly, note that it necessarily holds that $\mathcal{L}( c^*,x^*,\varphi^*,\psi^*, \lambda) \leq \max\limits_{\varphi,\psi } \mathcal{L}( c^*,x^*,\varphi,\psi , \lambda ) = \bar{\mathcal{L}}(c^*,x^*)$. Since the utility allocation $(c^*,x^*)$ is a solution to the minimization problem, $\bar{\mathcal{L}}(c^*,x^*) = \min\limits_{c,x} \; \bar{\mathcal{L}} (c,x) = \min\limits_{c,x} \max\limits_{\varphi,\psi} \mathcal{L}( c,x,\varphi,\psi , \lambda)$. Combining the previous two statements, we conclude $\mathcal{L}( c^*,x^*,\varphi^*,\psi^*, \lambda^*) \leq \min\limits_{c,x} \max\limits_{\varphi,\psi} \mathcal{L}( c,x,\varphi,\psi , \lambda)$, and hence: 
\begin{equation}
\min\limits_{c,x} \max\limits_{\varphi, \psi} \; \mathcal{L}( c,x,\varphi,\psi , \lambda) \geq \mathcal{L}( c^*,x^*,\varphi^*,\psi^*, \lambda ) \geq \max\limits_{\varphi, \psi} \min\limits_{c,x} \; \mathcal{L}( c,x,\varphi,\psi , \lambda) .
\end{equation}
By the minimax theorem it follows that (\ref{e:lagrange_optimum}) applies.\end{proof} 

\vspace{0.4 cm}
\noindent \textbf{Optimality Conditions}. We obtain optimality conditions analyzing the planner problem using $\mathcal{L}( c^*,x^*,\varphi^*,\psi^*, \lambda^*) = \max \; \min \; \mathcal{L}( c,x,\varphi,\psi, \lambda ) $ from Claim \ref{c:lagrange}. By reorganizing terms:
\begin{equation}
\max\limits_{\varphi, \psi} \min\limits_{c,x \in \mathcal{I}} \; \int \Big( \varphi c + \sum  \psi_s x_s - \lambda \Big( c - \sum p_s x_s \Big) - \mathcal{C}^*(\varphi) - \sum  \mathcal{X}^*(\psi_s) - \lambda \mathcal{U} \Big) \pi  \text{d}p .
\end{equation}
We observe that only the first four terms depend on the utility allocation, and observe further that these terms are necessarily jointly positive for some utility allocation to be incentive compatible following Lemma \ref{lemma:sd}. Since allocation $(c,x) = 0$ is incentive compatible and attains the minimum, the optimal utility allocation is chosen such that these terms jointly equal zero, implying: 
\begin{equation}
\int \big( \varphi^* c^* + \sum \psi^*_s x^*_s \big) \pi  \text{d}p  = \lambda \int \big( c^* - \sum p^*_s x^*_s \big) \pi  \text{d}p \tag{\ref{e:stoch_dominance}},
\end{equation}
and thus concluding the proof.\end{proof}
 
To obtain further optimality conditions to our problem, we analyze the planner problem using $\mathcal{L}( c^*,x^*,\varphi^*,\psi^*, \lambda) = \min \; \max \;  \mathcal{L}( c,x,\varphi,\psi, \lambda ) $ in Claim \ref{c:lagrange} to write:
\begin{equation}
\min\limits_{c,x} \; \max\limits_{\varphi, \psi} \int \Big( \varphi c - \mathcal{C}^*(\varphi) + \sum  \psi_s x_s - \sum \mathcal{X}^*(\psi_s) - \lambda \big( c - \sum p_s x_s  - \mathcal{U}\big) \Big) \pi  \text{d}p .
\end{equation}
We observe that only the first four terms depend on the convex conjugates, and that only the final term depends on the multiplier, in terms of the inner maximization problem. Since the promise keeping condition requires $c - \sum p_s x_s \geq \mathcal{U}$, and $\lambda \geq 0$, it has to hold that $\lambda (c^* - \sum p^*_s x^*_s - \mathcal{U}) = 0$. Similarly, it has to hold that $\varphi^* = {\mathcal{C}}'(c^*)$ and $\psi^*_s = {\mathcal{X}}'(x^*_s)$.

\subsection{Numerical Approach}\label{a:numerical_approach}

\noindent \textbf{Linearization of the Problem}. We now discuss the linearization of the problem that is central to numerical tractability. The only nonlinear part of the optimization problem that remains to be linearized is the objective:
\begin{equation}
\min \int \left( \mathcal{C} ( c (p) ) + z(p) \big( \mathcal{X} ( x_c (p) ) + \mathcal{X} ( x_m (p) ) \big) \right) \pi (p) \text{d} p .  \label{e:resources_original5}
\end{equation}
To illustrate our approach, we focus on the linearization of the convex resource cost function for consumption utility $\mathcal{C}$, and we suppose that bounds for the optimal solution are known a priori, or $\underline{c}(p) \leq c(p) \leq \bar{c}(p)$ and $\underline{x}_s(p) \leq x_s(p) \leq \bar{x}_s(p)$.

The idea is to approximate the convex  cost for consumption utility $\mathcal{C}$ from below with the tangent lines on the bounded interval. For each worker type $p$, it follows from the definition of the Legendre transform (\ref{e:lt_c}) that $\mathcal{C}(c(p)) = \max\limits_{\varphi} \varphi c(p) - \mathcal{C}^*(\varphi)$. We replace this continuous set of tangent slopes $\varphi$ in (\ref{e:lt_c}) with a finite set of tangent lines. Specifically, we consider a list of slopes $\{ \varphi_i(p) \}_{i=1}^n$ with corresponding tangent lines $l^c_{ip}(t) := \varphi_i(p) t - \mathcal{C}^*(\varphi_i(p))$ such that the inequality:
\begin{equation}
0 \leq \mathcal{C}(t)  - \max_{1 \leq i \leq n} \; l^c_{ip}(t) \leq \varepsilon_c
\end{equation}
holds for all $t$ in the bounded interval $[\underline{c}(p), \bar{c}(p)]$. Analogously, to linearize the resource cost of labor disutility $\mathcal{X}$, we consider a list of slopes $\{ \psi^s_i(p) \}_{i=1}^n$ with corresponding tangent lines $l^s_{ip}(t) :=\psi^s_i(p) t - \mathcal{X}^*(\psi^s_i(p))$ such that the inequality:
\begin{equation}
0 \leq \mathcal{X}(t)  - \max_{1 \leq i \leq n} \; l^s_{ip}(t) \leq \varepsilon_s
\end{equation}
holds for each skill $s \in \mathcal{S}$ and for all $t$ in the interval $[\underline{x}_s(p), \bar{x}_s(p)]$.

As a key step, we next introduce independent auxiliary variables $r(p)$ for each worker $p$ satisfying the following set of linear inequalities for all $i$:
\begin{equation}
r(p) \ge \varphi_i(p) c(p) - \mathcal{C}^*(\varphi_i(p)). \label{e:auxiliary_c}
\end{equation}
It follows from the discussion above that $r(p) \gtrsim \mathcal{C}(c(p))$ for each worker $p$. For the resource cost of disutility from working, we similarly define independent auxiliary variables $r_s(p)$ satisfying the linear inequalities for all $i$:
\begin{equation}
r_s(p) \ge \psi^s_i(p) x_s(p) - \mathcal{X}^*(\psi^s_i(p)). \label{e:auxiliary_x}
\end{equation}

We substitute the auxiliary variables $r(p)$ and $r_s(p)$ for $\mathcal{C}(c(p))$ and $\mathcal{X}(x_s(p))$ into our nonlinear objective to define the approximate planner problem. The approximate planner problem chooses $(c, x_s, r, r_s)$ to solve:
\begin{equation}
\min \int \left(r(p) + z(p) \big( r_c(p) + r_m(p) \big) \right)\text{d}\pi ,
\end{equation}
subject to the incentive constraints (\ref{e:linear_ic}), the promise keeping constraint (\ref{e:promise_keeping_linear}), constraints on the auxiliary variables (\ref{e:auxiliary_c}) and (\ref{e:auxiliary_x}), and the approximation bounds for consumption utility $\underline{c}(p)\leq c(p) \leq\bar{c}(p)$ and task outputs $\underline{x}_{s}(p)\leq x_{s}(p)\leq\bar{x}_{s}(p)$. 

\vspace{0.4 cm}
\noindent \textbf{Accuracy}. We next describe the accuracy of the approximate planner problem and provide the algorithm that we use to characterize its solution. The precision of the solution to the approximate planner's problem naturally depends on the accuracy of the prior location of the solution. The criterion we evaluate to ensure that the location is accurate is the absence of binding boundary constraints at the optimal solution. In line with this criterion, we define a solution is proper when no boundary constraints binds.

\vspace{0.1 cm}
\begin{definition*} The solution to the approximate problem is \underline{proper} if the solution is strictly interior, that is $\underline{c}(p) <  c(p) < \bar{c}(p)$, $\underline{x}_{s}(p)< x_{s}(p) <\bar{x}_{s}(p)$ if $\underline{x}_s(p) \neq 0$ and $x_{s}(p)\geq \underline{x}_{s}(p)$ when $\underline{x}_{s}(p)=0$.
\end{definition*}

\vspace{0.1 cm}
\noindent Proposition \ref{p:approximation} shows that one can readily verify that a proper solution approximates well the optimal solution to the initial planner problem.

\vspace{0.05 cm} 
\begin{proposition}\label{p:approximation}
For the approximate problem, introduce the maximal approximation errors:
\begin{equation*}
 \varepsilon := \max_p \max_{\underline{c} \leq t \leq \overline{c}} \left[ \mathcal{C}(t) - \max_{i} \; l^c_{ip}(t) \right] \hspace{0.7 cm}\text{and} \hspace{0.7 cm} \varepsilon_s := \max_p \max_{\underline{x}_s \leq t \leq \overline{x}_s} \left[ z(p)\mathcal{X}(t) - z(p)\max_{i} l^s_{ip}(t) \right].
\end{equation*}
If the solution to the approximate planner problem is proper, then the overall approximation error is bounded from above by the sum of maximal approximation errors:
\begin{equation*}
0 \le \int \Big(\mathcal{C}(c(p)) + z(p) \big( \mathcal{X}(x_c(p)) + \mathcal{X}(x_m(p)) \big) \Big)\text{d}\pi - \Omega \leq \varepsilon + \varepsilon_c + \varepsilon_m,
\end{equation*}
where  $\Omega$ is the minimum value for the original problem.

\end{proposition}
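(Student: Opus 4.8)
The plan is to compare the value of the true objective at the approximate solution with the true optimum $\Omega$ by inserting the lower-envelope objective as an intermediary and bounding two telescoping differences. Write the true objective as $J(c,x) := \int\big(\mathcal{C}(c) + z(\mathcal{X}_c(x_c) + \mathcal{X}_m(x_m))\big)\,\text{d}\pi$ and the lower-envelope objective as $\underline{J}(c,x) := \int\big(\widehat{\mathcal{C}}(c) + z(\widehat{\mathcal{X}}_c(x_c) + \widehat{\mathcal{X}}_m(x_m))\big)\,\text{d}\pi$, where $\widehat{\mathcal{C}}(t) := \max_{i} l^c_{ip}(t)$ and $\widehat{\mathcal{X}}_s(t) := \max_i l^s_{ip}(t)$ collect the finitely many tangent lines. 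Because each tangent line of a convex function lies weakly below it, $\widehat{\mathcal{C}} \le \mathcal{C}$ and $\widehat{\mathcal{X}}_s \le \mathcal{X}_s$ pointwise, so $\underline{J} \le J$ everywhere; and since lowering the auxiliary variables $r, r_s$ only lowers the approximate objective, at the approximate optimum $(c,x)$ they bind at $r = \widehat{\mathcal{C}}(c)$ and $r_s = \widehat{\mathcal{X}}_s(x_s)$, so the approximate optimal value equals $\underline{J}(c,x)$. I then decompose
\[
J(c,x) - \Omega = \big(J(c,x) - \underline{J}(c,x)\big) + \big(\underline{J}(c,x) - \Omega\big),
\]
and bound the two pieces separately.

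The left inequality is immediate: the approximate solution $(c,x)$ satisfies the incentive constraints, the outside-option constraints, and the promise-keeping constraint, which are common to both problems, hence it is feasible for the original problem and $J(c,x) \ge \Omega$. For the first piece, I use that the approximate solution respects the box $\underline{c} \le c \le \bar{c}$ and $\underline{x}_s \le x_s \le \bar{x}_s$, so that the definitions of the maximal errors give, pointwise, $\mathcal{C}(c) - \widehat{\mathcal{C}}(c) \le \varepsilon$ and $z(\mathcal{X}_s(x_s) - \widehat{\mathcal{X}}_s(x_s)) \le \varepsilon_s$ (the factor $z$ is already absorbed into the definition of $\varepsilon_s$). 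Integrating against the probability measure $\pi$ yields $J(c,x) - \underline{J}(c,x) \le \varepsilon + \varepsilon_c + \varepsilon_m$.

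The remaining piece requires $\underline{J}(c,x) \le \Omega$, and this is where properness enters. The map $\underline{J}$ is convex in $(c,x)$ — it is an integral of pointwise maxima of affine functions, with the nonnegative weights $z(p)$ preserving convexity — and the feasible set $\mathcal{I}$ together with the promise-keeping constraint is convex because all of these constraints are linear. When the approximate solution is proper, the box constraints are inactive, apart from lower bounds coinciding with the natural floor $x_s = 0$, which the original problem shares. A standard convexity argument then promotes the boxed minimizer to a global minimizer of $\underline{J}$ over the full, unboxed feasible set: if some feasible $(c',x')$ had strictly smaller $\underline{J}$, then moving a small step from $(c,x)$ toward $(c',x')$ would stay feasible and inside the box by interiority, and would strictly decrease $\underline{J}$ by convexity, contradicting optimality over the boxed problem. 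Applying this with the true optimizer $(c^*,x^*)$, which is feasible for the unboxed problem, gives $\underline{J}(c,x) \le \underline{J}(c^*,x^*) \le J(c^*,x^*) = \Omega$, the middle inequality being $\underline{J} \le J$. Combining the three displays delivers the claimed bound.

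The main obstacle is precisely the inequality $\underline{J}(c,x) \le \Omega$: everything else is bookkeeping, but this step is where the a-priori guess of the solution's location is controlled. The subtlety is that the true optimizer need not lie in the chosen box, so one cannot simply feed it into the boxed approximate program; properness is what lets me drop the box and recover global optimality of the approximate solution over the original convex feasible set, making the comparison with $(c^*,x^*)$ legitimate. I would take care that the promotion-to-global-minimizer argument treats the lower bounds at $x_s = 0$ correctly, since these are genuine domain constraints present in both problems and hence cause no difficulty, whereas any binding upper bound or strictly interior lower bound would break the argument — which is exactly why the conclusion is stated only for proper solutions.
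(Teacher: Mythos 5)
Your proposal is correct and takes essentially the same route as the paper: the left inequality via feasibility of the approximate solution for the original problem, the pointwise error bounds on the box giving the $\varepsilon + \varepsilon_c + \varepsilon_m$ term, and the key inequality $\underline{J}(c,x) \leq \Omega$ established through properness. Your ``promotion to a global minimizer'' step is exactly the paper's argument in different words $-$ the paper takes the convex combination $\lambda (c,x_s) + (1-\lambda)(\hat{c},\hat{x}_s)$ with the true optimizer, uses interiority to keep it proper, sets the auxiliary variables to the lower envelope, and invokes convexity of the pointwise maximum of the tangent lines, which is the same small-step-toward-a-feasible-point contradiction you describe.
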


\begin{proof}
We show that if the solution $(c,x_s, r, r_s)$ to the approximation problem is proper, then the overall approximation error is bounded from above by the sum of maximal approximation errors. We next prove that both inequalities are satisfied. 

The first inequality is satisfied since the approximate allocation $(c,x)$ is feasible. Since the approximate planner's problem produces a feasible solution, we clearly have
\begin{equation}
\Omega \leq \int \left(\mathcal{C}(c(p)) + z(p)\big( \mathcal{X}(x_c(p)) + \mathcal{X}(x_m(p))\big) \right)\text{d}\pi.
\end{equation}

To prove the second inequality, we use that the definition of the approximation error $\varepsilon_c$ and the approximation constraints (\ref{e:auxiliary_c}) implies $\mathcal{C}(c(p)) \leq r_c(p) + \varepsilon_c$.  Denote by $(\hat{c},\hat{x})$ the allocation that attains the minimum resource cost $\Omega := \int \left(\mathcal{C} ( \hat{c} (p) ) + z(p) \big( \mathcal{X}(\hat{x}_c(p)) + \mathcal{X}(\hat{x}_m(p))\big) \right) \text{d}\pi $. Since the solution to the approximate problem is proper, there exist a weight $\lambda \in (0,1)$ such that the convex combination given by $\tilde{c} (p) = \lambda c (p) + (1 - \lambda) \hat{c} (p)$ and $\tilde{x}_s (p) = \lambda x_s (p) + (1 - \lambda) \hat{x}_s (p)$ is a proper allocation. To construct an alternative allocation that is feasible under the approximate problem, we can set:
\begin{align}
\tilde{r}_c(p) & = \max_{i} \; l_{ip}^c(\tilde{c}(p)) \\
\tilde{r}_s(p) & = \max_{i} \; l_{ip}^s(\tilde{x}_s(p))
\end{align}


Since $(c,x_s,r,r_s)$ solves the approximate problem, and since $(\tilde{c},\tilde{x}_s,\tilde{r},\tilde{r}_s)$ is feasible, we know that the cost under the alternative allocation exceeds the cost under the approximate solution. Since the pointwise maximum of convex functions is convex, we have that 
\begin{align}
\tilde{r}(p) & \leq \lambda \max_{i}  l^c_{ip} (c (p) ) + (1 - \lambda) \max_{i}  l^c_{ip} ( \hat{c} (p) ) \leq \lambda r(p) + (1 - \lambda) \mathcal{C} ( \hat{c} (p) ) \\
\tilde{r}_s(p) & \leq \lambda \max_{i} l^s_{ip} (x_s (p) ) + (1 - \lambda) \max_{i} l^s_{ip} ( \hat{x}_s (p) ) \leq \lambda r_s(p) + (1 - \lambda) \mathcal{X}( \hat{x}_s (p) )
\end{align}
where the final inequalities follows from the definition of the approximation constraints (\ref{e:auxiliary_c}), and from the observation that approximations are from below. By combining the two previous claims we write that 
\begin{align*}
& \int \big(r(p) + z(p)\big( r_c(p) + r_m(p) \big)\big)\text{d}\pi  \leq \int \big(\tilde{r}(p) + z(p) \big( \tilde{r}_c(p) + \tilde{r}_m(p) \big) \big)\text{d}\pi \\
\leq \lambda &  \int \big(r_c(p) + z(p)\big( r_c(p) + r_m(p) \big) \big)\text{d}\pi + ( 1 - \lambda ) \int \big(\mathcal{C} ( \hat{c} (p) ) + z(p) \big( \mathcal{X}( \hat{x}_c (p) ) + \mathcal{X} ( \hat{x}_m (p) ) \big) \big)\text{d}\pi ,
\end{align*}
which implies $\Omega \geq \int \big(r_c(p) + z(p)\sum r_s(p)\big)\text{d}\pi$. Finally, we use the definition of the approximation errors to write: 
\begin{align*}
\Omega \geq \int \big(r(p) + z(p)\big( r_c(p) + r_m(p) \big) \big)\text{d}\pi \geq \int \big( \mathcal{C} ( c (p) ) + z(p)\big( \mathcal{X}( x_c (p) ) + \mathcal{X} ( x_m (p) ) \big) \big) \text{d}\pi - \sum \varepsilon
\end{align*}
which concludes the proof.\end{proof}

\begin{algorithm}[!t]

\RestyleAlgo{ruled}

\setstretch{1.65}%

\textbf{Algorithm 1}. Iterative Algorithm for Planner's Problem with Fixed Assignment. 

Set initial location boundaries $\{ \underline{c}, \overline{c}\}$ and $\{\underline{x}_s, \overline{x}_s\}$, define initial accuracy levels $\varepsilon_c$ and $\varepsilon_s$

\While{$\varepsilon_c + \sum\varepsilon_s > \zeta$}{  \vspace{0.2 cm}

for each $p$, construct piecewise linear approximations of $\mathcal{C}$ and $\mathcal{X}$ on bounded intervals $[\underline{c}, \overline{c}]$ and $[\underline{x}_s, \overline{x}_s]$ with  precisions $\varepsilon_c$ and $\varepsilon_s$\\

solve the approximate planner's problem\\

\eIf{\text{the approximate solution is proper}}{ \vspace{0.2 cm}

update precision levels $ \varepsilon_c \to \alpha\varepsilon_c$ and $\varepsilon_s \to \alpha \varepsilon_s$ for some $\alpha < 1$\\

update location boundaries $[\underline{c}, \overline{c}]$ and $[\underline{x}_s, \overline{x}_s]$

}{

relax location boundaries

}

\Return{solution $(c, x_s, r_c, r_s)$ to the final approximate planner's problem.}
\label{alg:optimal_allocation}
}

\end{algorithm}

\vspace{0.2 cm}
\noindent \textbf{Algorithm}. We use an iterative algorithm to solve the approximate planner's problem for a given precision level.\footnote{See \citet{Ekeland:2010} for a discussion on numerically solving optimization problems subject to a convexity constraint on the function $u$, and \citet{Oberman:2013} for a practical approach of dealing with global incentive constraints.} We solve the planner problem for a worker type space with 200 types in both the cognitive and the manual dimension, equivalently, for a total of 40 thousand types. We display the structure of our numerical approach in Algorithm 1.

Having described how to characterize the planner problem given an arbitrary assignment, we next describe how to update the assignment to obtain a jointly optimal assignment and allocation. Given the optimality of positive sorting between workers and firms in Proposition \ref{prop:p_assignment}, we update our assignment after each step by positively sorting the distribution of project values with the effective worker skill index $\mathcal{X}(x_c(p)) + \mathcal{X}(x_m(p))$. By doing so, we reassign projects across workers which yields a new assignment. We then solve the planner's problem for the new assignment function using Algorithm 1. We proceed until the assignment converges. To the best of our knowledge, there is no proof of unique convergence for this iterative procedure. In practice, however, we find that our algorithm always converges to the same assignment function for distinct initial assignments.

\vspace{0.35 cm}
\noindent \textbf{Literature}. Our numerical approach relates to outer linearization of a separable convex objective function. This approach is well-established, see for example, \citet{Bertsekas:2011}. 

Outer linearization of a separable convex objective is part of the outer linearization approach for general problems. For example, \citet{Geoffrion:1970} present the idea of approximating a convex function with supporting hyperplanes, which is outer linearization. \citet{Duran:1986} applies outer linearization to general convex mixed-integer optimization problems. In our case, the objective is separable in variables, which leads to faster and more efficient algorithms for constructing the supporting hyperplanes. \citet{Bertsekas:2011} also have an objective function that is separable in variables and discuss that this problem has been explored under the framework of extended monotropic programming, which builds on monotropic programming \citep{Rockafellar:1984}. Our approach extends beyond outer linearization of a separable convex objective. Whereas the literature routinely focuses on problems with linear equality constraints, our approach also addresses inequality constraints, making it applicable to a broader class of problems.

\vspace{0.4 cm}

\subsection{Characterizing Equilibrium using Transport Problems} \label{p:equilibrium_transport}

To characterize an equilibrium, we relate our positive economy to optimal transport problems. 


\vspace{0.4 cm}
\noindent \textbf{Primal Problem}. The primal problem is to choose an assignment to maximize production:
\begin{equation}
\max_{\gamma \in \Gamma}\; \int y (x_1, x_2, z ) \text{d} \gamma. \label{e:assignment2}
\end{equation}
The choice of an assignment is restricted by the feasibility constraint, $\gamma \in \Gamma(F_x,F_x,F_z)$, where $\Gamma$ denotes the set of probability measures on the product space $\mathbf{X} \times \mathbf{X} \times Z$ such that the marginal distributions of $\gamma$ onto $\mathbf{X}$ and $Z$ are $F_x$ and $F_z$ respectively.   

\vspace{0.4cm}
\noindent \textbf{Dual Problem}. The dual transport problem is to choose functions $w$ and $\Omega$ that solve:
\begin{equation}
\min\limits_{w,\Omega} \; \int w(x_1) \text{d} F_x + \int w(x_2) \text{d} F_x + \int \Omega(z) \text{d} F_z \label{e:pp_dual},
\end{equation}
subject to the constraint that the surplus is weakly negative for any triplet $(x_1,x_2,z)$, that is, $S(x_1,x_2,z) \leq 0$.

\vspace{0.4cm}
\noindent We connect the primal problem and the dual problem to equilibrium in Lemma \ref{c:equilibrium_transport}. 

\vspace{0.15cm}
\begin{lemma}\label{c:equilibrium_transport}
The equilibrium assignment $\gamma$ solves the primal problem (\ref{e:assignment2}), equilibrium wages $w$ and firm values $\Omega$ solve the dual problem (\ref{e:pp_dual}). 
\end{lemma}

\vspace{0.05cm}
\noindent We use Lemma \ref{c:equilibrium_transport} to characterize the equilibrium.\footnote{We note that a transport problem with two identical worker distributions $F_x$ with unit mass for each role is equivalent to a transport problem with a single worker distribution $\Phi_x$ with mass two (Appendix \ref{a:symmetrize}).} We solve the primal problem (\ref{e:assignment2}) to characterize the equilibrium assignment and the dual problem (\ref{e:pp_dual}) to characterize wages $w$ and firm values $\Omega$. To prove Lemma \ref{c:equilibrium_transport}, we use Lemma \ref{l:duality}.

\vspace{0.15cm}
\begin{lemma}\label{l:duality}
Suppose the objectives of the primal problem (\ref{e:assignment}) and the dual problem (\ref{e:pp_dual}) coincide $\int y(x_1,x_2,z) \text{d} \gamma = \int w(x_1) \text{d} F_x + \int w(x_2) \text{d} F_x + \int \Omega(z) \text{d} F_z$. Then $\gamma$ solves the primal problem, and the functions $w$ and $ \Omega$ solve the dual transport problem.
\end{lemma}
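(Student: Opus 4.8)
The plan is to exploit weak duality for the optimal transport problem and then use the hypothesis that the two objective values coincide to pin down optimality on both sides simultaneously. The key structural fact is that the primal problem (\ref{e:assignment2}) is a maximization over feasible assignments $\gamma \in \Gamma$, while the dual problem (\ref{e:pp_dual}) is a minimization over price functions $(w,\Omega)$ subject to the surplus constraint $S(x_1,x_2,z) \le 0$, i.e. $y(x_1,x_2,z) \le w(x_1) + w(x_2) + \Omega(z)$ pointwise. These two problems stand in a weak-duality relationship, and the equality of objective values forces each candidate to be optimal for its own problem.

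First I would establish weak duality: for \emph{any} feasible assignment $\tilde\gamma \in \Gamma$ and \emph{any} feasible dual pair $(\tilde w,\tilde\Omega)$ satisfying the surplus constraint, integrate the pointwise inequality $y(x_1,x_2,z) \le \tilde w(x_1) + \tilde w(x_2) + \tilde\Omega(z)$ against the measure $\tilde\gamma$. Using that the marginals of $\tilde\gamma$ onto the two worker coordinates are $F_x$ and the marginal onto the firm coordinate is $F_z$, the right-hand side integrates to $\int \tilde w \,\text{d}F_x + \int \tilde w\,\text{d}F_x + \int \tilde\Omega\,\text{d}F_z$, independent of which feasible $\tilde\gamma$ we chose. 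This yields
\begin{equation}
\int y(x_1,x_2,z)\,\text{d}\tilde\gamma \;\le\; \int \tilde w(x_1)\,\text{d}F_x + \int \tilde w(x_2)\,\text{d}F_x + \int \tilde\Omega(z)\,\text{d}F_z , \label{e:weakdual}
\end{equation}
so the primal value never exceeds the dual value.

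Next I would use the hypothesis. The given pair $\gamma$ and $(w,\Omega)$ are both feasible (the former is a feasible assignment, the latter satisfies $S \le 0$ by construction), and by assumption their objective values are equal. Take any competing feasible assignment $\gamma'$; applying (\ref{e:weakdual}) with $\gamma'$ against the fixed dual pair $(w,\Omega)$ gives $\int y\,\text{d}\gamma' \le \int w\,\text{d}F_x + \int w\,\text{d}F_x + \int \Omega\,\text{d}F_z = \int y\,\text{d}\gamma$, so $\gamma$ attains the primal maximum. Symmetrically, take any competing feasible dual pair $(w',\Omega')$; applying (\ref{e:weakdual}) with the fixed $\gamma$ against $(w',\Omega')$ gives $\int w'\,\text{d}F_x + \int w'\,\text{d}F_x + \int \Omega'\,\text{d}F_z \ge \int y\,\text{d}\gamma = \int w\,\text{d}F_x + \int w\,\text{d}F_x + \int \Omega\,\text{d}F_z$, so $(w,\Omega)$ attains the dual minimum. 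This completes the argument.

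The only genuine obstacle is the first step, namely justifying that integrating the pointwise surplus inequality against $\tilde\gamma$ reproduces the dual objective; this requires that the marginals of every feasible assignment are exactly $(F_x,F_x,F_z)$, which is precisely the definition of $\Gamma$, so the step is mechanical once the marginal conditions are invoked. Everything after weak duality is a short comparison argument, so I expect no deeper difficulty; the lemma is essentially the standard statement that matching primal and dual values certifies optimality on both sides.
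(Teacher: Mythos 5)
Your proposal is correct and follows essentially the same route as the paper: establish weak duality by integrating the pointwise surplus inequality against a feasible assignment and invoking the marginal constraints, then use the assumed equality of objective values to certify optimality on both sides. The only cosmetic difference is that you argue directly by comparison with arbitrary feasible competitors, whereas the paper phrases the same comparison as a proof by contradiction.
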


\noindent The proof to \Cref{l:duality} only uses of a notion of weak duality. 

\vspace{0.3 cm}
\noindent \textbf{Weak Duality}. Let $\gamma \in \Gamma (F_{x_1}, F_{x_2}, F_z)$ be a joint probability measure, and $(f,g,h)$ be functions such that $y(x_1,x_2,z) \leq f(x_1) + g(x_2) + h(z)$ for all $(x_1,x_2,z)$. Then
\begin{equation}
\min_{f,g,h} \; \int f(x) \text{d}F_{x_1} + \int g(x) \text{d}F_{x_2} + \int h(z) \text{d}F_{z} \; \geq \; \max_{\gamma \in \Gamma} \; \int y(x_1,x_2,z) \text{d}\gamma .  \label{e:weak_duality}
\end{equation}

\vspace{0.4 cm}
\noindent \textit{Proof}. For any functions $(f,g,h)$ so that $y(x_1,x_2,z) \leq f(x_1) + g(x_2) + h(z)$ we have:
\begin{equation*}
\max_{\gamma \in \Gamma} \int y(x_1,x_2,z) \text{d}\gamma \leq \int \big( f(x_1) + g(x_2) + h(z) \big)\text{d}\gamma = \int f(x) \text{d}F_{x_1} + \int g(x) \text{d}F_{x_2} + \int h(z) \text{d}F_{z} ,
\end{equation*}
where the equality follows as $\gamma \in \Gamma (F_{x_1}, F_{x_2}, F_z)$. Since the above inequality holds for any $(f,g,h)$ it holds for $(f,g,h)$ that minimize the right-hand side.

\vspace{0.4 cm}
\noindent We use weak duality to establish \Cref{l:duality} by contradiction. 

\vspace{0.4 cm}
\noindent \textbf{Proof of \Cref{l:duality}}. Suppose by contradiction that $\gamma$ does not solve the planning problem, then
\begin{align}
\max_{\pi \in \Gamma} \int y(x_1,x_2,z) \text{d}\pi & > \int y(x_1,x_2,z) \text{d}\gamma = \int w(x) \text{d}F_{x} + \int w(x) \text{d}F_{x} + \int \Omega(z) \text{d}F_{z} \notag \\
& \hspace{2.71 cm} \geq \min_{f,g,h} \int f(x) \text{d}F_{x} + \int g(x) \text{d}F_{x} + \int h(z) \text{d}F_{z} ,
\end{align}
where the equality follows by assumption. This contradicts weak duality (\ref{e:weak_duality}).

Suppose by contradiction that the functions $\hat{f},\hat{g}$, and $\hat{h}$ do not solve the dual problem. Then there exists functions $f,g$, and $h$ such that
\begin{align}
\min_{f,g,h} \int f(x) \text{d}F_{x} + \int g(x) \text{d}F_{x} + \int h(z) \text{d}F_{z} & < \int w(x) \text{d}F_{x} + \int w(x) \text{d}F_{x} + \int \Omega(z) \text{d}F_{z} \notag \\ 
& =  \int y(x_1,x_2,z) \text{d}\gamma \leq \max_{\pi \in \Gamma} \int y(x_1,x_2,z) \text{d}\pi ,
 \end{align}
where the equality follows by the assumption. This inequality contradicts weak duality (\ref{e:weak_duality}). 

\vspace{0.4 cm}
\noindent We now use \Cref{l:duality} to show that equilibrium assignment $\gamma$ solves the primal transport problem, and that the wage and firm value function solve the dual transport problem. 

In equilibrium, the surplus is negative for any triplet $(x_1,x_2,z)$, which implies that $y (x_1,x_2,z) \leq w(x_1) + w(x_2) +  \Omega(z)$. By substituting the household budget constraints $c=(1-\tau)w(x)$, and the government budget constraint $G = \tau \int w(x) \text{d} \Phi(\alpha)$, into the aggregate resource constraint (\ref{e:resource_constraint}), we write:
\begin{equation}
\int y(x_1,x_2,z) \text{d} \mu = \int w(x_1) \text{d} F_x(x_1) + \int w(x_2) \text{d} F_x(x_2) + \int \Omega(z) \text{d} F_z(z) . \label{e:equilibrium_discipline}
\end{equation}
By \Cref{l:duality} it thus follows that $\mu$ solves the primal problem and $w$ and $\Omega$ solve the dual problem.

\subsection{Symmetric Equilibrium} \label{a:symmetrize}

We prove that we can restrict our attention to symmetric equilibria without loss of generality.
\vspace{0.25cm}
\begin{lemma}\label{p:symmetrize}
For any equilibrium with wages $w$ and assignment function $\gamma \in \Gamma(F_{x_1},F_{x_2},F_z)$, there exists an equilibrium with wages $w$ and a symmetric assignment $\hat{\gamma} = \frac{\gamma + \gamma'}{2}$.
\end{lemma}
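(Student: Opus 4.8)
The plan is to show that the symmetrized measure $\hat\gamma$ satisfies every equilibrium requirement while leaving the wage schedule $w$—and hence the firm value $\Omega$, the worker input distribution, and the government budget—untouched. First I would fix notation: let $T:X\times X\times Z\to X\times X\times Z$ be the coordinate-swap map $T(x_1,x_2,z):=(x_2,x_1,z)$ and define the transpose assignment as the pushforward $\gamma':=T_\#\gamma$, so that $\hat\gamma:=\tfrac12(\gamma+\gamma')$, being a convex combination of probability measures, is itself a probability measure. The first substantive step is feasibility: the two worker marginals of $\gamma'$ are those of $\gamma$ with the roles interchanged, so both worker marginals of $\hat\gamma$ equal $\bar F_x:=\tfrac12(F_{x_1}+F_{x_2})$ while its firm marginal is $F_z$; hence $\hat\gamma\in\Gamma(\bar F_x,\bar F_x,F_z)$. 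In the equilibrium under study the two roles are filled from a common worker population, so $F_{x_1}=F_{x_2}=F_x$ and $\hat\gamma\in\Gamma(F_x,F_x,F_z)$.

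The observation driving the rest is that the production technology (\ref{e:firm_tech}) is symmetric in the worker inputs, $y(x_1,x_2,z)=y(x_2,x_1,z)$, and therefore so is the surplus $S(x_1,x_2,z)=y-w(x_1)-w(x_2)-\Omega(z)$, since $w(x_1)+w(x_2)$ is symmetric as well. Combining the symmetry $y\circ T=y$ with the change-of-variables formula for the pushforward gives $\int y\,\text{d}\gamma'=\int (y\circ T)\,\text{d}\gamma=\int y\,\text{d}\gamma$, so aggregate output is invariant under symmetrization, $\int y\,\text{d}\hat\gamma=\int y\,\text{d}\gamma$. This immediately implies that the resource constraint (\ref{e:resource_constraint}) continues to hold for $\hat\gamma$ with the same $c$, $\Omega$, and $G$, and that $\hat\gamma$ attains the same value of the primal objective (\ref{e:assignment2}) as $\gamma$.

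Next I would verify firm optimality. In equilibrium the surplus is everywhere weakly negative, $S\le 0$, and, as established around (\ref{e:equilibrium_discipline}), vanishes $\gamma$-almost everywhere. Because $S$ is symmetric, the set $\{S\neq 0\}$ is $T$-invariant, so $\gamma'(\{S\neq 0\})=\gamma(T^{-1}\{S\neq 0\})=\gamma(\{S\neq 0\})=0$; hence $S=0$ holds $\hat\gamma$-almost everywhere, and every pairing in the support of $\hat\gamma$ solves the firm problem (\ref{e:firm_problem}) at the unchanged prices. Finally, since the worker problem (\ref{e:worker_problem}), the firm problem, and the government budget constraint $G=\tau\int w(x)\,\text{d}\Phi$ depend only on $w$, $\Omega$, and the primitive distribution $\Phi$, none of which has been altered, all remaining equilibrium conditions hold verbatim. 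Collecting these facts, $(w,F_x,\hat\gamma,\{(c,x_c,x_m)\})$ is an equilibrium with a symmetric assignment.

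I do not expect a genuine obstacle here: the argument is a transcription, to the measure-theoretic setting, of the discrete symmetrization used in the proof of Proposition \ref{prop:p_assignment}. The only point requiring care is the complementary-slackness step—confirming that zero surplus is inherited $\hat\gamma$-almost everywhere—which rests on the $T$-invariance of $\{S\neq 0\}$, itself a direct consequence of the symmetry of the production technology.
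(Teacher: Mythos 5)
Your proposal is correct, and the construction is the same as the paper's (transpose via the pushforward $T_\#\gamma$ of the coordinate swap, then average), but your verification that $\hat{\gamma}$ is an equilibrium takes a genuinely different route. The paper checks that the primal objective evaluated at $\hat{\gamma}$ still equals the dual objective $\int w\,\text{d}F_x + \int w\,\text{d}F_x + \int \Omega\,\text{d}F_z$ (output is invariant under the swap and the marginals are unchanged), then invokes its weak-duality result (Lemma \ref{l:duality}) to conclude that $\hat{\gamma}$ solves the primal and $(w,\Omega)$ the dual, and finally appeals to the equilibrium--transport correspondence (Lemma \ref{c:equilibrium_transport}) to recover an equilibrium. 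You instead verify the definition of competitive equilibrium directly: feasibility of the marginals, firm optimality via complementary slackness (the zero-surplus set is $T$-invariant because $y$ and $w(x_1)+w(x_2)$ are symmetric, so $S=0$ holds $\hat{\gamma}$-a.e.\ while $S\le 0$ everywhere), and the observation that the worker problem, government budget, and resource constraint involve only $w$, $\Omega$, $\Phi$, and aggregate output, none of which change. What each buys: the paper's argument is shorter given that its duality lemmas are already in hand, and it reuses the same machinery that drives the rest of Section 5; yours is more elementary and self-contained, and it has the additional merit of not needing the converse direction of Lemma \ref{c:equilibrium_transport} (which, as stated, runs from equilibrium to transport solutions, whereas the paper's proof applies it in the reverse direction) -- your complementary-slackness step supplies exactly the firm-optimality fact that this converse would deliver.
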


\vspace{0.05 cm}
\begin{proof} 
Lemma \ref{p:symmetrize} states that for any competitive equilibrium with wages $w$, firm value $\Omega$, and assignment $\gamma \in \Gamma (F_{x_1},F_{x_2},F_{z})$, there is an equilibrium with identical wages $w$, firm value function $\Omega$ with a symmetric assignment function $\hat{\gamma} := \frac{\gamma + \gamma'}{2} \in \Gamma (F_{x},F_{x},F_{z})$, where $F_{x}:= \frac{1}{2} ( F_{x_1},F_{x_2} )$. 

\vspace{0.4 cm}
\noindent To prove this result, we first define $\gamma'$ as a pushforward measure of the assignment function $\gamma$. We then show that the symmetric assignment function $\hat{\gamma}$ indeed solves the primal transport problem. Using Lemma \ref{c:equilibrium_transport}, this establishes the result.

\vspace{0.25 cm}
\begin{definition*}
Given spaces $M_1$ and $M_2$, a measure $\gamma$ concentrated on $M_1$, and a map $T: M_1 \rightarrow M_2$, the \underline{pushforward measure} of $\gamma$ through $T$, which we denote by $T_{\#} \gamma$, is defined so that:
\begin{equation}
\int f(y) \text{d} T_{\#}\gamma = \int f(T(x)) \text{d} \gamma .
\end{equation}
\end{definition*}

\vspace{0.4 cm}
\noindent We define $\gamma'$ as the pushforward measure of $\gamma$ through a mapping $T$, or $\gamma' := T_{\#}\gamma$. Our mapping $T$ maps from the matching set onto itself interchanging the position of the worker and the coworker, that is, $T:M \rightarrow M$ so that $(x_1,x_2,z) \rightarrow (x_2,x_1,z)$. If $\gamma$ is a feasible assignment, $\gamma'$ is a feasible assignment, that is, for $\gamma \in \Gamma(F_{x_1},F_{x_2},F_{z})$ we have $\gamma' \in \Gamma(F_{x_2},F_{x_1},F_{z})$.

Using the definition of $\gamma'$, we construct symmetric assignment function $\hat{\gamma} := \frac{\gamma + \gamma'}{2}$, and observe that the symmetric assignment function is feasible given $F_{x}$, that is, $\hat{\gamma} \in \Gamma(F_{x},F_{x},F_z)$. Moreover, we observe that: 
\begin{equation}
\int y(x_1,x_2,z) \text{d} \hat{\gamma}(x_1,x_2,z) = \int w(x_1) \text{d} F_x(x_1) + \int w(x_2) \text{d} F_x(x_2) + \int \Omega(z) \text{d} F_z(z) .
\end{equation}
The left-hand side is unchanged as the production of equilibrium pairings does not change, while the right-hand side is unchanged as the skill distribution is unchanged. By \Cref{l:duality},  $\hat{\gamma}$ solves the primal transport problem, and functions $w$ and $\Omega$ solve the dual transport problem. By Lemma \ref{c:equilibrium_transport}, this shows that symmetric assignment $\hat{\gamma}$ is an equilibrium assignment, $w$ are equilibrium wages, and $\Omega$ are equilibrium firm values.

\vspace{0.4 cm}
\noindent Finally, we remark that the equilibrium assignment need not be $\hat{\gamma}$. Specifically, we can replace $\hat{\gamma}$ with any other optimal primal solution. Suppose that there exists another solution to the primal problem $\tilde{\gamma}$, then 
\begin{equation}
\int y(x_1,x_2,z) \text{d} \tilde{\gamma}(x_1,x_2,z) = \int w(x_1) \text{d} F_x(x_1) + \int w(x_2) \text{d} F_x(x_2) + \int \Omega(z) \text{d} F_z(z) ,
\end{equation}
and hence $S(x_1,x_2,z) = 0$ for $\tilde{\gamma}$ almost everywhere.\end{proof}

\subsection{Wages and Firm Values} \label{a:wage_effective}

To see why only effective skill matters, consider two workers $(x_c,x_m)$ and $(\hat{x}_c,\hat{x}_m)$ with identical effective skill $X = \hat{X}$. Since the surplus is zero almost everywhere under equilibrium assignment $\gamma$, and using production technology (\ref{e:firm_tech}), $2 w(x) + \Omega(z) = z X$ and $2 w(\hat{x}) + \Omega(\hat{z}) = \hat{z} \hat{X}$. By the constraints to the dual problem (\ref{e:pp_dual}), $2 w(x) + \Omega(\hat{z}) \geq \hat{z} X = \hat{z} \hat{X}$ and $2 w(\hat{x}) + \Omega(z) \geq z \hat{X} = z X$, where the equalities follow since the workers' effective skills are identical. Combining these expressions, $w(\hat{x}) \geq w(x)$ and $w(x) \geq w(\hat{x})$, so that $w(\hat{x}) = w(x)$. It is useful to define $h(X)$, the firm's total wage bill, as $h(X):= 2w(x)$.  

Wages are convex in effective skill $X$, so small differences in effective worker skill $X$ translate into increasingly large differences in worker earnings. The dual constraints imply $h(X) \geq zX - \Omega(z)$ for any $z$. Since the surplus is zero almost everywhere with respect to the equilibrium assignment $h(X) := \sup\limits_z (  zX - \Omega(z) )$ implying that $h = \Omega^*$, the firm's wage bill is the Legendre transform of the firm value function. Since $h(X)$ is the supremum of linear functions in $X$, the wage function is convex. 

The firm value function is the Legendre transform of the wage bill. The dual constraints also imply that for any $x$ it holds that $\Omega(z) \geq zX - h(X)$ and therefore $\Omega(z) := \sup\limits_X (  zX - h(X) )$. This implies that the firm value function is convex and indeed the Legendre transform of the wage bill $\Omega = h^*$. As a result, $h(X) + h^*(z) = z X$.

\subsection{Lemma \ref{prop:equilibrium_positive}} \label{proof:equilibrium_positive}

We show there exists a firm distribution $F_z$ such that given wage schedule $w$, workers and firms both optimize in a self-sorting equilibrium, where the distribution of worker skills $F_x$ is determined by the worker problems given a talent distribution $\Phi$. We verify this claim by studying the firm and worker problem given the postulated wage schedule (\ref{e:eq_wages}).

\vspace{0.4 cm}
\noindent \textbf{Firm}. Taking the wage schedule $w$ as given, the firm problem of choosing two workers to employ can be written as:
\begin{equation*}
\max_{x_{1},x_{2}} \hspace{0.08 cm} y (x_1,x_2,z) - w(x_1) - w(x_2) = \max_{x_{1},x_{2}} \hspace{0.08 cm} z \left( x_{1c} x_{2c} + x_{1m} x_{2m} \right) - \frac{1}{2} ( x_{1c}^2 + x_{1m}^2 )^\eta - \frac{1}{2} ( x_{2c}^2 + x_{2m}^2 )^\eta
\end{equation*}
where the equality follows from substituting in the production technology (\ref{e:firm_tech}) and wage schedule (\ref{e:eq_wages}). The solution to this problem is that firm $z$ wants to hire two identical workers.

To establish that each firm wants to hire two identical workers, we show that a firm that hires different workers $(x_1,x_2)$ such that $x_1 \neq x_2$ can increase its profits by hiring two identical workers $\big( \frac{1}{2}(x_1 + x_2), \frac{1}{2}(x_1 + x_2)\big)$. By hiring two identical workers, the firms increases its production and decreases its wage bill. Production increases since the worker production technology is concave, that is, $\big( \frac{x_{1c} + x_{2c}}{2} \big)^2 + \big( \frac{x_{1m} + x_{2m}}{2} \big)^2 \geq x_{1c} x_{2c} + x_{1m} x_{2m}$ is implied by $(x_{1c} - x_{2c})^2 + (x_{1m} - x_{2m})^2 \geq 0$. To see that the firm decreases its wage bill by hiring two identical workers, observe that $(x_{1c} - x_{2c})^2 + (x_{1m} - x_{2m})^2 \geq 0$ implies $\big( \frac{x_{1c} + x_{2c}}{2} \big)^2 + \big( \frac{x_{1m} + x_{2m}}{2} \big)^2 \leq \frac{1}{2} \left(x_{1c}^2 + x_{1m}^2 \right) + \frac{1}{2} \left(x_{2c}^2 + x_{2m}^2 \right)$, which implies that for $\eta \geq 1$ we have $\big( \big( \frac{x_{1c} + x_{2c}}{2} \big)^2 + \big( \frac{x_{1m} + x_{2m}}{2} \big)^2 \big)^\eta \leq \big( \frac{1}{2} \left(x_{1c}^2 + x_{1m}^2 \right) + \frac{1}{2} \left(x_{2c}^2 + x_{2m}^2 \right) \big)^\eta$. Since the function $\varsigma^\eta$ is convex $\big(\frac{1}{2} \varsigma + \frac{1}{2} \hat{\varsigma}\big)^\eta \leq \frac{1}{2} \varsigma^\eta + \frac{1}{2} \hat{\varsigma}^\eta$, we obtain $\big( \big( \frac{x_{1c} + x_{2c}}{2} \big)^2 + \big( \frac{x_{1m} + x_{2m}}{2} \big)^2 \big)^\eta \leq \frac{1}{2} \left(x_{1c}^2 + x_{1m}^2 \right)^\eta + \frac{1}{2} \left(x_{2c}^2 + x_{2m}^2 \right)^\eta$ by applying this inequality to the right-hand side. Equivalently, by hiring two identical workers $\big( \frac{1}{2}(x_1 + x_2), \frac{1}{2}(x_1 + x_2)\big)$ a firm lowers their wage bill relative to hiring two different workers $(x_1,x_2)$, $w (\frac{1}{2}(x_1 + x_2) ) + w (\frac{1}{2}(x_1 + x_2) ) \leq w(x_1) + w(x_2)$. Finally, since firms hire identical workers, the firm's optimality condition hiring effective worker skill $X$ gives $z = h'(X)$. Since the wage schedule is convex, equilibrium sorting is positive. 

\vspace{0.4 cm}
\noindent \textbf{Worker}. The distribution of worker skills is uniquely induced by the worker problems. Given the wage schedule, a worker's problem is: 
\begin{equation}
\max_{x_c,x_m} \; u \big( (1 - \tau) w(x) \big) - \kappa \Big( \frac{x_c}{\alpha_c} \Big)^\rho - \kappa \Big( \frac{x_m}{\alpha_m} \Big)^\rho .
\end{equation}
Using a transformation $\tilde{x}_s := x_s^\rho$, the problem is:
\begin{equation}
\max_{\tilde{x}_c,\tilde{x}_m} \; u \big( (1 - \tau) \tilde{w}(\tilde{x})  ) \big) - \frac{\tilde{x}_c}{p_c}  - \frac{\tilde{x}_m}{p_m} ,
\end{equation}
where $\tilde{w}(\tilde{x}) := \Big( \frac{1}{2} \sum \tilde{x}_s^{\frac{2}{\rho}} \Big)^\eta$. 

We prove strict concavity of the objective by examining each of the terms in the objective. The second and third term are linear and thus concave. We remain to verify that the first term, $u ( (1 - \tau) \tilde{w}(\tilde{x}) )$, is strictly concave. First, since the consumption utility $u$ is strictly concave:
\begin{align*}
\lambda & u ((1 - \tau) \tilde{w}(x) ) + ( 1 - \lambda ) u ((1 - \tau) \tilde{w}(\tilde{x}) ) < u ( (1 - \tau) (\lambda  \tilde{w}(x) + ( 1 - \lambda )  \tilde{w}(\tilde{x}) )
\end{align*}
Since $u$ is an increasing and concave function, the first term is strictly concave if $\tilde{w}(\tilde{x})$ is strictly concave. To establish this, we note that the transformed wage equation is a composite function of a concave CES aggregate with an increasing concave function as long as $\eta \leq \frac{\rho}{2}$. The worker problem is strictly concave and thus has a unique solution, which implies the distribution of worker skills is uniquely induced by the worker's problem. 

To complete the proof we show there exists a distribution of firm projects such that the wage equation (\ref{e:eq_wages}) is an equilibrium wage function. Since the wage bill is continuously differentiable, $z = h'(X)$. We can use this expression to uniquely pin down a distribution of firm project values that rationalizes the wage equation. Since the wage bill is convex, the inferred distribution indeed implies positive sorting between effective worker skills and firm project values.\footnote{For the parametric specification, $\Omega(z) := \sup (  zX - h(X) )$ can be characterized as $\Omega(z) = \mathcal{C}_z z^{\frac{\eta}{\eta-1}} - 2 \zeta$, where $\mathcal{C}_z$ is a multiplicative constant independent of $z$. Together with the dual constraint, this closed-form expression for the firm value allows us to characterize project value $z$ without relying on the derivative of the firm's wage bill in the quantitative section.}

\subsection{Frisch Elasticity} \label{a:worker_problem}

We next show how to derive the expression for the Frisch elasticity of labor supply within our model. Adding the optimality conditions across tasks gives $\ell_c^{\rho} + \ell_m^{\rho} = \mathcal{C} w(x) \lambda$, with constant $\mathcal{C}:= (1-\tau) \frac{\eta}{\kappa \rho}$. Using the constant effort shares implied by (\ref{e:skill_ratio}), and multiplying and dividing by $\ell^\rho := (\ell_c + \ell_m)^\rho$ we write $\ell^\rho = \mathbb{C} w(x) \lambda$, where $\mathbb{C} := \mathcal{C} \big/ \big( \big( \frac{\ell_c}{\ell} \big)^{\rho} + \big( \frac{\ell_m}{\ell} \big)^{\rho} \big)$ is constant across workers. To obtain the Frisch elasticity implied by our model, we relate a worker's total efforts to earnings per hour $z(x) = w(x) / (\ell_c + \ell_m)$ as $\ell^{\rho - 1} = \mathbb{C} z(x) \lambda$ to obtain (\ref{e:frisch_calibration}):
\begin{equation}
\varepsilon = \frac{\partial \log (\ell_c + \ell_m)}{\partial \log z(x)} \bigg\vert_{\lambda} = \frac{\partial \log (\ell_c + \ell_m)}{\partial \log (1 - \tau)} \bigg\vert_{\lambda}  = \frac{1}{\rho - 1} \tag{\ref{e:frisch_calibration}}.
\end{equation}

\subsection{Implementation} \label{s:implementationtax}

In this appendix, we describe an implementation of the optimum through an income tax system. This argument follows the analysis in \citet{Kocherlakota:2010} for an environment with multidimensional private information. To illustrate this argument, let $\{ c(\alpha), x(\alpha) \}$ denote the solution to the planning problem for a finite type space $A$.

We first note that the consumption allocation depends on type $\alpha$ only through the allocation of tasks $x(\alpha)$. If $x(\alpha) = x(\hat{\alpha})$ while $c(\alpha) > c(\hat{\alpha})$, worker $\hat{\alpha}$ would pretend being type $\alpha$ to attain more consumption for identical task inputs and the planner allocation is not incentive compatible. The consumption allocation can thus be written as a function of the task inputs $c(\alpha) = \tilde{c}(x(\alpha))$, where the domain of the consumption function is finite and given by $\textbf{X} := \cup_{\alpha} \{ x(\alpha) \}$. 

The consumption function is increasing in both cognitive and manual task disutility as the tax system has to reward workers for providing higher levels of task inputs. Suppose $x_c(\alpha) =  x_c(\hat{\alpha})$, $x_m(\alpha) > x_m(\hat{\alpha})$ while $c(\alpha) \leq c(\hat{\alpha})$. In this case, worker $\alpha$ reports $\hat{\alpha}$, so the planning allocation is not incentive compatible. 

While the consumption allocation function $\tilde{c}$ has a finite domain, we next extend to all $x$ such that $x_c \geq \min\limits_{\alpha \in A} x_c(\alpha)$ and $x_m \geq \min\limits_{\alpha \in A} x_m(\alpha)$. For all such task inputs, define
\begin{equation}
\hat{c}(x) := \max\limits_{x' \leq x} \; \tilde{c}(x')
\end{equation}
subject to $x' \in \textbf{X}$ if the maximizer exists and $\hat{c}(x) = 0$ otherwise. We also set $\hat{c}(x) = -\infty$ for all $x$ outside the domain. We define the tax system $T(x)$ over the same values of $x$ as:
\begin{equation}
T(x) = w(x) - \hat{c}(x) . 
\end{equation}

\begin{proposition}
Tax system $T$ implements the planner allocation.
\end{proposition}

\begin{proof}
In order to establish the result, we show that workers of skill $\alpha$ indeed choose the allocation for worker $\alpha$ under the planner allocation.

The problem of worker $\alpha$ given the tax system $T$ is to choose the level of cognitive and manual task inputs $x$ to solve:
\begin{equation}
\max\limits_x \; u(c) - v \left( \frac{x_c}{\alpha_c} \right) - v \left( \frac{x_m}{\alpha_m} \right)
\end{equation}
subject to the constraint $c \leq w(x) - T(x) = \hat{c}(x)$. The problem can thus be written as:
\begin{equation}
\max\limits_x \; u(\hat{c}(x)) - v \left( \frac{x_c}{\alpha_c} \right) - v \left( \frac{x_m}{\alpha_m} \right) .
\end{equation}
Choosing cognitive and manual task inputs below their minimum would be suboptimal as the worker pay infinite taxes. We thus restrict the attention to task inputs $x \geq \big(\min\limits_{\alpha} x_c(\alpha), \min\limits_{\alpha} x_m(\alpha) \big)$. 

We next show that workers choose a task allocation $x \in \textbf{X}$. By contradiction, suppose that the worker instead chooses an allocation $\hat{x} \notin \textbf{X}$. Hence, the worker attains consumption $\hat{c}(\hat{x})$, which by the definition of the consumption allocation function is given by:
\begin{equation}
\hat{c}(\hat{x}) = \max\limits_{x' \leq \hat{x}} \; \tilde{c}(x')
\end{equation}
subject to $x' \in \textbf{X}$. Since $\hat{x} \notin \textbf{X}$, the worker can do better by reducing their work effort choosing $x \in \textbf{X}$ that delivers the same consumption level. The worker consumes the same, exerting less effort. As a result, we restrict the choice to $x \in \textbf{X}$ without loss of generality. 

Worker $\alpha$ chooses the optimal bundle of task inputs $x \in \textbf{X}$, which boils down to choosing $x$ such that for all $x' \in \textbf{X}$
\begin{equation}
u \big( \tilde{c}(x) \big) - v \left( \frac{x_c}{\alpha_c} \right) - v \left( \frac{x_m}{\alpha_m} \right) \geq u \big(\tilde{c}(x') \big) - v \left( \frac{x'_c}{\alpha_c} \right) - v \left( \frac{x'_m}{\alpha_m} \right)  
\end{equation}
By definition of the consumption allocation function over the domain $\textbf{X}$ it follows that this is equivalent to:
\begin{equation}
u \big( c(\alpha) \big) - v \left( \frac{x_c(\alpha)}{\alpha_c} \right) - v \left( \frac{x_m(\alpha)}{\alpha_m} \right) \geq u\big( c(\alpha') \big) - v \left( \frac{x_c(\alpha')}{\alpha_c} \right) - v \left( \frac{x_m(\alpha')}{\alpha_m} \right)  
\end{equation}
for all $\alpha' \in A$. That worker $\alpha$ optimally chooses allocation $(c(\alpha), x(\alpha))$, implementing the planner allocation, then follows from the incentive constraints.\end{proof}

\end{document}